\documentclass[11pt]{article}

\usepackage[utf8]{inputenc}
\usepackage{amsmath, amssymb,verbatim, bbm,amsfonts, hyperref, mathrsfs}
\usepackage{amsthm}
\usepackage[margin=1.111in]{geometry}
\usepackage{pifont}
\usepackage{bm}
\usepackage[makeroom]{cancel}

\newtheorem{theorem}{Theorem}

\newtheorem{definition}{Definition}
\newtheorem{lemma}{Lemma}
\newtheorem{proposition}{Proposition}

\newtheorem{problem}{Problem}

\makeatletter
\newcommand{\subalign}[1]{%
	\vcenter{%
		\Let@ \restore@math@cr \default@tag
		\baselineskip\fontdimen10 \scriptfont\tw@
		\advance\baselineskip\fontdimen12 \scriptfont\tw@
		\lineskip\thr@@\fontdimen8 \scriptfont\thr@@
		\lineskiplimit\lineskip
		\ialign{\hfil$\m@th\scriptstyle##$&$\m@th\scriptstyle{}##$\hfil\crcr
			#1\crcr
		}%
	}%
}
\makeatother

\usepackage{amssymb,latexsym}

\newcommand{\FR}{{\textup{FR}}}

\newcommand{\eqdef}{\mathop{=}\limits^{\triangle}}

\newcommand{\SRF}{\textrm{SRF}}

\newcommand{\vpi}{\boldsymbol{\pi}}

\newcommand{\wIS}{\widetilde{\IS}}

\newcommand{\transpose}[1]{{#1}^{ {\intercal} }}

\newcommand{\KMAC}{{\textup{KMAC}256}}


\newcommand{\OH}{\ket{\hh}}
\newcommand{\OG}{\ket{G}}
\newcommand{\OPI}{\ket{\pi}}

\newcommand{\ev}{{\mathbf{e}}}

\newcommand{\sv}{{\mathbf{s}}}

\newcommand{\vv}{{\mathbf{v}}}

\newcommand{\yv}{{\mathbf{y}}}

\newcommand{\Hm}{{\mathbf{H}}}

\newcommand{\Mm}{{\mathbf{M}}}

\newcommand{\para}[1]{\vspace*{-0.27cm} \textcolor{coll}{\qquad \emph{{#1}}} \vspace*{-0.27cm}}

\renewcommand{\S}{\textup{S}}

\newcommand{\FeF}{\mathfrak{Fe}_4}
\newcommand{\SHAKE}{\textup{SHAKE}{256}}

\newtheorem{statement}{Statement}

\newcommand{\wpi}{\widetilde{\pi}}

\newcommand{\Stern}{\textup{Stern}}

\newcommand{\Unif}{\xleftarrow{\$}}




\usepackage[dvipsnames]{xcolor}
\definecolor{coll}{HTML}{000090}

\usepackage{graphicx}
\usepackage{amsfonts}
\usepackage{amssymb}
\usepackage{amsmath}

\usepackage{framed}
\usepackage{latexsym}
\usepackage{breakcites}
\usepackage{color}
\usepackage{url}
\usepackage{tikz}
\usepackage{algorithm}
\usepackage{caption}
\usepackage[noend]{algpseudocode}
\usepackage{enumerate}
\usepackage{stmaryrd}
\usepackage{float}

\usepackage{thm-restate}

\usepackage{caption}
\usepackage{subcaption}
\usepackage{color}


\newtheorem{assumption}{Assumption}
\newtheorem{assumption*}{Assumption}


\newcommand{\SKeygen}{\textsc{S.keygen}}
\newcommand{\SSign}{\textsc{S.sign}}
\newcommand{\SVerify}{\textsc{S.verify}}
\newcommand{\SISKeygen}{\textsc{S}_{\IS}.\textsc{keygen}}
\newcommand{\SISSign}{\textsc{S}_{\IS}.\textsc{sign}}
\newcommand{\SISVerify}{\textsc{S}_{\IS}.\textsc{verify}}

\newcommand{\sps}[1]{{#1}\textrm{-}sp}
\renewcommand{\sp}[1]{{#1}\textrm{-}sp}
\newcommand{\spp}[1]{{#1}\textrm{-}sp+}
\newcommand{\rs}[1]{{#1}\textrm{-}rs}
\newcommand{\osp}[1]{{#1}\textrm{-}osp}

\newcommand{\E}{\mathbb{E}}

\newcommand{\ie}{\textit{i.e.}}

\renewcommand{\aa}{\mathscr{A}}
\newcommand{\bb}{\mathscr{B}}
\newcommand{\cc}{\mathscr{C}}
\newcommand{\IS}{\mathcal{IS}}
\newcommand{\FSIS}{ \mathrm{FS}^\hh[\mathcal{IS}]}

\newcommand{\Init}{\ISKeygen}
\newcommand{\Keygen}{\ISKeygen}
\newcommand{\cao}{commit-and-open}
\newcommand{\Cao}{Commit-and-open}
\newcommand{\FS}[1]{ \mathrm{FS}^\hh[#1]}
\newcommand{\hh}{ \mathcal{H}}

\renewcommand{\ss}{\mathcal{S}}

\newcommand{\ISCheck}{V_{\IS}}
\newcommand{\ISKeygen}{K_{\IS}}
\newcommand{\ISProver}{P_{\IS}}

\newcommand{\RF}{\mathcal{F}}
\newcommand{\SRRF}{\textrm{SRF}}
\newcommand{\RP}{\mathcal{P}}

\newcommand{\COMMENT}[1]{}

\newcommand{\ket}[1]{|#1\rangle}

\def\01{\{0,1\}}
\def\01{\{0,1\}}

\newcommand{\eps}{\varepsilon}

\newcommand{\zo}{\{0,1\}}

\newcommand{\cadre}[1]
{
	\begin{tabular}{|p{\textwidth}|}
		\hline
		\vspace*{-0.4cm} #1  \\
		\hline
	\end{tabular}
}

\begin{document}
	

\title{Tight quantum security of the Fiat-Shamir transform for commit-and-open identification schemes with applications to post-quantum signature schemes}
\author{{Andr{\'e} Chailloux} \\  {\small Inria de Paris, EPI COSMIQ}}
\date{}

\maketitle

	

\begin{abstract} 
	Applying the Fiat-Shamir transform on identification schemes is one of the main ways of constructing signature schemes. While the classical security of this transformation is well understood, it is only very recently that generic results for the quantum case have been proposed \cite{DFMS19,LZ19}. These results are asymptotic and therefore can't be used to derive the concrete security of these signature schemes without a significant loss in parameters.
	
	 In this paper, we show that if we start from a commit-and-open identification scheme, where the prover first commits to several strings and then as a second message opens a subset of them depending on the verifier's message, then there is a tight quantum reduction for the the Fiat-Shamir transform to special soundness notions. Our work applies to most $3$ round schemes of this form and can be used immediately to derive quantum concrete security of signature schemes.  
	 
	 We apply our techniques to several identification schemes that lead to signature schemes such as Stern's identification scheme based on coding problems, the \cite{KTX08} identification scheme based on lattice problems, the \cite{SSH11} identification schemes based on multivariate  problems, closely related to the NIST candidate MQDSS, and the PICNIC scheme based on multiparty computing problems, which is also a NIST candidate.
\end{abstract}

\textbf{Keywords}: post-quantum cryptography, quantum random oracle model, Fiat-Shamir transform, signature schemes.

\section{Introduction}
Each year brings new advances in quantum technologies \cite{ABB+19} and we will soon need to deploy post-quantum cryptography in order to prevent ourselves against the potential construction of a quantum computer capable of running Shor's algorithm \cite{Sho94} and other powerful quantum algorithms. The NIST standardization process of post-quantum cryptographic primitives \cite{Nist17} (specifically encryption schemes, key encapsulation mechanisms and signature schemes) is currently ongoing and it becomes crucial to continue to build trust for these schemes. A first way to build trust is to constantly challenge the post-quantum computational assumptions by designing new quantum algorithms. Another very important aspect is to make sure we have sound security reductions even with quantum computers. In particular, several technical problems arise when translating the Random Oracle Model\footnote{In the Random Oracle Model, we model a hash function by a truly random function to which we only have black box access. This model is in all generality unrealistic and can be too strong in some pathological scenarios \cite{CGW04} but has been extremely useful for making efficient security reductions \cite{KM15} and is passing well the test of time.} (ROM) to the Quantum ROM (QROM) and we need to rewrite all the security proofs involving the QROM. \\

\para{Quantum security reductions for signature schemes} \\

In this paper, we focus on quantum security reductions for signature schemes. There are mainly $2$ families of signature schemes that use security reductions in the QROM: (1) Hash and Sign signatures and (2) signatures using the Fiat-Shamir transform on identification schemes. We understand well the security of Hash and Sign signatures in the QROM \cite{Zha12}. For those using the Fiat-Shamir transform, it is only recently that there exists a general proof of its security in the QROM \cite{DFMS19,LZ19}.

So is this the end of the story? Not quite. The results of \cite{DFMS19,LZ19} are only asymptotic and are not tight. This means that if you want your signature scheme to have $128$  bits of security, you need to choose parameters such that your post-quantum computational assumption has $256,384$ or often much more bits of security. Several schemes have tight security reductions QROM, for example those based on lossy identification schemes \cite{KLS18} or closely related \cite{ABB+19}. However, several others have only non tight security reductions and some even don't have a post-quantum security reduction, including some NIST candidates\footnote{The GeMSS signature scheme described in \cite{CFM+20} doesn't even have a full concrete security claims against classical adversaries for instance.} Of course, designers that use a non-tight security reduction could take this into account in their parameters but almost no one does this as it would be devastating for their parameters. Instead, designers often have to fix parameters as if the reductions were tight and accept not having concrete security claims. For example, in their latest design specification, the authors of PICNIC write the following: \\ \vspace*{-0.3cm}

\emph{``One caveat we note is that this generalization comes with a cost in tightness of the reduction. The reduction for the ZKB++ parameter sets looses a factor of $q^2$, and for KKW the loss is a factor $q^6$, where $q$ is the number of hash queries. As the results are non-tight, and depend on the asymptotic analysis of \cite{DFMS19}, we make no claims about the concrete security of Picnic in the QROM."} \\

\vspace*{-0.3cm}
In a similar vein, the authors of the MQDSS signature scheme \cite{CHR+20} write in their latest specifications: \\ \vspace*{-0.3cm}

\emph{``Another weakness of our security proof is that it is not at all
	tight. This is again an inherent weakness introduced by the rewinding technique of the
	forking lemma. Therefore, in order to produce a tight security reduction for MQDSS
	one would have to base the proof on different techniques. At the moment, we are not
	aware of such techniques that we could use"} \\ \vspace*{-0.3cm}

This lack of tightness can have real consequences. For example, there has been a recent attack exploiting the non-tightness of the security reduction of the MQDSS signature scheme by Kales and Zaverucha \cite{KZ19}. This was fortunately easily fixable by increasing the parameters without too much harm but this overall situation is unsettling for the trust we have in the parameter sets of these schemes, which is especially problematic since the NIST will soon choose some  post-quantum signature schemes to standardize with some fixed parameters. There is therefore an urgent need to find as tight security reductions as possible for signature schemes in the QROM. \\ \\

\para{Our work in a few words} \\

In this work, we show tight security reduction in the QROM for a large class of identification schemes: namely $3$-round {\cao}  identification schemes. We also derive a more precise reduction when considering parallel repetition of {\cao} identification schemes. We apply our results to existing signature schemes and show their concrete security, while until now, only asymptotic security was known. We consider Stern's signature scheme \cite{Ste93} \footnote{This scheme actually already has concrete quantum security bounds because the underlying identification scheme can be made lossy\cite{Lei18}. However, this introduces some losses in the parameters that don't arise with our techniques.}, the $3$ round $SSH$ signature scheme, which is a non-optimized version of the MQDSS signature the PICNIC signature scheme and the scheme from \cite{KTX08}. 

In order to find these tight reductions, we can't use rewinding techniques as they introduce non-tightness. Moreover, we have to be careful with quantum reprogramming techniques since these can also add some non-tightness as we can see from the \cite{DFMS19} results. So how do we proceed? We first extend Unruh's result and show the quantum security of the Fiat-Shamir transform for identification schemes that have some notion of soundness between statistical and computational soundness.  Then, at a crucial moment of our proof, we need to replace a random permutation by a pseudorandom permutation which is easily invertible. We use the recent result on the quantum security of Feistel networks to construct this pseudorandom permutation. We present all steps and proof techniques more in detail in Section \ref{Section:Overview}. A drawback of this work is that we use an extra computational assumption, namely the Small Range Function Instantiation Assumption which states that we can instantiate the random oracle corresponding to the commitment function with a random small range function (we will give much more details later). Despite this assumption, we still think that currently, a tight proof in the QROM with this extra assumption gives significantly more guarantees than a non-tight security proof  which can hide some real weaknesses, as shown by the attack of \cite{KZ19}.

This work is quite different and complements well the recent work \cite{DFMS19,DFM20,GHHM20} as it is more suited for  concrete quantum security claims useful for designers of signature schemes but is less general. \\

\para{Related work} \\

We briefly presented a few security results in the QROM, let us present a more detailed presentation of related work which will still be far from exhaustive. The QROM was first studied quite late actually in \cite{BDF+10} where it was correctly assessed that in the quantum setting, an adversary making queries to a random oracle should have a quantum access to it, since the hash function it models has a public description. There, they showed the security of some schemes in the QROM, as well as examples where schemes were secure in the ROM but not in the QROM. Other impossibility results showed settings where, in all generality, the quantum Fiat-Shamir transform is not secure \cite{DFG13,ARU14}. On the positive side, \cite{DFG13} proved the security of the quantum Fiat-Shamir transform when oblivious commitments are used. Unruh \cite{Unr15} then showed that it was possible to do a Fiat-Shamir like transform to remove the interaction from identification protocols. This transform is however rather inefficient and was hardly used in practice. 
More recently, there have been new positive results related to the quantum security of the Fiat-Shamir transform. If an identification scheme is lossy, then \cite{KLS18} showed tight concrete quantum security bounds for the Fiat-Shamir transform. They used this result to prove the security of the Dilithium signature \cite{DKL+17}, which is a NIST competitor.  Another related result  is the security proof of $q$TESLA \cite{ABB+19}. Unruh \cite{Unr17} showed the quantum security of the Fiat-Shamir transform for identification schemes with statistical security, or using a dual-mode hard instance generator, a property closely related to the lossiness property. Another related work is the the framework of recording quantum queries by Zhandry \cite{Zha19} which is a very powerful tool for studying random functions and the QROM.
 
Recently, $2$ papers \cite{DFMS19,LZ19} showed generic reduction for the quantum Fiat-Shamir transform. Unlike what was believed before, they show that it is actually possible to perform reprogramming of a quantum random oracle and to follow the classical proofs. Their results are not tight and lose at least a factor of $O(q^2)$ where $q$ is the number of queries to the random function. The results of \cite{LZ19} add even a larger factor of non-tightness but  can be applied to more general settings than those of \cite{DFMS19}. Then, another work \cite{DFM20} showed that this $O(q^2)$ loss is tight and showed a large of class of examples where this is necessary. We will discuss this in the next section and show that this is less harmful than it seems for security reductions. Finally a recent result \cite{GHHM20} presents optimal quantum  reprogramming techniques with applications.

\section{State of the art, overview of our results and proof techniques}\label{Section:Overview}
We will focus on the quantum security of the Fiat-Shamir transform for identification schemes and we will use known results in the QROM to transform this security into the quantum security for resulting signature schemes. As we will show, there are many cases where these reprogramming techniques still lead to a large amount of non-tightness in the proof and the goal of this paper is to present new techniques that overcomes this issue for an important class of signature schemes. \\

\para{State of the art for identification schemes} \\

In an identification scheme $\IS$, a prover $P$ has a pair of public and secret key $(pk,sk)$ and wants to convince a verifier $V$ (that sees only the public key $pk$) that he has a valid corresponding secret key $sk$. In its most standard form, an identification scheme consists of $3$ messages: a first message $x$ from $P$ to $V$, a challenge $c$ from $V$ to $P$ which is a random string and finally a response $z$ from $P$ to $V$. $V$ finally has a procedure that from $(pk,x,c,z)$ determines whether he is convinced or not.  The Fiat-Shamir transform consists of replacing the above interaction with a single message $(x,\hh(x),z)$\footnote{The message actually just consists of $(x,z)$ since $\hh(x)$ can be constructed from $x$.} from $P$ to $V$ where $\hh$ is a hash function modeled as a truly random function in the QROM.

An adversary, who knows only $pk$ and no corresponding $sk$, breaks the Fiat-Shamir transform of $\IS$ if he can construct a triplet $(x,\hh(x),z)$ that the verifier will accept. Breaking the  identification scheme (in the sense of computational soundness) means that an adversary can construct a string $x$ and, when he receives a challenge $c$, he can construct a string $z$ such that the verifier will accept $(pk,x,c,z)$.

 The security of the quantum Fiat-Shamir transform means that we can polynomially relate the above $2$ probabilities. For example, the result in \cite{DFMS19} can be stated as follows
 \begin{align}\label{Equation:S02E01}
 QADV_{\FSIS}(t,q_\hh)  \le q^2_\hh \cdot  QADV_\IS (O(t)).
 \end{align}
 On the left side is the quantum probability (or advantage) of breaking the Fiat-Shamir transform of an identification scheme $\IS$ with a quantum adversary running in time $t$ and making $q_\hh$ quantum queries to $\hh$. The right side corresponds to the probability of breaking $\IS$ for an adversary running in time $O(t).$ We can see already the term $q_\hh^2$ accounting for the non-tightness of this reduction. 
 
 There is another source of non-tightness: we often require a bound in terms of the quantum advantage for \emph{special soundness} and not computational soundness. An adversary that breaks the $2$-special soundness property is able to construct $2$ valid triplets $(x,c,z)$ and $(x,c',z')$ with $c \neq c'$\footnote{In the asymptotic case, $2$-special soundness is often defined with an efficient extractor that takes a pair of triplets and outputs a valid secret key. The current definition is similar in spirit and uses an advantage notion which is more adapted for concrete security bounds.}(the first message $x$ is the same for both triplets). This can be generalized to $\gamma$-special soundness where we require an adversary to create $\gamma$ valid triplets $(x,c_1,z_1),\dots,(x,c_\gamma,z_\gamma)$ where the challenges $c_i$ are pairwise distinct. One can relate computational soundness advantage with $\gamma$-special soundness advantage but this comes with another big loss in tightness. For example, the authors of \cite{DFMS19} use roughly\footnote{The bound is actually slightly worst as Theorem $25$ of \cite{DFMS19} (in the eprint version) generalizes Lemma $7$ of \cite{Unr12} while it should generalize Lemma $8$ in order to account for the fact that the challenges have to be pairwise distinct. The difference is however only minimal and doesn't change the asymptotic behavior, even though it may add some small dependence in the size of the challenge space.} the following bound:
 \begin{align}\label{Equation:S02E02}
 QADV_\IS(t) \le \left[QADV_\IS^{\sp{\gamma}}(O(t))\right]^{\frac{1}{2\gamma - 1}}
\end{align} 
which, when combined to Equation \ref{Equation:S02E01}, gives the  bound
 \begin{align}
QADV_{\FSIS}(t,q_\hh)  \le q^2_\hh \cdot  \left[QADV_\IS^{\sp{\gamma}}(O(t))\right]^{\frac{1}{2\gamma - 1}}.
\end{align}

We can see that already with $\gamma = 2$, we have a cubic loss in the exponent because we use special soundness and we lose a power $5$ when requiring $3$-special soundness, which are the $2$ most common cases. In conclusion, while these asymptotic results, as well as those in \cite{LZ19}, are extremely important for having post-quantum trust in the Fiat-Shamir transform for identification schemes, the amount of non-tightness is significantly too large to make concrete security claims with decent parameters. \\

\para{Overview of our results} \\

Our results will remove this non-tightness for an important class of identification schemes, namely {\cao} identification schemes. 
In a {\cao} identification scheme, the prover can extract a string $z = z_1,\dots,z_n$ from the secret key $sk.$ His first message $x = G(z_1),\dots,G(z_n)$ consists of committing to all the values $z_i$ with a commitment function $G$ and then in the second message, he reveals a subset of the $z_i$ depending on the challenge $c$. Several schemes, such as Stern's identification scheme, the Picnic identification scheme and the SSH identification scheme that inspired the MQDSS signature are of this form. They are all even more particular: they consist of a parallel repetition of a {\cao} identification scheme $\IS$ with challenge size $3$ and the advantage of the underlying post-quantum computational assumption is equal to $QADV_{\IS}^{\sp{3}}(t).$ 

Our results will rely on an additional computational, namely that the commitment function modeled as a random oracle can be replaced with a random small range function. We discuss this assumption now. 

\paragraph{What do we mean when we say $G$ can be replaced with a small range function and why is this assumption justified?} $ \ $ \\ \\
A security claim in the (Q)ROM for a signature scheme $\S$ has the following $2$ steps:
\begin{enumerate}
	\item A security proof when some hash functions are modeled with random oracles.
	\item An assumption that these random oracles can be replaced with (quantum) secure hash functions without harming the security of $\S$.
\end{enumerate}
Usually, a proof in the (Q)ROM consists of proving (1) and $\S$ is said to be secure in the (Q)ROM when it uses (2). The proof of (1) becomes useful when (2) is applied because random oracles do not exist in real life. In our case, the situation is a bit different because we need some version of (2) in order to prove (1). More precisely, we need the following statement, which we call the Small Range Function Instantiation Assumption: if the hash function used for the commitment is modeled as a random function from $\zo^x$ to $\zo^x$, then it can be replaced with a function $f = h \circ g$ where $g$ is a random function from $\zo^x$ to $[r]$ and $h$ is a random injective function from $[r]$ to $\zo^x$ without harming the security of $\S$. First, of course, this statement has to depend on the size of $r$ so we say we don't lose more that the quantum security of $f$, which here is $O(\frac{q^3}{r})$ from collision lower bounds where $q$ is the number of queries made to the commitment function. For $r$ sufficiently large, $f$ is  a very strong hash function, the only structure we have is the separate access to $h$ and $g$ and our assumption essentially says that an adversary cannot exploit this structure for breaking $\S$.
We say that this is a mild version of (2) because usually in (2), we have an explicit function $\hh$ such as $\SHAKE$ that replaces the random oracle for which we know much more structure and we assume this structure doesn't help the adversary. Recall that such an assumption (not necessarily with $\SHAKE$ but for some explicit function) has to be made in order to make the proof in the QROM useful. However, since we can't formally prove that this is a weaker assumption than using (2) with a specific function, we keep this as an extra computational assumption. 

A final remark, our assumption is on functions from $\zo^x$ to $\zo^x$ but this doesn't mean we require the signature scheme we use to use a commitment of this form. In particular, the output space can be smaller than the input space, which is not a problem for our proof.  \\

We can now go back to the statement of ours results.  Our first theorem deals specifically with the parallel repetition case.

	\begin{theorem}[Simplified]\label{Theorem:ParallelRepetitionSimple}
		Let $\IS$ be a {\cao} identification scheme that uses a commitment $G$ modeled as a random oracle. Let $\gamma \ge 2$ be an integer. For any $t,q_\hh,q_G$, and number of repetition $r$, using the Small Range Function Instantiation Assumption, we have 
		\begin{align*}QADV_{\FS{\IS^{\otimes r}}}(t,q_\hh,q_G) \le QADV_{\IS}^{\sp{\gamma}}\left(O(t)\right)+  O\left(\frac{q_\hh^2 (\gamma-1)^r}{|C|^r}\right) + O\left(\frac{q_G^3}{|M|}\right)\end{align*}
		where $|C|$ is the size of the challenge space and $|M|$ is the size of the space of each $x_i = G(z_i).$
	\end{theorem}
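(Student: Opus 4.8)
The plan is to avoid rewinding entirely and instead obtain the $\gamma$ valid transcripts required for $\gamma$-special soundness in a \emph{single shot}, by exploiting the structure of a {\cao} scheme: once the first message is fixed, a \emph{binding} commitment determines all the committed strings, and for a {\cao} scheme answering a challenge only amounts to revealing a subset of those strings. Concretely, I would decompose $QADV_{\FS{\IS^{\otimes r}}}(t,q_\hh,q_G)$ into three contributions matching the three terms of the bound: a binding/collision term $O(q_G^3/|M|)$, an extraction term controlled by $QADV_{\IS}^{\sp{\gamma}}$, and a generic-search term $O(q_\hh^2(\gamma-1)^r/|C|^r)$.

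First I would handle the commitment. Modelling $G$ as a random oracle with range of size $|M|$, the quantum collision bound gives that a $q_G$-query adversary produces a collision in $G$ only with probability $O(q_G^3/|M|)$; conditioning on the collision-free event makes $G$ injective on all queried points, so every first message $x=(G(z_1),\dots,G(z_n))$ binds the committed strings $z_1,\dots,z_n$, which accounts for the third term. To turn this binding into \emph{efficient} extraction, I would invoke the Small Range Function Instantiation Assumption to replace $G$ by a function $f=h\circ g$ and implement the injective part by a quantum-secure Feistel pseudorandom permutation, so that the reduction knows the inverse and can recover the committed strings from any first message. Under the assumption this substitution is indistinguishable in the \QROM{}, so it costs nothing beyond the assumption itself.

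Next I would set up the case distinction at the heart of the reduction. For a first message $X=(x^{(1)},\dots,x^{(r)})$ of $\IS^{\otimes r}$, binding determines, for each repetition $j$, the set $S_j(X)\subseteq C$ of challenges the extracted committed strings can answer; call $X$ \emph{extractable} if some $|S_j(X)|\ge\gamma$. In the extractable case I would build an adversary $\B$ against $\gamma$-special soundness that runs the forger $\A$, reads off the classical first message it outputs, inverts the commitment to recover all committed strings of the good repetition $j^\star$, and opens them to each of the $\gamma$ answerable challenges; this yields $\gamma$ valid transcripts for a single instance in one shot, so $\Pr[\A\text{ outputs an extractable }X]\le QADV_{\IS}^{\sp{\gamma}}(O(t))$. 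In the non-extractable case every $|S_j(X)|\le\gamma-1$, so $\A$ can win only if the Fiat--Shamir challenge $\hh(X)$ lands in $\prod_j S_j(X)$, a set of density at most $((\gamma-1)/|C|)^r$. Since $\hh$ is an independent random oracle and the marked set is fixed once $X$ is fixed, this is a generic quantum search problem, and a $q_\hh$-query algorithm succeeds with probability $O(q_\hh^2(\gamma-1)^r/|C|^r)$; formalising this through the extension of Unruh's result to the intermediate \css{} notion gives the second term. Bounding $\Pr[\A\text{ wins}\wedge\text{extractable}]\le\Pr[\text{extractable}]$, adding the non-extractable bound and the collision term yields the claim.

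The step I expect to be the main obstacle is the one-shot extraction itself: replacing the random-oracle commitment by an efficiently invertible function while simultaneously preserving perfect binding and \QROM{} indistinguishability is exactly what forces both the Small Range Function Instantiation Assumption and the Feistel machinery, and it is what lets us sidestep the quadratic-in-$q_\hh$ and exponential-in-$\gamma$ losses of the generic reductions \eqref{Equation:S02E01}--\eqref{Equation:S02E02}. A secondary difficulty is making the generic search bound rigorous when the marked set $\prod_j S_j(X)$ is determined adaptively by the strings the forger commits to; this is precisely where the \css{} formulation is needed, so that the ``few answerable challenges'' property is invoked only on the first messages the efficient forger actually produces.
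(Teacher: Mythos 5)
Your high-level architecture is the same as the paper's: a three-term decomposition, an ``extractable vs.\ non-extractable'' case split (the paper's $\gamma$-rigid soundness, via its generalization of Unruh's argument and the Grover-type search bound), the combinatorial lemma that if every per-repetition set $S_j(X)$ has size at most $\gamma-1$ then the valid challenge set of $\IS^{\otimes r}$ has density at most $((\gamma-1)/|C|)^r$, and a one-shot reduction to special soundness of a \emph{single} instance. However, there is a genuine gap in the central mechanism, the one you yourself flag as the crux. In your final hybrid the commitment is $f = h\circ g$ with only the injective part $h$ implemented as an invertible Feistel PRP. Inverting $h$ on a commitment $x_i$ yields $h^{-1}(x_i) = g(z_i) \in [r]$, \emph{not} the committed string $z_i$: a valid opening must be a preimage of $x_i$ under the full $f$, and recovering one still requires inverting the many-to-one random function $g$, which is exactly a search problem your reduction cannot solve. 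So one-shot extraction fails in your setup. Your collision-freeness argument does not rescue this: the $q_G$-query forger is quantum (there is no well-defined list of queried points), an adversarial $x_i$ need not be the image of anything ``queried,'' and binding on queried points neither makes the sets $S_j(X)$ well defined nor lets the reduction compute openings. Moreover, attributing the $O(q_G^3/|M|)$ term to a collision bound while declaring the SRF substitution ``free'' misplaces where that term really comes from.

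What the paper does instead, and what your chain of hybrids is missing, is the following. First it enlarges the response space to $R' = \zo^{2m}$, $m \ge 2048$, with $M \subseteq R'$ (Step 1). Then comes an \emph{exact} rewriting $G \mapsto \pi\circ G$ for a random permutation $\pi \in \RP^{R'}$: this changes the Fiat--Shamir advantage not at all, because $\pi$ can be absorbed into the random oracle ($\hh_\pi \eqdef \hh(\pi(\cdot),\dots,\pi(\cdot))$ is uniform when $\hh$ is; averaging over $\hh$ is essential here). Now $\pi\circ G$ is distributed exactly as a small-range function on $R'$, and the Small Range Function Instantiation Assumption is invoked in the \emph{opposite} direction from yours: to replace this SRF by a full random permutation $\sigma \Unif \RP^{R'}$, which is where the $O_n((q_G+q_\hh)^3/|M|)$ term actually arises (Step 2). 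Only then is $\sigma$ swapped for $\wpi_K = \FeF(f_K)$ using the Feistel result of \cite{HI19} plus the PRF assumption, with negligible loss since $m\ge 2048$ (Step 3). The final commitment is thus a genuine bijection of $R'$ whose inverse the reduction can compute, and your extraction and case-split argument then go through essentially verbatim (Step 4). One further point you gloss over: the extracted transcripts are valid for $\IS_{\wpi_K}$, not for $\IS_G$, and plain $\gamma$-special soundness is commitment-dependent; the paper passes through the commitment-independent notion $QADV^{\spp{\gamma}}$ (output the committed vector $z$ together with the $\gamma$ challenges), which is invariant under all these substitutions, and only at the end uses $QADV^{\spp{\gamma}} \le QADV^{\sp{\gamma}}$ to reach the stated bound.
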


Before discussing what we mean by the non-standard assumption 'that can be instantiated with a random function with small range', let us present the different terms of this theorem. 
The left hand side is the probability ({\ie} advantage) that a quantum adversary has of breaking the Fiat-Shamir transform of $\IS^{\otimes r}$, the $r$-fold parallel repetition of $\IS$. The adversary is running in time $t$ and performs $q_\hh$ quantum queries to the hash function used in the Fiat-Shamir transform and $q_G$ quantum queries to the commitment function $G$. We also use the QROM and model $G$ as a truly random function. 

These terms on the right hand side are all necessary. The first term is supposed to be related to the hardness of the computational problem.  The term $O\left(\frac{q_\hh^2 (\gamma-1)^r}{|C|^r}\right)$ corresponds to applying Grover's algorithm on the challenge space. This attack appears for example in schemes that have $3$-special soundness but where an adversary can easily construct an $x$ for which he can successfully answer $2$ of the $3$ verifier's challenges. This is also the attack that was presented in \cite{DFM20} with $\gamma = 2$. So indeed, the $q^2_\hh$ might be necessary but only for the part of advantage related to the challenge attack and crucially, the $O(q_\hh^2)$ factor loss in \cite{DFMS19} isn't tight in front of the advantage to break the computational problem. What we describe here is also true for the example presented in \cite{DFM20} so their tightness result of the $O(q^2)$ loss factor is much less harmful that what it seems even for schemes where it holds.  The third term is also necessary corresponds to attacking the commitment function and breaking the binding property by finding collisions on $G$. An interesting remark about this theorem is that designers already implicitly used results very similar to Theorem \ref{Theorem:GenericSimple} but without a formal proof and used it to determine the value of $r$\footnote{For example, the PICNIC scheme is of the form $\IS^{\otimes r}$ and we have $3$-special soundness for $\IS$ (so we pick $\gamma = 3$) and $|C| = 3$. If we want $64$ bits of quantum security (so $q_\hh = 2^{64}$), we want from the challenge attack $\frac{q^2_\hh 2^r}{3^r} \le 1$ (omitting the $O(\cdot)$) which implies $r \ge 219$. If we want $128$ bits of quantum security, this $r$ has to be doubled. This corresponds exactly to the number of repetitions of the PICNIC scheme respectively for levels $1$ and $5$ of the NIST security levels.}.

 What we omitted in the description of Theorem \ref{Theorem:ParallelRepetitionSimple} is that the $O(t)$ hides some additive terms that depend on $|C|$ so they are well suited for parallel repetition of schemes with small challenge but are not suited when these are exponential. To circumvent this, we also generalize the above theorem when we don't have parallel repetition but just a single identification scheme with potentially a large challenge space. We prove the following 

\begin{theorem}[Simplified]\label{Theorem:GenericSimple}
	Let $\gamma \ge 2$ be an integer and let $\IS$ be a {\cao} identification scheme with a commitment function $G$ modeled as a random oracle that can be instantiated with a random function with small range. Using the Small Range Function Instantiation Assumption, we have for any running time $t$ and number of queries $q_\hh,q_G$
$$ QADV_{\FS{\IS}}(t,q_\hh,q_G) \le QADV_{\IS}^{\osp{\gamma}}\left(O(t)\right) + O\left(\frac{q_\hh^2 \gamma}{|C|}\right) + O\left(\frac{q_G^3}{|M|}\right).$$
\end{theorem}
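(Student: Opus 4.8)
The plan is to transform, through a short sequence of game hops, any adversary $\cA$ against $\FS{\IS}$ into an adversary against the $\osp{\gamma}$ (online special soundness) game for $\IS$, while isolating the two genuinely unavoidable attacks so that each contributes its own additive term. First I would replace the random oracle $G$ modelling the commitment by a small range function $f = h \circ g$, where $g$ is a random function into $[r]$ and $h$ a random injection from $[r]$ into the commitment space. By the Small Range Function Instantiation Assumption this hop changes the advantage by no more than the collision advantage of $f$, namely $\OO{q_G^3/|M|}$, which is exactly the third term of the bound. The delicate point is that $h$ is a random injection and is therefore as hard to store and to invert as a random permutation, so an honest reduction cannot realise it efficiently; the second hop thus replaces $h$ by an efficiently invertible pseudorandom permutation built from a Feistel network. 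Invoking the known quantum security of Feistel networks, this replacement is indistinguishable up to a term absorbed into the running time $\OO{t}$, and it leaves us with a commitment $G = h \circ g$ that the reduction can invert on the fly.

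An invertible $G$ is what makes online, rewinding-free extraction possible. When $\cA$ outputs a forgery $(x, \hh(x), z)$ with $x = (G(z_1), \dots, G(z_n))$, the reduction $\cB$ inverts $h$ on each coordinate $x_i$ to recover the slot $g(z_i)$; since $h$ is injective this slot is determined by $x_i$ alone, so it binds the committed opening up to a collision of $g$. Combining this information with the revealed part of $z$ and the {\cao} structure, $\cB$ reconstructs consistent openings of all committed strings, hence valid responses to $\gamma$ pairwise distinct challenges sharing the first message $x$ — precisely a win in the $\osp{\gamma}$ game, which accounts for the first term $QADV_\IS^{\osp{\gamma}}(\OO{t})$. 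The only way this extraction can fail to produce a consistent opening is that two distinct recovered preimages collide under $G$, an event already charged to the collision term above.

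It remains to bound the residual advantage of forgeries that bypass honest extraction, namely $\cA$ exploiting the random oracle $\hh$ to find a first message $x$ whose challenge $\hh(x)$ lands in the small set $S_x$ of challenges answerable from a partial opening. Here I would extend Unruh's analysis to this intermediate soundness regime: conditioned on the extractor recovering fewer than $\gamma$ consistent openings for $x$, the set $S_x$ has size less than $\gamma$, so producing an $x$ with $\hh(x) \in S_x$ is a quantum search problem whose success probability after $q_\hh$ queries is $\OO{q_\hh^2 \gamma / |C|}$, the second term. Because online extraction turns the special soundness condition into a predicate on $x$ that is independent of $\cA$'s internal quantum state, this generic quantum search bound carries a factor $\gamma/|C|$ standing alone, rather than a factor $q_\hh^2$ multiplying the hardness term as in the rewinding-based reductions.

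The step I expect to be the main obstacle is exactly this last one: making the quantum search argument rigorous while the extractor reads off the very commitments $\cA$ is querying. One has to show that the ``answerable'' predicate is well defined and efficiently checkable per $x$ (which is what the invertible small range commitment provides), that the extraction does not disturb the search bound, and that the bookkeeping between the $\osp{\gamma}$ reduction and the Grover-type bound on $\hh$ splits $\cA$'s success cleanly into the three stated contributions. Everything else — the two assumption-based hops and the collision accounting — is comparatively routine once this core search-versus-extraction lemma is established.
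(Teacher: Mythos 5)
There is a genuine gap, and it sits exactly where your reduction claims to extract: inverting $h$ does not give you the openings. After your two hops the commitment is $f = h \circ g$ where $h$ is now an efficiently invertible (Feistel-based) injection but $g$ is still a uniformly random \emph{compressing} function from the huge set $R$ to $[r]$. Inverting $h$ on a coordinate $x_i$ recovers only the slot $g(z_i) \in [r]$, and from that slot the unopened value $z_i$ is not recoverable: $g$ is one-way (indeed the paper credits $f$ with $r/2$ bits of preimage security precisely because of $g$), and information-theoretically the slot barely constrains $z_i$ at all. The $\osp{\gamma}$ game requires the reduction to output the \emph{full} vector $z = (z_1,\dots,z_n)$ for which at least $\gamma$ challenges accept, but your reduction only ever holds the revealed block $z_{I_c}$ for the single challenge $c = \hh(x)$ plus useless slots for the rest. ``Binding'' (which is what injectivity of $h$ gives you) is not extraction, and no amount of care in the search-versus-extraction bookkeeping of your last paragraph can repair this, because in your final hybrid the unopened coordinates are simply not determined by $x$.

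The missing idea, which is how the paper proceeds, is to first get rid of the compressing stage $g$ altogether before trying to invert anything. The paper observes that the commitment outputs are only ever consumed by the random oracle $\hh$, so composing the commitment with any permutation $\pi \in \RP^R$ is invisible: $\hh \circ \vpi$ is distributed exactly like $\hh$, hence $\E_{\hh}\left[QADV_{\FS{\IS_G}}\right] = \E_{\hh}\left[QADV_{\FS{\IS_{\pi\circ G}}}\right]$ (after a benign rewriting of the adversary). For random $\pi$ and random $G \in \RF^R_M$, the composition $\pi \circ G$ is distributed as a small range function on $R$ with range size $|M|$, and it is \emph{here} that the Small Range Function Instantiation Assumption is invoked --- in the direction that swaps this small range function for a genuine uniform permutation $\sigma \in \RP^R$, at cost $O\left(\frac{(q_G+q_\hh)^3}{|M|}\right)$ (an earlier step enlarges $R$ to $\zo^{2m}$ so that this makes sense and so that Feistel applies). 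Only then does the Feistel replacement $\sigma \rightarrow \wpi_K$ pay off: the \emph{entire} commitment is now an efficiently invertible permutation, every coordinate satisfies $z_i = \wpi_K^{-1}(x_i)$, and the rigid-soundness event (your ``answerable'' predicate, bounded exactly as you propose via the Grover-type term $O(q_\hh^2\gamma/|C|)$) converts constructively into a win in the $\osp{\gamma}$ game, whose advantage is commitment-independent and hence transfers back to the original $\IS_G$. Your first hop and your search bound are sound in spirit and match the paper's Lemma 1 and Proposition 4; what your route lacks is the permutation-composition trick that removes $g$, and without it the extraction step fails.
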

Here, the $O(\cdot)$ terms do not depend on $|C|$ anymore. 
This theorem is very similar to Theorem \ref{Theorem:ParallelRepetitionSimple} but the reduction is to a weaker notion of special soundness, namely output special soundness (hence the $\osp{\gamma}$ in the theorem) that we will discuss more in detail in the paper.  Informally, we want again the adversary to produce $\gamma$ valid triplets $(x,c_i,z_i)$ except that he doesn't need to know what are the challenges $c_i$ that correspond to the $z_i.$ The identification schemes we study  all can use Theorem \ref{Theorem:ParallelRepetitionSimple} but it would be interesting to see if some other schemes could use Theorem \ref{Theorem:GenericSimple}. \\

\setcounter{theorem}{0}
Finally, an important conceptual step of our results is to relate the quantum Fiat-Shamir advantage for any identification scheme (so not necessarily commit and open) to the notion of $\gamma$-rigid soundness. This notion can be seen as a computational-statistical notion of soundness meaning that the adversary is computationally bounded when producing the first message $x$ but unbounded when producing the response $z$ (that depends on $pk,x,$ and the challenge $c$). Informally, an adversary breaks the $\gamma$-rigid soundness property if he can construct $x$ such that he will be able to answer in a valid way at least $\gamma$ different challenges (he is unbounded for this second message). We prove the following

\begin{proposition}\label{Proposition:GenericSimple}
	For any integer $\gamma \ge 2$, time $t$, number of queries $q_\hh$, and identification scheme $\IS$, we have
	$$ QADV_{\FS{\IS}}(t,q_\hh) \le QADV_{\IS}^{\rs{\gamma}}(t,q_\hh) + O(\frac{q^2_\hh \gamma}{|C|}).$$
\end{proposition}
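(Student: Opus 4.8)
The plan is to prove the bound by a tight reduction combined with a quantum search estimate, following the dichotomy that underlies Unruh's approach. First I would fix a Fiat-Shamir adversary $\cA$ achieving advantage $QADV_{\FS{\IS}}(t,q_\hh)$: on input $pk$ and with $q_\hh$ quantum queries to $\hh$, it outputs a pair $(x,z)$ and succeeds exactly when $\ISCheck(pk,x,\hh(x),z)$ accepts. The key object is, for each first message $x$, the information-theoretic set of answerable challenges
\[
S_x \;=\; \{\, c \in C \;:\; \exists\, z,\ \ISCheck(pk,x,c,z)=1 \,\},
\]
which depends only on $pk$ and the verification procedure and \emph{not} on the random oracle $\hh$. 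Since any accepting transcript has $\hh(x)\in S_x$, I would split the success event according to whether $x$ is \emph{heavy} ($|S_x|\ge\gamma$) or \emph{light} ($|S_x|\le\gamma-1$), so that
\[
QADV_{\FS{\IS}}(t,q_\hh)\;\le\;\Pr[\,\cA\text{ succeeds}\wedge x\text{ heavy}\,]\;+\;\Pr[\,\cA\text{ succeeds}\wedge x\text{ light}\,].
\]

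For the heavy part I would build a $\gamma$-rigid soundness adversary $\cB$ that simply runs $\cA$ and outputs its first message $x$. Whenever $\cA$ produces a heavy $x$, by definition $S_x$ contains at least $\gamma$ distinct challenges that each admit a valid response, which is exactly the event that $\cB$ breaks $\gamma$-rigid soundness (recall that rigid soundness is unbounded in the second message and only asks for the \emph{existence} of $\gamma$ valid responses). Since $\cB$ merely forwards $\cA$'s computation and its first message, it runs in time $t$ and makes $q_\hh$ queries, so this term is at most $QADV_{\IS}^{\rs{\gamma}}(t,q_\hh)$.

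For the light part I would truncate the target sets, setting $S'_x=S_x$ when $x$ is light and $S'_x=\emptyset$ otherwise, so that $|S'_x|\le\gamma-1$ uniformly in $x$ while light-case success still forces $\hh(x)\in S'_x$. This is precisely an instance of the generic quantum search problem for the random oracle $\hh$ and the fixed relation $B=\{(x,c):c\in S'_x\}$, whose per-input density satisfies $\max_x \Pr_{c}[\,(x,c)\in B\,]=\max_x |S'_x|/|C|\le(\gamma-1)/|C|$ (the probability being over uniform $c\in C$, matching the uniform distribution of $\hh(x)$). By the quantum search lower bound (optimality of Grover search for a random oracle), any $q_\hh$-query algorithm outputs such an $x$ with probability $O\!\left(q_\hh^2(\gamma-1)/|C|\right)=O\!\left(q_\hh^2\gamma/|C|\right)$. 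Adding the two contributions yields the claim.

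The main obstacle is the search estimate in the light case: the marked condition $\hh(x)\in S'_x$ is \emph{input-dependent} (the target set $S'_x$ varies with $x$) and is read off the very oracle $\hh$ that $\cA$ queries rather than from a separate marking oracle. I would therefore need the variable-target, average-case version of the Grover lower bound, and to prove it tightly — obtaining the clean $O(q_\hh^2\gamma/|C|)$ rather than a looser bound — I expect to argue directly by the hybrid method, comparing $\hh$ to a reprogrammed oracle $\hh_0$ that resamples each value outside $S'_x$ (under which the finding probability is zero) and bounding the resulting query weight on the random marked region. Keeping $\cB$'s query budget equal to $\cA$'s, so that no extra factor of $q_\hh$ creeps in, is what makes the overall reduction tight.
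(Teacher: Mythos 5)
Your proposal is correct and follows essentially the same route as the paper's proof: the identical heavy/light decomposition according to whether $|VC^{\IS}_x| \ge \gamma$, the same forwarding reduction bounding the heavy case by $QADV_{\IS}^{\rs{\gamma}}(t,q_\hh)$, and the same input-dependent quantum search bound $O(q_\hh^2\gamma/|C|)$ for the light case. The only difference is that the variable-target search estimate you propose to prove by a hybrid argument is exactly the paper's Lemma 1 (implicitly stated and proven in \cite{Unr17}), which the paper simply invokes as a black box, so no new proof of it is required; your explicit truncation $S'_x$ to make the target sets uniformly small is a minor point of added care that the paper glosses over.
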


This proposition can be seen as a generalization of Unruh's reduction from the quantum advantage of the Fiat-Shamir transform to statistical soundness. The fact that we impose a $\gamma$ threshold here in our rigid soundness definition makes it easier to related to $\gamma$-special soundness without any losses in tightness. We use this proposition for {\cao} identification schemes but it could have more applications. \\

\para{Techniques used} \\ 

How to we achieve our results? The most common ways of proving the quantum security of the Fiat-Shamir transform use techniques such as quantum rewinding or quantum reprogramming. These techniques are very general but introduce some non-tightness that we want to avoid so we have to manage without them. Our starting point is to use Unruh's result on the quantum security of the Fiat-Shamir transform when the underlying identification scheme has statistical soundness. In this case, things are fairly easy and we can invoke quantum lower bounds on the search problem to conclude. As we wrote above, we first introduce the notion of $\gamma$-rigid soundness to achieve Proposition \ref{Proposition:GenericSimple} that holds for any identification scheme.

We then look more precisely at {\cao} identification schemes, where during the first message, the prover commits to some values $x = G(z_1),\dots,G(z_n)$ where $G$ is the commitment function and reveals a subset of those $z_i$ as his second message. We first show that we can replace this function $G$ with a random permutation $\sigma$ \footnote{We note here that this replacement is just part of a proof technique. We prove the security of identification schemes for random commitment schemes which are not permutations.}. This comes from the fact that the actual values of $G(z_1),\dots,G(z_n)$ are used only for computing the challenge $c = \hh(G(z_1),\dots,G(z_n))$. Since $\hh$ is also random, we show that this change of $G$ doesn't change the quantum advantage, on average on $\hh.$ This is actually where we use the Small Range Function Instantiation Assumption and we don't use later in the proof. 

However, because we want tight results, we are far from done. We can't use generic relations from computational soundness to $\gamma$-special soundness (like the one in Equation \ref{Equation:S02E02}). We need to directly reduce to $\gamma$-special soundness without going through computational soundness. To do so, we need from the string $\sigma(z_1),\dots,\sigma(z_n)$ to be able to recover the whole string $z = z_1,\dots,z_n$. However, we only have black box access to $\sigma$ and we don't have access to a inversion oracle. The idea we use to do this is to replace $\sigma$ with a random permutation from a pseudorandom permutation family $\{\widetilde{\pi}_K\}$ which doesn't change the security claim but which is easily invertible. From there, we can tightly relate the Fiat-Shamir advantage to a $\gamma$-special soundness advantage.  How do we construct this function $\widetilde{\pi}_K$? We use recent results on the quantum security of Feistel networks from \cite{HI19}. This result shows how to construct quantum secure random permutations from random functions with black box access. These Feistel networks also have the property that they are easily computable \emph{and invertible}, even when the underlying random function is hard for the preimage finding problem. We use as the underlying pseudorandom function the keyed $\KMAC$ functions, which are believed to be quantum secure.

Putting this all together, we can relate the quantum Fiat-Shamir to special soundness notions. In order to use the security of the Feistel networks, we have to artificially increase the size of the input space of the commitment scheme and we also replaced the random function $G$ with a function $\wpi_K$ for random $K$. So how we can conclude about special soundness for the original scheme. For Theorem \ref{Theorem:GenericSimple}, this is immediate as our transformations do not change the $\gamma$-output special soundness advantage. However, this is not true for $\gamma$-special soundness. For Theorem \ref{Theorem:ParallelRepetitionSimple}, we actually reduce to a stronger variant of $\gamma$-special soundness which is also invariant under our transformations which immediately implies Theorem \ref{Theorem:ParallelRepetitionSimple}. 

We now dive in the more formal part of this paper.

\section{Preliminaries}\label{Section:Preliminaries}
\paragraph{Basic notations.}
For an integer $N \in \mathbb{N}^*$, we denote by $[N]$ the set $\{1,\dots,N\}.$ For a (usually probabilistic) algorithm $\mathrm{A}(\cdot)$, $x \leftarrow \mathrm{A}(\cdot)$ means that we run $\mathrm{A}(\cdot)$ with some fresh randomness and get some output $x$. We will sometimes also use the notation $\mathrm{A}(\cdot) \rightarrow x.$ We will also use the notation $x \leftarrow D$ when $D$ is a distribution when we sample $x$ from $D.$ For a set $S$, the notation $x \Unif S$ means that $x$ is chosen uniformly at random from the set $S$. Let $\RF^X_Y$ be the set of functions from $X$ to $Y$ and let $\RP^X$ be the set of permutations acting on $X$. The notation $\eqdef$ designs an equality which is a definition. We will use Landau notations and also $O_n(\cdot)$ meaning $poly(n) O(\cdot)$.
\subsection{Quantum query algorithms.}
\indent In this work, we will often work with query algorithms that have a black box access to some deterministic function $f$. A classical access to $f$ means that we can perform queries that on input $x$ outputs $f(x)$. A quantum access to $f$ means that we can perform the unitary $U_f$ in a black box manner, where
$$ U_f : \ket{x}\ket{y} \rightarrow \ket{x}\ket{y \oplus f(x)}.$$

A quantum query algorithm with classical access to $f$ will be denoted $\aa^f$ and a quantum query algorithm with quantum access to $f$ will be denoted $\aa^{\ket{f}}.$ For any quantum algorithm $\aa$, we denote by $|\aa|$ it's total running time. We write $|\aa^{\ket{f}}| = (t,q_f)$ when $\aa^{\ket{f}}$ runs in time $t$ and performs $q_f$ quantum queries to $f.$  We can also write $|\aa^{\ket{f}}| = (*,q_f)$ to specify only the number of queries but not the running time. Unless stated otherwise, black box calls to $f$ or $U_f$ are efficient and we fix the running time of a query to be equal to $1.$

In the notation $\aa^{\ket{f}}$, the behavior of the query algorithm is described by $\aa$ and the superscript $\ket{f}$ only indicates which function is queried. This means that the algorithm $\aa^{\ket{g}}$ behaves exactly as $\aa^{\ket{f}}$ where calls to $U_f$ are replaced with calls to $U_g.$ We can also write $\aa$ for a quantum query algorithm where the queried function is not specified.

A query algorithm can perform queries to different functions. For example $\aa^{\ket{f},\ket{g},\ket{h}}$ has a black box access to the $3$ unitaries $U_f,U_g,U_h.$ We write $|\aa^{\ket{f},\ket{g},\ket{h}}| = (t,q_f,q_g,q_h)$ to denote the fact that $\aa^{\ket{f},\ket{g},\ket{h}}$ runs in time $t$, performs $q_f$ queries to $U_f$, $q_g$ queries to $U_g$ and $q_h$ queries to $U_h.$ Finally, we define the $q$-query quantum variational distance between $2$ distributions $D_1,D_2$ on functions as 
$$ \Delta_q(D_1,D_2) \eqdef \max_{\aa : |\aa| = (*,q)} \left| \Pr_{f \leftarrow D_1}[\aa^{\ket{f}}(\cdot) \textrm{ outputs } 0] -  \Pr_{g \leftarrow D_2}[\aa^{\ket{g}}(\cdot) \textrm{ outputs } 0] \right|.$$

\subsection{Hash functions and Feistel networks}
\paragraph{$\SHAKE$.} A prime function for instantiating random oracles in the post-quantum setting is $\SHAKE$. It is a SHA-3 variant \cite{BDPV11} that uses the sponge construction with variable input and output sizes. We write $\SHAKE_{X,Y}$ to explicit the input space $X$ and output space $Y$. The sponge construction is known to be quantum secure \cite{CHS19} and it is standard in the QROM to model $\SHAKE_{X,Y}$ with a random function in $\RF^X_Y$ for which we only have black box access. There are keyed versions of $\SHAKE$ called $\KMAC_K$ indexed with a key $K$. This function family is believed to be a quantum secure pseudo random function family.

\paragraph{Feistel networks.}Feistel networks are a generic way to transform pseudorandom functions in pseudorandom permutations. They were first studied by Luby and Rackoff \cite{LR88}, and we know well their classical security. Recently, the quantum security was proven for $4$ round Feistel networks. Very briefly, the $4$ round Feistel network starts with a function $f \in \RF^{\zo^n}_{\zo^n}$ and constructs a permutation $\FeF(f) \in \RP^{\zo^{2n}}.$ $\FeF(f)$ uses $4$ black box calls to $f$ and both $\FeF(f)$ and $\FeF(f)^{-1}$ are efficiently computable if we know how to efficiently compute $f$ (but not necessarily $f^{-1}$). The quantum security of $\FeF$ was recently proven in \cite{HI19}:
\begin{proposition}[\cite{HI19}]\label{Proposition:Feistel}
	Let $D_1$ be the distribution sampled as follows: $f \Unif \RF^{\zo^n}_{\zo^n}, \textrm{ return } \FeF(f).$  We have
	$ \Delta_q(D_1,\RP^{\zo^{2n}}) \le O(\sqrt{\frac{q^6}{2^{n}}}).$
\end{proposition}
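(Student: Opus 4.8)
The plan is to bound the quantum distinguishing advantage by a triangle inequality through an intermediate target, namely a truly random function on $\zo^{2n}$, and to concentrate all the work on comparing $\FeF(f)$ with that random function. Concretely I would use
\[ \Delta_q\!\left(D_1, \RP^{\zo^{2n}}\right) \le \Delta_q\!\left(D_1, \RF^{\zo^{2n}}_{\zo^{2n}}\right) + \Delta_q\!\left(\RF^{\zo^{2n}}_{\zo^{2n}}, \RP^{\zo^{2n}}\right). \]
The second term is the quantum analogue of the PRF/PRP switching lemma: distinguishing a random function from a random permutation on a set of size $2^{2n}$ costs at most $\OO{\sqrt{q^3/2^{2n}}}$ (this is a standard consequence of the recording/collision machinery of \cite{Zha19}), and since $q^3/2^{2n} \le q^6/2^n$ for every $q \ge 1$, this contribution is already dominated by the claimed bound. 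The entire difficulty is therefore the first term, $\Delta_q(D_1,\RF^{\zo^{2n}}_{\zo^{2n}})$.

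For this main term I would track the random function $f$ with Zhandry's compressed-oracle (recording) technique \cite{Zha19}, maintaining a database $D$ of the input/output pairs of $f$ that the adversary has effectively fixed through its queries to $\FeF(f)$. Each evaluation of $\FeF(f)$ on an input $(L_0,R_0)$ queries $f$ at four successive wire values $R_0=W_1,W_2,W_3,W_4$ determined by the Feistel recursion. The Luby--Rackoff observation \cite{LR88}, transported into the recording picture, is that the output $(L_4,R_4)$ is a uniformly random $2n$-bit string \emph{provided} the wire values $W_1,\dots,W_4$ of the current evaluation are distinct from one another and from all wire values recorded by earlier evaluations. Conditioned on this ``good'' configuration the recorded database produced by $\FeF(f)$ is distributed exactly as the database of a random function on $\zo^{2n}$, so on that part of the state the two oracles are perfectly indistinguishable. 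The distinguishing advantage is thus governed by the total amplitude on which $D$ ever records a collision among internal wire values.

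The heart of the argument, and the step I expect to be hardest, is bounding this ``bad'' amplitude quantumly. Classically one would union-bound over configurations of colliding wires weighted by the $2^{-n}$ probability of each coincidence; quantumly a union bound is unavailable because the queries are in superposition, so I would instead set up a progress (transition-capacity) measure on the compressed oracle, bound the norm of the newly created bad component at each of the $q$ queries, and sum. The combinatorics of the four-round collision events together with the square-root loss intrinsic to this quantum progress argument are what produce the exponent $q^6$, giving $\Delta_q(D_1,\RF^{\zo^{2n}}_{\zo^{2n}}) = \OO{\sqrt{q^6/2^n}}$; the precise accounting is the technical core of \cite{HI19}. It is exactly here that four rounds are essential: at three rounds the internal structure is rigid enough to admit a Simon-type period, so a wire collision can be \emph{forced} with constant amplitude and the bound collapses; the fourth round is what re-randomizes the last wire and keeps the collision amplitude small, and the analysis must exploit this asymmetry rather than treating the four rounds uniformly. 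A secondary wrinkle is that $\FeF$ reuses a single $f$ in all four rounds: since the recording oracle already tracks one function $f$, the bad event is simply a collision among the inputs at which $f$ is queried across rounds and across evaluations, so the same dichotomy applies, but one must verify that reusing $f$ does not create additional cheaply forceable coincidences beyond those already counted.
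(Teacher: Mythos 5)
The paper does not prove this proposition at all: it is imported verbatim from \cite{HI19} (hence the citation in the proposition header) and used downstream as a black box, so there is no internal proof to compare you against. Measured instead against the actual proof in \cite{HI19}, your text reproduces the correct architecture --- a triangle inequality through $\RF^{\zo^{2n}}_{\zo^{2n}}$, followed by a compressed-oracle analysis in which the distinguishing advantage is charged to the amplitude on databases containing internal wire collisions. (For the switching step, you can in fact get the stronger bound $O(q^3/2^{2n})$ from the paper's own Proposition \ref{Proposition:ZhandryLB} with $r = 2^{2n}$, since composing a uniformly random function with a uniformly random permutation yields a uniformly random function.) But the entire quantitative content of the statement --- the per-query bound on newly created bad amplitude, which is where both the $q^6$ and the overall square root come from --- is exactly the step you defer to ``the technical core of \cite{HI19}''. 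A proof whose only hard step is outsourced to the reference being proved is a citation with commentary, not a proof; that is a genuine gap at the heart of the argument.

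Moreover, the deferral does not even land on a matching statement. What you dismiss as a ``secondary wrinkle'' is a real mismatch: \cite{HI19} proves qPRP security of the Luby--Rackoff construction with four \emph{independent} random round functions, whereas the proposition here (and the paper's definition of $\FeF$) reuses a single $f$ in all four rounds. Bridging the two requires an argument you do not supply: either domain separation, setting $f_i(x) \eqdef f(i||x)$ at the cost of two extra input bits and redefining $\FeF$ accordingly (presumably what the paper intends), or a re-run of the collision accounting in which cross-round coincidences --- the same $f$-input arising in different rounds, within or across evaluations --- are included among the bad events. This is not a formality: classically, Feistel networks with identical round functions admit slide attacks that proofs for independent round functions never have to address (these require roughly $2^{n/2}$ queries, beyond the regime where the claimed bound is non-vacuous, so they do not falsify the proposition, but they do show the independent-function analysis does not transfer verbatim). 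So your closing remark that one ``must verify that reusing $f$ does not create additional cheaply forceable coincidences'' names a proof obligation rather than discharging it, and it is left open.
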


\subsection{Quantum lower bounds}\label{Section:SmallRangeFunctions}
We will use a generalization of Grover's lower bound for the search problem. 
\begin{lemma}\label{Lemma:Search}
	Let $X$ and $Y$ be respectively an input set and an output set.  For each $x \in X$, we associate a set $U_x \subseteq Y$ such that $\frac{|U_x|}{|Y|} \le \eps$. For any quantum query algorithm $\aa$ with $|\aa| = (*,q)$, we have 
	$$\Pr[\hh(x) \in U_x :  \hh \Unif \RF^X_Y, x \leftarrow \aa^{\OH}(\cdot)] \le O(q^2\eps).$$
\end{lemma}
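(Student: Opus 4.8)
The plan is to reduce Lemma~\ref{Lemma:Search} to the standard \emph{average-case} (generic) search lower bound for a Boolean oracle whose marked set is sampled with independent per-point probability at most $\eps$. I may assume $\eps < 1$, since otherwise the claimed bound $O(q^2\eps)$ exceeds $1$ and holds trivially. For each $x \in X$ set $\eps_x = |U_x|/|Y| \le \eps$, and for $\hh \in \RF^X_Y$ define the Boolean ``hit'' function $g_\hh : X \to \zo$ by $g_\hh(x) = \mathbbm{1}[\hh(x) \in U_x]$. When $\hh \Unif \RF^X_Y$ the values $\hh(x)$ are independent and uniform in $Y$, so the random function $g := g_\hh$ has independent coordinates with $\Pr[g(x) = 1] = \eps_x \le \eps$, and the target event is exactly $\hh(x) \in U_x \iff g(x) = 1$. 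Thus it suffices to show that no $q$-query algorithm, \emph{even with access to the full function} $\hh$, outputs an $x$ with $g(x) = 1$ with probability more than $O(q^2\eps)$.

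The crux — and the step I expect to be the main obstacle — is to argue that access to the full random value-oracle $\hh$ is no more useful than access to the Boolean hit-oracle $g$; that is, that the extra information carried by $\hh(x)$ beyond whether it lands in $U_x$ cannot help the adversary. I would establish this by a coherent simulation. The simulator draws two \emph{private} random functions $\hh_1$ with $\hh_1(x) \Unif U_x$ and $\hh_0$ with $\hh_0(x) \Unif Y \setminus U_x$ (well defined since $\eps<1$ forces $U_x \neq Y$), and sets $\hh(x) = \hh_{g(x)}(x)$. Conditioned on $g$ this reproduces exactly the uniform distribution of $\hh \in \RF^X_Y$, and $\hh_0,\hh_1$ carry no information about the marks. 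A single query of $\aa$ to $U_\hh$ is then implemented with at most two queries to $\ket{g}$: compute $g(x)$ into an ancilla, controlled on it XOR either $\hh_0(x)$ or $\hh_1(x)$ into the output register (both evaluable for free from the private randomness), then uncompute $g(x)$. This produces an algorithm $\bb^{\ket{g}}$ making at most $2q$ queries whose output satisfies $g(x)=1$ with exactly the probability that $\aa^{\OH}$ satisfies $\hh(x)\in U_x$, averaged over all randomness.

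Finally I would invoke the standard generic (average-case) search lower bound: for a random $g$ with independent marks of probability at most $\eps$, any $q'$-query algorithm $\bb^{\ket{g}}$ outputs a marked point with probability $O(q'^2\eps)$. This is the Generic Search Problem bound, proved by the usual BBBV hybrid / progress-measure argument comparing the run of $\bb$ on $\ket{g}$ with its run on the all-zero oracle and controlling the total query magnitude on the (randomly) marked set; I would cite it rather than re-derive the constants. Applying it to the simulator with $q' \le 2q$ yields
$$\Pr[\hh(x) \in U_x :  \hh \Unif \RF^X_Y,\ x \leftarrow \aa^{\OH}(\cdot)] \le O\big((2q)^2\eps\big) = O(q^2\eps),$$
which is the claim. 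The only genuinely delicate point is the second paragraph: verifying that the conditional-on-$g$ resampling reproduces the uniform $\hh$ distribution and that the controlled write plus uncompute costs only a constant factor in queries; everything else is either bookkeeping or a black-box application of the generic search bound.
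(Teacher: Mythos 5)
Your proof is correct, but it does not follow the paper's route, because the paper gives no proof of this lemma at all: it simply asserts that the statement ``was implicitly stated and proven in \cite[Theorem 21]{Unr17}'' and moves on. What you propose is a genuine, self-contained reduction to the Boolean generic-search bound, which is a different (and more explicit) way of discharging the claim.

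Concretely, your two-stage resampling --- first the indicator $g(x) = \mathbbm{1}[\hh(x) \in U_x]$ with independent Bernoulli($\eps_x$) coordinates, then $\hh(x)$ drawn uniformly from $U_x$ or $Y \setminus U_x$ according to $g(x)$ --- is exactly the chain-rule decomposition of the uniform distribution on $\RF^X_Y$, and your compute/uncompute simulation correctly prices each $U_\hh$ query at two $U_g$ queries, with $\hh_0,\hh_1$ implementable for free since $|\aa| = (*,q)$ counts only queries. This buys verifiability: the lemma then rests on the most canonical average-case search lower bound rather than on a theorem buried in \cite{Unr17}, whose statement is essentially this lemma itself; the paper's citation buys only brevity. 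Three small points to tighten, none of which is a gap. First, the generic-search bound is most often quoted for i.i.d.\ Bernoulli($\eps$) marks, while you need independent marks with per-point probabilities $\eps_x \le \eps$; either cite it in that generality (the BBBV/progress-measure argument only ever uses the per-point upper bound) or reduce to the uniform case by private subsampling, simulating $g(x) = g'(x) \wedge b_x$ with $b_x$ Bernoulli($\eps_x/\eps$), which costs another constant factor in queries and preserves ``marked for $g$ implies marked for $g'$.'' Second, when $U_x = \emptyset$ you should define $\hh_1(x)$ arbitrarily (it is never selected, since then $g(x)=0$ surely); your standing assumption $\eps < 1$ already guarantees $Y \setminus U_x \neq \emptyset$, so $\hh_0$ is well defined. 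Third, your phrase ``conditioned on $g$ this reproduces exactly the uniform distribution'' is stated backwards: conditioned on $g$, $\hh$ is of course not uniform; the correct statement is that the joint law of $(\hh, g)$ under your sampling equals the law of a uniform $\hh$ together with its induced hit-indicator, which is what your per-point computation $\Pr[\hh(x) = y] = 1/|Y|$ establishes. With those wording fixes the argument is complete and yields the claimed $O(q^2\eps)$.
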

The above lemma was implicitly stated and proven in \cite[Theorem21]{Unr17}. Another lower bound that we will use is Zhandry's quantum lower bound on distinguishing a random permutation
from a random function with small range \cite{Zha15}. We fix a set $X$, an integer $r$ such that $[r] \subseteq X$, and define the following distribution $\SRRF^X_r$ on functions in $\RF^X_X$, which can be sampled as follows: 
\begin{itemize} \setlength\itemsep{-0.2em}
	\item Draw a random function $g \Unif \RF^{X}_{[r]}.$	
	\item Draw a random injective function h from $[r]$ to $X$.
	\item Output $h \circ g.$
\end{itemize}
Notice that since we imposed $[r] \subseteq X$, we can consider $g$ as an element of $\RF^X_X$ and choose for $h$ a random permutation in $\RP^X$ which will lead to the same distribution. Also, we can replace $[r]$ with any other set $Y \subseteq X$ with $|Y| = r.$  Zhandry’s lower bound can be stated as follows:

\begin{proposition}[\cite{Zha15}]\label{Proposition:ZhandryLB}
	$\Delta_q(\SRRF^X_r,\RP^X) \le O(\frac{q^3}{r}).$
\end{proposition}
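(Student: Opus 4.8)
The plan is to recognize that the only statistical feature separating a function drawn from $\SRRF^X_r$ from a random permutation is the presence of \emph{collisions}: a permutation in $\RP^X$ is injective, whereas $h \circ g$ with $g \Unif \RF^X_{[r]}$ collapses roughly $|X|^2/r$ pairs of inputs to equal values, since any pair with $g(x)=g(x')$ collides and the injective $h$ preserves this. Hence any distinguisher must essentially detect a collision, and the bound should follow from the quantum collision-detection lower bound, whose characteristic scaling is precisely $O(q^3/r)$.

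Concretely, I would interpolate through the truly random function and split the bound with the triangle inequality,
\[
\Delta_q(\SRRF^X_r, \RP^X) \le \Delta_q(\SRRF^X_r, \RF^X_X) + \Delta_q(\RF^X_X, \RP^X).
\]
The second term is the advantage for noticing that a random function has a collision while a permutation does not; as the per-pair collision rate of a random function on $X$ is $1/|X|$, this is $O(q^3/|X|)$. The first term compares two functions whose inputs collide at rates $1/r$ (small range) versus $1/|X|$ (random function); since $r \le |X|$ the dominant rate is $1/r$, so this term is $O(q^3/r)$. As $1/|X| \le 1/r$, both contributions are absorbed into $O(q^3/r)$, giving the claim.

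The engine for each of the two bounds is a quantum collision lower bound, which I would establish by the polynomial method. For a fixed $q$-query algorithm $\aa$, the probability that $\aa^{\ket{O}}$ outputs $0$ is a real multilinear polynomial of degree at most $2q$ in the indicator variables recording the value of the oracle $O$ at each input. Taking the expectation over the oracle distribution and symmetrizing with respect to the permutation symmetry of inputs and outputs reduces this to a univariate polynomial of degree $\le 2q$ in a single parameter measuring the collision content of the oracle (the expected number of colliding pairs, governed by $1/r$ resp.\ $1/|X|$). Extremal estimates for low-degree polynomials (Markov's inequality, in the style of Aaronson--Shi and Kutin) then force $q = \Om{r^{1/3}}$ for the distinguishing advantage to be constant, which rearranges to the bound $O(q^3/r)$. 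Alternatively, the same estimate can be obtained through Zhandry's compressed-oracle recording technique, tracking the number of recorded collisions.

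The main obstacle is quantitative tightness. A naive argument that a $q$-query algorithm only sees the joint oracle distribution on $2q$ points would suggest a bound of order $q^2/r$, namely the probability that $2q$ sampled bins are not all distinct; this is both too optimistic and provably false, since a Grover-style collision search (the BHT algorithm) already attains constant advantage at $q \approx r^{1/3}$, where $q^2/r$ tends to $0$. Obtaining the correct $q^3/r$ therefore requires the honest polynomial-degree analysis rather than a first-moment collision estimate, and the delicate points are the symmetrization step over the large output alphabet $X$ and the fact that $h \circ g$ is not exactly $r$-to-$1$ but has randomly sized preimage classes, which must be controlled before the univariate polynomial inequality is applied.
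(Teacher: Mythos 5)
You should first be aware that the paper contains no proof of this proposition: it is imported wholesale from \cite{Zha15} (Zhandry's note on the quantum collision and set-equality problems), so there is no internal argument to compare against, and what you have written is an attempted reconstruction of the cited theorem. Your architecture is the right one, and it matches how the result is actually established: interpolate through the truly random function, bound $\Delta_q(\RF^X_X,\RP^X) \le O(q^3/|X|)$ by a collision-type bound (this statement is itself one of the theorems of \cite{Zha15}), and bound the remaining leg $\Delta_q(\SRRF^X_r,\RF^X_X)$ by $O(q^3/r)$. Your observation that a first-moment ``$q^2/r$'' bound is provably false because Brassard--H{\o}yer--Tapp collision search achieves constant advantage at $q \approx r^{1/3}$ is also correct, and it explains why the cubic scaling is forced.

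As a proof, however, the proposal has two genuine gaps, both concentrated in the leg $\Delta_q(\SRRF^X_r,\RF^X_X)$. First, this leg is not a collision-\emph{detection} problem: both oracles have collisions (at per-pair rates $1/r$ and $1/|X|$ respectively), so the reasoning ``any distinguisher must detect a collision'' is unavailable; what is needed is a bound for distinguishing two nonzero collision rates, and the engine known to give it is Zhandry's small-range-distribution theorem, in which the acceptance probability is shown to be a degree-$O(q)$ polynomial in the variable $1/r$, bounded in $[0,1]$ at every point $1/s$, to which a Markov-brothers-type extremal lemma is applied. Your proposed Aaronson--Shi/Kutin symmetrization does not go through as stated: that method requires (essentially) regular, exactly $k$-to-$1$ instances, and its failure for random functions with uneven preimage profiles is precisely the gap that \cite{Zha15} was written to close; invoking ``the quantum collision lower bound for random functions'' as the engine is therefore close to circular. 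Second, $\SRRF^X_r$ relabels through an \emph{injective} $h$ (sampled without replacement), whereas small-range distributions and the symmetrized polynomial argument use i.i.d.\ relabeling; bridging the two by conditioning on ``$h$ is injective'' costs $\binom{r}{2}/|X|$ in statistical distance, which is \emph{not} dominated by $O(q^3/r)$ for all admissible $r$ (take, e.g., $r \approx |X|^{2/5}$ and constant $q$), so this mismatch requires its own query-bounded argument (morally an extra $O(q^3/|X|)$ term). The caveat you do flag, about ``randomly sized preimage classes,'' concerns $g$, not the injectivity of $h$, so this point is left unaddressed. In short: the decomposition is sound, but the two load-bearing lemmas are exactly the content of the result being cited, and the sketched arguments for them would fail as written.
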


 \subsection{The (quantum) random oracle model, and the Small Range Function Instantiation Assumption}
 The Random Oracle Model (ROM) is a strong model where make the assumption that one or several hash functions - which are deterministic and have an explicit description - used in a cryptographic primitive can be modeled as truly random function $\hh$ with only black box access. In the Quantum Random Oracle Model (QROM), we have a quantum black box access to this function meaning we give only access to the unitary 
 $$U_\hh : \ket{x}\ket{y} \rightarrow \ket{x}\ket{y \oplus \hh(x)}.$$
  The (Q)ROM is a quite strong assumption since an explicit deterministic hash function cannot be in all generality a truly random function. In order to prove the security of a signature scheme $\S$ in the (Q)ROM, we need the following statement and assumption
 \begin{statement}\label{Statement:QROM}
 	$\S$ is secure when $1$ or several (hash) functions used in the scheme are modeled as uniformly random functions to which we only have (quantum) black box access which we call random oracles. 
 \end{statement}
\begin{assumption}\label{Assumption:QROM}
	These random oracles can be instantiated ({\ie } replaced) with suitable secure hash functions without harming the security of,$\S$ whose descriptions are publicly known by the adversary.
\end{assumption}

In all generality Assumption \ref{Assumption:QROM} is too strong. First, for any specific hash function $h_0$, it is fairly easy to construct a scheme which is secure with a random oracle but insecure for this function\footnote{\label{footnote:13}We reproduce here the example from \cite{KL14}, exercise 13.2. Consider a signature scheme $\Pi$ that is secure in the standard model and consider the signature $\Pi'$ that outputs the secret key if $\hh(0) = h_0(0)$ and that uses $\Pi$ otherwise. $\Pi'$ will be secure when $\hh$ is modeled as a random oracle but not when it is instantiated with $h_0$.} Moreover, it was shown \cite{CGW04} that it is actually possible to construct a signature scheme $\S$ for which Statement \ref{Statement:QROM} holds but is insecure for \emph{any instantiation} of the random oracle. This result put serious doubt on the validity of the ROM. However, more than $15$ years after this result, there has been no attack on a used cryptographic scheme.

We assume now that we are not in a pathological case\footnote{The term pathological was actually used in this setting by the authors of \cite{CGW04}.}  where the impossibility result of \cite{CGW04} applies. For what hash functions do we expect Assumption \ref{Assumption:QROM} to hold? There is actually no clear answer to this question. In practice, we have that a hash function $h_0$ is a suitable instantiation of the (Q)RO if it meets the following criteria:
\begin{enumerate}
	\item $h_0$ is a secure cryptographic hash function, ${\ie}$ it is preimage resistant, second preimage resistant and collision resistant.
	\item $h_0$ is constructed independently of the rest of the scheme. This vague statement is to avoid such attacks as those presented in Footnote \ref{footnote:13}
\end{enumerate}

Usually, proofs in the ROM prove Statement \ref{Statement:QROM} and do not care about the instantiation. In our work, we need something similar to Assumption \ref{Assumption:QROM} to prove Statement $\ref{Statement:QROM}$ with a tight security reduction for signatures based on {\cao} identification schemes. This is quite non-standard and weaker than a standalone proof of Statement \ref{Statement:QROM} but since in practice, we need Assumption \ref{Assumption:QROM} for the ROM proof to be useful, we argue that this should not harm the practical security of signature schemes for which our results give a tight security proof. More precisely, we will require the following assumption:

\begin{assumption}[Small Range Function Instantiation Assumption]\label{Assumption:QROM2}
	A random oracle from $X$ to $X$ can be replaced with a a random function $f  = (h \circ g) \Unif \SRF^X_r$  without harming too much the the studied scheme (for $r$ large enough), whether its the signature scheme itself or the underlying identification scheme, even if we give access to the structure of $f$, {\ie } quantum black box access to $h$ and $g$.
	
	If $q$ is the number of queries to the random oracle then this security loss should be at most $O(\frac{q^3}{r})$ when we allow $q$ queries to $h$ and $g$.
\end{assumption}

Notice that if we didn't give access to $h$ and $g$, this would just be Proposition \ref{Proposition:ZhandryLB}. Recall that a function $f \Unif \SRF^X_r$ can be written $f = h \circ g$ where $g$ is a random function from $X$ to $[r]$ and $h$ is a random injective function from $[r]$ to $X$, such that we have (quantum) black box access to these $2$ functions. Assumption \ref{Assumption:QROM2} should hold when the adversary knows the description of $f$ which in this case means that he has (quantum) black box access to $h$ and $g$. 
 This assumption is quite milder than Assumption \ref{Assumption:QROM} since $f$ here has very little exploitable structure (only the black box access to $g$ and $h$), which is much less than the structure that exists in explicit deterministic functions. We also argue it has all the properties of a good hash function: it has $r/2$ bits of security against preimage and second-preimage attacks and $r/3$ bits of security against collision attacks\footnote{The preimage and collision security comes from the corresponding security of $g$ while the second-preimage resistance of $f$ comes from the corresponding resistance of $h$ and the injectivity of $g$.}. Moreover, the choice of $h,g$ is random and independent of the signature scheme so a function $f \Unif \SRF_r$ seems like a really secure instantiation choice, as long as we take $r \ge 3\lambda$ where $\lambda$ is the desired number of security bits. 

Another argument for Assumption \ref{Assumption:QROM2} is that if it doesn't hold for a signature scheme $\S$, then it seems arguably hard to trust $\S$ where the RO is instantiated with an explicit function while if we want a proof in the (Q)ROM to be useful for proving practical security, we need to have this trust.

\section{Identification schemes}\label{Section:IdentificationSchemes}
\subsection{First definitions}
An identification scheme $\IS = (\ISKeygen,\ISProver,\ISCheck;M,C,R)$, consists of the following:
\begin{itemize} \setlength\itemsep{-0.2em}
	\item A key generation algorithm $\ISKeygen(1^\lambda) \rightarrow (pk,sk)$.
	\item The prover's algorithm $\ISProver = (P_1,P_2)$ for constructing his messages. We have $P_1(sk) \rightarrow (x,St)$ where $x \in M$ corresponds to the first message and $St$ is some internal state. $P_2(sk,x,c,St) \rightarrow z$ where $c \in C$ is the challenge from the verifier and $z \in R$ the prover's response (second message).
	\item A verification function $\ISCheck(pk,x,c,z)$ used by the verifier that outputs a bit, $0$ corresponds to `Reject' and $1$ to `Accept'.
\end{itemize}
Notice that we  specify in the description of $\IS$ the sets $M,C,R$ corresponding respectively to the first message space, the challenge space and the second message ({\ie} response) space. All the different algorithms presented above are efficient and we will usually omit their running times ({\ie} fix them to $1$), in order to reduce the amount of notations we introduce.  Even though we deal with concrete security parameters in this paper, we keep the notation $\Keygen(1^\lambda)$ with a unary representation of a security parameter $\lambda$ to remind this implicit efficiency requirement. \\ 
We present below more precisely the different steps of an identification scheme. 
\begin{center}\cadre{\begin{center}
		Identification scheme $\IS = (\ISKeygen,\ISProver = (P_1,P_2),\ISCheck;M,C,R)$
	\end{center}
	\textbf{Initialization.} $(pk,sk) \leftarrow \ISKeygen(1^\lambda)$. The prover has $(pk,sk)$ and the verifier $pk$. \\
	\textbf{Interaction.}
	\begin{enumerate}
		\setlength\itemsep{-0.2em}
		\item The prover generates $(x,St) \leftarrow P_1(sk)$ and sends $x \in M$ to the verifier.
		\item The verifier picks $c \Unif C$ and sends $c$ to the prover.
		\item The prover generates $z \leftarrow P_2(sk,x,c,St)$ and sends $z \in R$ to the verifier.
	\end{enumerate}
	\textbf{Verification.} The verifier accepts iff. $\ISCheck(pk,x,c,z) = 1$. \\
}\end{center}
$ \ $ \\ 

We denote by $\IS^{\otimes r}$ the $r$-fold parallel repetition of $\IS$, which consists of the following 
\begin{center} \cadre{\begin{center}
		Identification scheme $\IS^{\otimes r}$ when $\IS = (\ISKeygen,\ISProver = (P_1,P_2),\ISCheck;M,C,R)$
	\end{center}
	\textbf{Initialization.} $(pk,sk) \leftarrow \ISKeygen(1^\lambda)$. The prover $P$ has $(pk,sk)$ and the verifier $V$ has $pk$. \\
	\textbf{Interaction.}
	\begin{enumerate}
		\setlength\itemsep{-0.2em}
		\item $P$ generates $(x^1,St^1),\dots,(x^r,St^r)$ where for each $i \in [r]$, he generates $(x^i,St^i) \leftarrow P_1(sk)$. He then sends $x = x^1,\dots,x^r$ to $V$.
		\item $V$ picks a random $c = c^1,\dots,c^r$ where each $c^i \Unif C$ and sends $c$ to $P$.
		\item $P$ generates $z = (z^1,\dots,z^r)$ where for each $i \in [r]$, $z^i \leftarrow P_2(sk,x^i,c^i,St^i)$ and sends $z$ to $V$.
	\end{enumerate}
	\textbf{Verification.} The verifier $V$ accepts iff. $\forall i \in [r], \ISCheck(pk,x^i,c^i,z^i) = 1$. 
} \end{center} 

Now, let's present the properties we want an identification scheme to verify. The first property we want from an identification scheme is that the verifier accepts if a prover runs the scheme honestly.

\begin{definition}[Completeness]
	An identification scheme $\IS = (\ISKeygen,\ISProver = (P_1,P_2),\ISCheck;M,C,R)$ has perfect completeness if
	\begin{align*}
	\Pr\left[\ISCheck(pk,x,c,z) = 1 \left| \substack{(pk,sk) \leftarrow \ISKeygen(1^\lambda) \\ (x,St) \leftarrow P_1(sk) \\ c \Unif C \\ z \leftarrow P_2(sk,x,c,St) = 1} \right.\right] = 1.
	\end{align*}
\end{definition}

We only consider here perfect completeness but almost perfect completeness where the probability above is very close to $1$  could also be used.

The second property we want is honest-verifier zero-knowledge, meaning that an honest verifier cannot extract any information (in particular about the secret key $sk$), from its interaction with an honest prover. 

\begin{definition}[HVZK]
	An identification scheme $\IS = (\ISKeygen,\ISProver,\ISCheck;M,C,R)$ is $\eps$-HVZK if there exists an efficient simulator $Sim$ such that the $2$ distributions $D_1$ and $D_2$ sampled as follows:
	\begin{itemize}
		\setlength\itemsep{-0.2em}
		\item $D_1 : (pk,sk) \leftarrow \Init(1^\lambda), \ (x,St) \leftarrow P_1(sk), c \xleftarrow{\$} C, z \leftarrow P_2(sk,x,c,St)$, return $(x,c,z)$,
		\item $D_2 : (pk,sk) \leftarrow \Init(1^\lambda), \ (x',c',z') \leftarrow Sim(pk,1^\lambda)$, return $(x',c',z')$,
	\end{itemize}
	have statistical distance\footnote{The statistical distance between $2$ distributions is defined as $\Delta(D_1,D_2) \eqdef \frac{1}{2} \sum_{y} \large|\Pr_{x \leftarrow D_1}[x = y] - \Pr_{x \leftarrow D_2}[x = y]\large|.$}  at most $\eps$.
\end{definition}
Finally, the third property that we require is soundness. We don't want an efficient cheating prover that doesn't know the secret key $sk$ to make the verifier accept. There are different notions of soundness and the interplay between them will play an important role in our proofs. 

\paragraph{Different flavors of soundness.}

We provide here notions of soundness in terms of advantage, which are well suited  when dealing with concrete security bounds. We first define the notion of (computational) soundness advantage for a quantum cheating adversary $\aa$.
\begin{definition}[Quantum soundness advantage]
	Let $\IS = (\ISKeygen,\ISProver,\ISCheck;M,C,R)$ be an identification scheme. For any quantum algorithm (a quantum cheating prover) $\aa = (\aa_1,\aa_2)$, we define
	\begin{align*}
	QADV_{\IS}(\aa) \eqdef \Pr\left[\ISCheck(pk,x,c,z) = 1 \left| \substack{(pk,sk) \leftarrow \Init(1^\lambda) \\  (x,St) \leftarrow \aa_1(pk) \\ c \Unif C \\ z \leftarrow \aa_2(pk,x,c,St)}\right.\right]
	\end{align*}
	and $QADV_{\IS}(t) \eqdef \max_{\substack{\aa = (\aa_1,\aa_2), \\  |\aa_1| + |\aa_2| = t}} \left(QADV_{\IS}(\aa)\right).$
\end{definition}

In the context of identification schemes, we define the quantum $2$-special soundness advantage as follows

\begin{definition}
	Let $\IS = (\ISKeygen,\ISProver,\ISCheck;M,C,R)$ be an identification scheme. For any quantum algorithm $\aa$, we define
	\begin{align*}
	QADV_{\IS}^{\sps{2}}(\aa) \eqdef  \Pr\left[\ISCheck(pk,x,c,z) = 1 \wedge \ISCheck(pk,x,c',z') = 1 \wedge c \neq c' \left| \substack{(pk,sk) \leftarrow \Init(1^\lambda) \\ (x,c,z,c',z') \leftarrow \aa(pk)}\right.\right] 
	\end{align*}
	and $
	QADV_{\IS}^{\sps{2}}(t) \eqdef \max_{\aa : |\aa| = t} \left(QADV_{\IS}^{\sps{2}}(\aa)\right).$
\end{definition}

A small $2$-special soundness advantage means that it is hard for a quantum adversary to construct $2$ valid transcripts $(x,c,z)$ and $(x,c',z')$ with $c \neq c'.$ 
This notion can be extended to $\gamma$-special soundness, where we require more than $2$ transcripts.

\begin{definition}
	Let $\IS = (\ISKeygen,\ISProver,\ISCheck;M,C,R)$ be an identification scheme. For any quantum algorithm $\aa$, we define
	\begin{multline*}
	QADV_{\IS}^{\sps{\gamma}}(\aa) \eqdef \Pr\Big[\forall j \in [\gamma], \ 
	\ISCheck(pk,x,c_j,z_j) = 1 \ \wedge \\ \left(c_1,\dots,c_\gamma \textrm{ are pairwise distinct}\right) \Big| \substack{(pk,sk) \leftarrow \Init(1^\lambda) \\ (x,c_1,\dots,c_\gamma,z_1,\dots,z_\gamma) \leftarrow \aa(pk)}
	\Big]
	\end{multline*}
	and $
	QADV_{\IS}^{\sps{\gamma}}(t) \eqdef \max_{\aa : |\aa| = t} \left(QADV_{\IS}^{\sps{\gamma}}(\aa)\right).$
\end{definition}

\subsection{The Fiat-Shamir transform for identification schemes}\label{Section:Fiat-Shamir}

The Fiat-Shamir transform \cite{FS86} is a major cryptographic construction that converts any $\Sigma$-protocol, in our case any identification scheme into an non-interactive protocol. The idea is to use a hash function $\hh : M \rightarrow C$, and to replace the verifier's challenge $c \in C$ by the string $\hh(x)$ where $x$ is the prover's first message. Since the prover can compute $\hh(x)$ himself, there is no need for interaction anymore. For any identification scheme $\IS$, we denote by $\FSIS$ its Fiat-Shamir transform, for a fixed function $\hh.$ \\ \\
\cadre{\begin{center}
		Running $\FSIS$ for an identification scheme $\IS = (\ISKeygen,\ISProver,\ISCheck;M,C,R)$
	\end{center}
	\textbf{Initialization.} $(pk,sk) \leftarrow \Keygen(1^\lambda)$. The prover $P$ has $(pk,sk)$ and the verifier $V$ has $pk$. \\
	\textbf{One-way communication.} $P$ generates $(x,St) \leftarrow P_1(sk)$, computes $c = \hh(x)$ and generates $z \leftarrow P_2(sk,x,c,St)$. He sends the pair $(x,z)$ to the verifier. \\
	\textbf{Verification.} The verifier accepts iff. $\ISCheck(pk,x,\hh(x),z) = 1$. \\
} $ \ $ \\

The Fiat-Shamir transform is very useful as it can be used (among other things) to construct signature schemes from identification schemes. The quantum Fiat-Shamir advantage for $\FSIS$ is defined as follows:

\begin{definition}
	Let $\IS = (\ISKeygen,\ISProver,\ISCheck;M,C,R)$ be an identification scheme and $\FSIS$ its Fiat-Shamir transform. Let $\aa$ be a quantum query algorithm. We define 
	\begin{align*} QADV_{\FSIS}(\aa^{\OH}) & \eqdef  \Pr\left[V(x,\hh(x),z) = 1 \left| \substack{(pk,sk) \leftarrow \Init(1^\lambda) \\ (x,z) \leftarrow \aa^{\OH}(pk)}\right.\right] \\
	& \end{align*} 
	and $QADV_{\FSIS}(t,q_\hh)  \eqdef \max_{\aa : |\aa| = (t,q_\hh)} \left(QADV_{\FSIS}(\aa^{\OH})\right).$
\end{definition} 

In the QROM, this function $\hh$ is modeled as a random function to which we only have black box access. In this model, the quantum Fiat-Shamir advantage that we are interested in is 
$$ \E_{\hh \Unif \RF^{M}_C} \left( QADV_{\FSIS}(\aa^{\OH}) \right).$$

\subsection{Signature schemes}
All our technical work is on identification scheme but the finality is to prove the security of signature schemes. We discuss signature schemes and how the security of identification schemes implies the security of signature schemes in Appendix \ref{Appendix:Signatures}.

\subsection{Relating the quantum Fiat-Shamir security to rigid soundness}\label{Section:RigidSoundness}
In this section, we introduce the notion of rigid soundness and relate the quantum Fiat-Shamir security of any identification scheme to this notion. 
Throughout this section, we fix an identification scheme $\IS = (\ISKeygen,\ISProver,\ISCheck;M,C,R).$ 
 We first define the set $VC^\IS_x$ of valid challenges for $x \in M$ as well as the set $VC_{\ge \gamma}^\IS$ of elements having at least $\gamma$ valid challenges for any $\gamma \in \mathbb{N}$:
 \begin{align*}
 VC^\IS_x \eqdef \{c \in C: \exists z \in R, \ISCheck(pk,x,c,z) = 1\} \quad ; \quad VC^\IS_{\ge \gamma} \eqdef \{x \in M : |VC^\IS_x| \ge \gamma\}.
 \end{align*}
 We can now define the quantum $\gamma$-rigid soundness advantage for a quantum algorithm $\aa$ as follows:
 \begin{definition}[Quantum $\gamma$-rigid soundness advantage]
$$
QADV_{\IS}^{\rs{\gamma}}(\aa) \eqdef \Pr\left[x \in VC^{\IS}_\gamma \left| \ \substack{(pk,sk) \leftarrow \Init(1^\lambda) \\ x \leftarrow \aa(pk)} \right.\right] \ ; \ QADV_{\IS}^{\rs{\gamma}}(t) \eqdef \max_{\aa : |\aa| = t} QADV_{\IS}^{\rs{\gamma}}(\aa).$$
\end{definition}
\noindent We now relate the security of the Fiat-Shamir transform to a rigid soundness advantage.

\begin{proposition}\label{Proposition:Generic}
	For any query algorithm $\aa^{\OH}$ with $|\aa^{\OH}| = (t,q_\hh),$ for any integer $\gamma \ge 2$, we have
	$$ \E_{\hh \leftarrow \RF^{M}_C} \left[QADV_{\FS{\IS}}(\aa^{\OH})\right]  \le QADV_{\IS}^{\rs{\gamma}}(t,q_\hh)+ O(\frac{q^2_\hh \gamma}{|C|}).$$
\end{proposition}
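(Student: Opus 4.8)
The plan is to decompose the Fiat-Shamir success event according to how many challenges are valid for the first message that the adversary produces. Recall that $\aa^{\OH}$ outputs a pair $(x,z)$ and succeeds precisely when $\ISCheck(pk,x,\hh(x),z)=1$, which in particular forces $\hh(x)\in VC^\IS_x$. Reading the left-hand side as a single probability over the draw of $\hh$, the key generation, and the internal coins of $\aa$, I would split on the predicate $x\in VC^\IS_{\ge\gamma}$, writing
$$\E_{\hh\leftarrow\RF^M_C}\left[QADV_{\FS{\IS}}(\aa^{\OH})\right] = \Pr\left[\text{succ}\wedge x\in VC^\IS_{\ge\gamma}\right] + \Pr\left[\text{succ}\wedge |VC^\IS_x|<\gamma\right].$$
In both terms the relevant reduction algorithm is the same: let $\bb^{\OH}$ be the query algorithm that runs $\aa^{\OH}(pk)\to(x,z)$ and outputs only its first component $x$, discarding $z$. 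Since $\bb$ merely forwards the oracle calls of $\aa$, it has resources $(t,q_\hh)$.

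For the first term I would reduce directly to $\gamma$-rigid soundness. Whenever the Fiat-Shamir attack succeeds with $x\in VC^\IS_{\ge\gamma}$, the string $x$ output by $\bb^{\OH}$ lies in $VC^\IS_{\ge\gamma}$, so $\bb$ wins the $\gamma$-rigid soundness game. Hence the first term is at most $\E_\hh\!\left[QADV_{\IS}^{\rs{\gamma}}(\bb^{\OH})\right]\le QADV_{\IS}^{\rs{\gamma}}(t,q_\hh)$.

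For the second term I would invoke the search lower bound of Lemma \ref{Lemma:Search}, whose whole point here is that conditioned on $|VC^\IS_x|<\gamma$, succeeding means the random value $\hh(x)$ lands in a set of size at most $\gamma-1$. Concretely, for each fixed $pk$ I would define the capped target family
$$U_x \eqdef \begin{cases} VC^\IS_x & \text{if } |VC^\IS_x|\le\gamma-1,\\ \emptyset & \text{otherwise,}\end{cases}$$
so that $|U_x|/|C|\le(\gamma-1)/|C|$ holds uniformly over all $x\in M$. On the event of the second term one has $\hh(x)\in VC^\IS_x=U_x$, and $\bb^{\OH}$ is exactly a $q_\hh$-query algorithm outputting $x$, so Lemma \ref{Lemma:Search} with $X=M$, $Y=C$ and $\eps=(\gamma-1)/|C|$ bounds this term by $O(q_\hh^2\gamma/|C|)$ for each fixed $pk$, and therefore after averaging over key generation. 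Summing the two bounds gives the claimed inequality.

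The argument is short once this decomposition is set up, and the heavy lifting is done by the already-available search bound; the only point that genuinely needs care is the truncation defining $U_x$. The raw sets $VC^\IS_x$ can be as large as $|C|$ for $x\in VC^\IS_{\ge\gamma}$, so Lemma \ref{Lemma:Search} cannot be applied to them directly; capping at size $\gamma-1$ is harmless precisely because the large-$x$ contributions have already been charged to rigid soundness in the first term, and it is exactly what makes the uniform density bound $\eps=O(\gamma/|C|)$ hold simultaneously for every $x$.
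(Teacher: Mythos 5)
Your proposal is correct and follows essentially the same route as the paper's proof: the same decomposition of the success probability according to whether $x \in VC^\IS_{\ge \gamma}$, with the first term charged to the $\gamma$-rigid soundness advantage and the second bounded via Lemma \ref{Lemma:Search}. Your explicit truncation of $U_x$ to $\emptyset$ when $|VC^\IS_x| \ge \gamma$ is merely a cleaner write-up of what the paper does implicitly by conjoining the event $|VC^\IS_x| \le \gamma - 1$ before invoking the search lower bound.
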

\begin{proof}
	Fix a query algorithm $\aa^{\ket{\hh}}$ with $|\aa^{\OH}| = (t,q_\hh)$ and an integer $\gamma \ge 2$. 
	\begin{align}\label{Eq:S06E01}
	\E_{\hh \leftarrow \RF^{M}_C} [QADV_{\FS{\IS}}(\aa^{\OH})] & =  \Pr_{\hh \leftarrow \RF^{M}_C}\left[\ISCheck(pk,x,\hh(x),z) = 1  \ \left| \ \substack{(pk,sk) \leftarrow \Init(1^\lambda) \\ (x,z) \leftarrow \aa^{\OH}(pk)}\right.\right] = P_1 + P_2 \end{align}
	\begin{align*}
	\textrm{with } \quad P_1 & \eqdef \Pr_{\hh \leftarrow \RF^{M}_C}\left[\ISCheck(pk,x,\hh(x),z) = 1 \wedge (x \in VC^\IS_{\ge \gamma})  \ \left| \ \substack{(pk,sk) \leftarrow \Init(1^\lambda) \\ (x,z) \leftarrow \aa^{\OH}(pk)}\right.\right]\\
	P_2 & \eqdef \Pr_{\hh \leftarrow \RF^{M}_C}\left[\ISCheck(pk,x,\hh(x),z) = 1 \wedge (x \notin VC^\IS_{\ge \gamma})  \ \left| \ \substack{(pk,sk) \leftarrow \Init(1^\lambda) \\ (x,z) \leftarrow \aa^{\OH}(pk)}\right.\right].
	\end{align*}
	$\aa^{\OH}$ runs in time $t$ so the probability that it outputs $x \in VC^\IS_{\ge \gamma}$ is upper bounded by $QADV_{\IS}^{\rs{\gamma}}(t)$ hence $P_1 \le QADV_{\IS}^{\rs{\gamma}}(t).$ If $x \notin VC^\IS_{\ge \gamma}$ then $|VC_{x}^\IS| \le {\gamma - 1}$. Moreover, if $\ISCheck(pk,x,\hh(x),z) = 1$ then $\hh(x) \in VC^{\IS}_x$. Hence:
	$$ P_2 \le  \Pr_{\hh \leftarrow \RF^{M^{n}}_C}\left[\hh(x) \in VC^{\IS}_x \wedge \left(|VC_x^{\IS}| \le (\gamma-1)\right) \ \left| \ \substack{(pk,sk) \leftarrow \Init(1^\lambda) \\ (x,z) \leftarrow \aa^{\OH}(pk)}\right.\right]. $$
	
	We can directly use Lemma \ref{Lemma:Search} with $U_x = |VC^\IS_x|$ and the fact that $\aa^{\OH}$ perform $q_\hh$ queries to $\OH$ to obtain $P_2 \le O(\frac{q_\hh^2 (\gamma-1)}{|C|}) = O(\frac{q_\hh^2 \gamma}{|C|}).$ Putting the bounds on $P_1$ and $P_2$ in Equation \ref{Eq:S06E01}, we obtain the desired result. 
\end{proof}
This proposition can be seen as a generalization of Unruh's relation between the Fiat-Shamir security and a statistical soundness advantage, but we replace this statistical soundness with rigid soundness. While some schemes may naturally have the rigid soundness property, it is not a priori clear how to use Proposition \ref{Proposition:Generic}. As we will see, this proposition will be very useful when studying {\cao} identification schemes, which we now define and discuss. 

\subsection{Commit and open identification schemes}\label{section:3.5}
A {\cao} identification scheme  is a specific kind of identification scheme where, for the first message, $P$ commits to some values $z_1,\dots,z_n$ using some function $G$ and after the verifier's challenge, he reveals a subset of those values. More precisely, a {\cao} identification scheme $\IS = (\ISKeygen,\ISProver,\ISCheck,G;M,C,R,n)$ consists of the following

\begin{itemize}
	\setlength\itemsep{-0.2em}
	\item A key generation algorithm $\ISKeygen(1^\lambda) \rightarrow (pk,sk)$.
	\item A function $G : R \rightarrow M$ that will act as a commitment scheme.
	\item The challenge set $C$ where each $c \in C$ has a corresponding set $I_c\subseteq [n].$
\item The prover's algorithm $\ISProver = (P_1,P_2)$ for constructing his messages. We have $P_1(sk) \rightarrow (x,z)$ where $z = (z_1,\dots,z_n)$ with each $z_i \in R$ and $x = x_1,\dots,x_n = G(z_1),\dots,G(z_n)$ with each $x_i \in M.$ $P_2(z,c)$ outputs $z_{I_c} = \{z_i\}_{i \in I_c}.$
\item A verification function $\ISCheck(pk,c,z_{I_c})$. The verifier also checks that the commitments are valid, $\ie$ for each $i \in I_c$, $G(z_i) = x_i.$
\end{itemize}
Notice that we now denote by $M$ the message space of individual commited values, so the Prover sends actually an element in $M^n.$ Notice also that in the above verification function, we require $\ISCheck$ to be independent of $x$, and we check the validity of the commitment separately.  All the real  identification schemes we will consider have this property. \\ \\  
\cadre{\begin{center}
		{\Cao} Identification scheme $\IS = (\ISKeygen,\ISProver,\ISCheck,G;M,C,R,n)$
	\end{center}
	\textbf{Initialization.} $(pk,sk) \leftarrow \Keygen(1^\lambda)$. The prover has $(pk,sk)$ and the verifier $pk$. \\
	\textbf{Interaction.}
	\begin{enumerate}
		\setlength\itemsep{-0.2em}
		\item $P$ generates $(z_1,\dots,z_n,G(z_1),\dots,G(z_n)) \leftarrow P_1(sk)$ and sends $x_1,\dots,x_n = G(z_1),\dots,G(z_n)$ to the verifier.
		\item The verifier sends a random $c \Unif C$ that corresponds to a subset $I_c \subseteq [n].$
		\item $P$ sends $z_{I_c}$ to the verifier.
	\end{enumerate}
	\textbf{Verification.} The verifier accepts iff. 
	$ \left(\forall i \in I_c, G(z_i) = x_i\right) \wedge \ISCheck(pk,c,z_{I_c}) = 1.$  \\
}

\paragraph{The Quantum Random Oracle Model for {\cao} identification schemes.}


We will use again the QROM for the commitment function, and model the function $G$ as a random function in  $\RF^{R}_M.$ We will write $\IS_G = (\ISKeygen,\ISProver,\ISCheck,G;M,C,R,n)$ to specify the commitment function used in the subscript of $\IS$. The quantum  Fiat-Shamir advantage therefore becomes
$$ \E_{\substack{\hh \Unif \RF^{M^n}_C \\ G \Unif \RF^{R}_M}} \left[QADV_{\FS{\IS_G}}(t,q_\hh,q_G)\right],$$
where $q_G$ is the number of queries to the unitary $U_G.$

For {\cao} identification schemes, we define $2$ variants of $\gamma$-special soundness. These variants have the nice property that they are independent of the commitment function used, which is not the case for special soundness.  We first define output special soundness

\begin{definition}
	Let $\IS_G = (\ISKeygen,\ISProver,\ISCheck,G;M,C,R,n)$ be a {\cao} identification scheme. For any quantum query algorithm $\aa$, we define 
	\begin{align*} QADV_{\IS_G}^{\osp{\gamma}}(\aa) \eqdef \Pr[|\{c : \ISCheck(pk,c,z_{I_{c}}) = 1\}| \ge \gamma  :  (pk,sk) \leftarrow \Init(1^\lambda), z \leftarrow \aa(pk) 
	\Big]. \end{align*}
	where $z = (z_1,\dots,z_n).$ We also define $QADV_{\IS_G}^{\osp{\gamma}}(t) \eqdef \max_{\aa : |\aa| = t} \left(QADV_{\IS_G}^{\osp{\gamma}}(\aa)\right)$
\end{definition}
The idea of output special soundness is that we can generate $z$ (and $x = G(z_1,\dots,z_n)$ such that there exist $\gamma$ valid triplets $(x,c_1,z_{I_1}),\dots,(x,c_\gamma,z_{I_\gamma})$ for pairwise distinct challenges $c_1,\dots,c_\gamma$. However, the adversary here doesn't need to output these challenges. This notion is incomparable with $\gamma$-special soundness.

The second notion is the $\gamma$-special+ soundness which is the same as above but the adversary has to output the associated challenges. 

\begin{definition}
	Let $\IS_G = (\ISKeygen,\ISProver,\ISCheck,G;M,C,R,n)$ be a {\cao} identification scheme. For any quantum query algorithm $\aa$, we define 
	\begin{multline*} QADV_{\IS_G}^{\spp{\gamma}}(\aa) \eqdef \Pr\Big[\left(\forall i \in [\gamma],  \ISCheck(pk,c_i,z_{I_{c_i}}) = 1\right) \wedge \textrm{ the } c_i \textrm{ are pairwise distinct } : \\  (pk,sk) \leftarrow \Init(1^\lambda), (z,c_1,\dots,c_\gamma) \leftarrow \aa(pk) 
	\Big]. \end{multline*}
	where $z = (z_1,\dots,z_n).$ We also define $QADV_{\IS_G}^{\spp{\gamma}}(t) \eqdef \max_{\aa : |\aa| = t} \left(QADV_{\IS_G}^{\spp{\gamma}}(\aa)\right)$
\end{definition}
This definition is also independent of the commitment used in $\IS$. As the name suggests, this notion is stronger than $\gamma$-special soundness in the sense that $QADV_{\IS_G}^{\spp{\gamma}}(t) \le QADV_{\IS_G}^{\sp{\gamma}}(t).$ This comes from the from an adversary a generating $(z,c_1,\dots,c_\gamma)$ that breaks the $\gamma$-special+ soundness property, we can construct explicitly  $\gamma$ valid triplets $(x,c_1,z_{I_1}),\dots,(x,c_\gamma,z_{I_\gamma})$ with $x = (G(z_1),\dots,G(z_n))$ and the challenges are pairwise distinct, which breaks the $\gamma$-special soundness property. 

We are now ready to jump in the proofs of our theorems. 
\section{The quantum Fiat-Shamir security of {\cao } identification schemes}

\subsection{Overview of our theorems and proof strategy}
Our main theorems are the following:

\begin{theorem}\label{Theorem:ParallelRepetition}
	Let $\IS_G = (\ISKeygen,\ISProver,\ISCheck,G;M,C,R,n)$ be a {\cao} identification scheme where $G$ is modeled as random oracle. Let also $\gamma \ge 2$ be an integer. Using Assumption \ref{Assumption:QROM2}, we have for any $t,q_\hh,q_G:$ 
	$$\E_{\substack{\hh \Unif \RF^{M^n}_C \\ G \Unif \RF^{R}_M}} \left[QADV_{\FS{\IS^{\otimes r}_G}}(t,q_\hh,q_G)\right] \le QADV_{\IS_G}^{\spp{\gamma}}\left(t',q_\hh\right) + O\left(\frac{q_\hh^2 (\gamma-1)^r}{|C|^r}\right) + O_n\left(\frac{(q_G+q_\hh)^3}{|M|}\right).$$
	with $t' = O_n(t) + nr + n|C|.$
\end{theorem}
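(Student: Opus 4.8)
The plan is to chain three ingredients: Proposition~\ref{Proposition:Generic} (reducing Fiat--Shamir security to rigid soundness), a replacement of the commitment oracle $G$ by an \emph{efficiently invertible} permutation, and a tight reduction from rigid soundness to $\gamma$-special+ soundness. The whole argument rests on the observation, made right after its definition, that $QADV^{\spp{\gamma}}_{\IS_G}$ is independent of the commitment function $G$, so any transformation of $G$ is free on the right-hand side.

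First I would fix $G$ and apply Proposition~\ref{Proposition:Generic} to the parallel-repeated scheme $\IS^{\otimes r}_G$, whose challenge space is $C^r$; its proof is insensitive to the presence of the extra fixed oracle $G$, since the search lower bound only sees the $\hh$-queries. Because $|VC^{\IS^{\otimes r}_G}_x| = \prod_{i=1}^r |VC^{\IS_G}_{x^i}|$ for $x = (x^1,\dots,x^r)$, I take the rigid-soundness threshold to be $(\gamma-1)^r+1$, obtaining, on average over $\hh$,
\[
\E_{\hh}\!\left[QADV_{\FS{\IS^{\otimes r}_G}}(\aa)\right] \le QADV^{\rs{(\gamma-1)^r+1}}_{\IS^{\otimes r}_G}(t,q_\hh,q_G) + O\!\left(\frac{q_\hh^2 (\gamma-1)^r}{|C|^r}\right),
\]
which produces the Grover term. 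The threshold is chosen so that, by pigeonhole, $\prod_i |VC^{\IS_G}_{x^i}| > (\gamma-1)^r$ forces some coordinate $x^i$ to admit at least $\gamma$ valid challenges.

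Second, and this is the crux, I would bound $\E_G\big[QADV^{\rs{(\gamma-1)^r+1}}_{\IS^{\otimes r}_G}\big]$ by $\gamma$-special+ soundness without passing through computational soundness, thereby avoiding the lossy Equation~\ref{Equation:S02E02}. The difficulty is that rigid soundness only asserts the \emph{existence} of $\gamma$ valid openings for some coordinate, whereas a special+ adversary must \emph{produce} them, having only black-box access to $G$. I would resolve this by replacing the random oracle $G$ with an invertible permutation, after embedding $R$ and $M$ into an ambient set $X$ with $|X|\ge |M|^2$: (i) by Assumption~\ref{Assumption:QROM2}, replace $G$ with a small-range function in $\SRRF^X_{|M|}$ at cost $O_n\big((q_G+q_\hh)^3/|M|\big)$, where the queries to $\hh$ must be counted because the commitments are consumed only through $c=\hh(x)$; (ii) by Proposition~\ref{Proposition:ZhandryLB}, replace it with a uniform permutation $\sigma\Unif\RP^X$ at the same cost; (iii) realize $\sigma$ as a four-round Feistel network $\FeF(f)$ for uniform $f$ via Proposition~\ref{Proposition:Feistel}, whose cost $O\big(\sqrt{(q_G+q_\hh)^6/|X|}\big)$ is absorbed into the previous term precisely because $|X|\ge|M|^2$. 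This is where the Small Range Function Instantiation Assumption is used, and it is not needed again.

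Finally, I would build the special+ adversary $\aa'$. Since $\FeF(f)$ and $\FeF(f)^{-1}$ are efficiently computable from $f$, and a uniform $f$ can be simulated efficiently and exactly against $q$-query adversaries (via a $2q$-wise independent function, or in practice $\KMAC$), $\aa'$ simulates both $\hh$ and $\wpi=\FeF(f)$ while retaining the power to invert $\wpi$. Running the rigid-soundness adversary yields $x$; $\aa'$ computes the unique openings $z^i_j=\wpi^{-1}(x^i_j)$, locates a coordinate $i^\ast$ with at least $\gamma$ valid challenges by testing $\ISCheck(pk,c,z^{i^\ast}_{I_c})=1$ over $c\in C$, and outputs $z^{i^\ast}$ together with $\gamma$ such challenges, breaking $\gamma$-special+ soundness. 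The overhead is $nr$ inversions and $n|C|$ checks, giving $t'=O_n(t)+nr+n|C|$. As $QADV^{\spp{\gamma}}_{\IS_G}$ does not depend on the commitment, this bounds the permutation-scheme rigid-soundness advantage by $QADV^{\spp{\gamma}}_{\IS_G}(t',q_\hh)$, and collecting the three error terms gives the theorem. I expect step two to be the main obstacle: keeping the substituted permutation simultaneously indistinguishable from the true random oracle with loss only $O_n((q_G+q_\hh)^3/|M|)$ \emph{and} efficiently invertible, so that $\aa'$ stays efficient and the reduction remains tight.
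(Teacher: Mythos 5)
Your ingredients are the right ones---Proposition~\ref{Proposition:Generic} at threshold $(\gamma-1)^r+1$, the pigeonhole argument, an invertible Feistel-based permutation, commitment-independence of $\gamma$-special+ soundness, and an extractor that inverts all $nr$ commitments and tests all challenges---but you apply them in the wrong order, and the step you yourself flag as the crux is where the proof breaks. You invoke Proposition~\ref{Proposition:Generic} first, while $G$ is still a random oracle, and then try to swap $G$ for a small-range function and then for a permutation \emph{inside the rigid-soundness advantage}. All of the swapping tools (Assumption~\ref{Assumption:QROM2}, Proposition~\ref{Proposition:ZhandryLB}, Proposition~\ref{Proposition:Feistel}) are indistinguishability statements: they transfer the value of a game only if its win predicate can be decided by a query-bounded procedure. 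The Fiat--Shamir game is of this type (the forger must output openings, and checking $G(z_i)=x_i$ costs $n$ classical queries), but the rigid-soundness game is not: deciding $x \in VC^{\IS_G}_{\ge \gamma}$ asks whether valid openings \emph{exist} under $G$, which cannot be decided with few queries to $G$. Hence no loss of order $O_n((q_G+q_\hh)^3/|M|)$ can be justified for your step (ii).

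In fact the inequality you need there is false in general. Take $n=1$, $I_c = \{1\}$ for all $c$, and $\ISCheck(pk,c,z)=1 \Leftrightarrow \phi(z)=c$ for a fixed balanced surjection $\phi : R \to C$, with $|R| \ge |M|\,|C|^2$. Under a random permutation $\sigma$, each $x$ has a unique preimage, so $|VC^{\IS_\sigma}_x| = 1$ and the rigid-soundness advantage is $0$ for every threshold at least $2$; under $G \Unif \RF^{R}_M$ (or a small-range function), a typical image point has about $|R|/|M|$ preimages whose $\phi$-values cover all of $C$, so outputting $x = G(z)$ for an arbitrary $z$ wins with probability close to $1$. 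Thus $\E_G\bigl[QADV^{\rs{\gamma}}_{\IS_G}\bigr] \approx 1$ while $\E_\sigma\bigl[QADV^{\rs{\gamma}}_{\IS_\sigma}\bigr] = 0$, and the direction you need fails maximally. (The theorem itself survives: for this scheme the Fiat--Shamir advantage really is only Grover-sized, precisely because the FS forger must \emph{output} an opening---the requirement your reordering discards too early.) The paper performs the replacements in the opposite order for exactly this reason: the chain $G \to \sigma \to \wpi_K$ is carried out at the Fiat--Shamir level, where the $\hh_\pi$ re-randomization trick of Lemma~\ref{Lemma:AddPermutation} together with Assumption~\ref{Assumption:QROM2} and Proposition~\ref{Proposition:Feistel} legitimately applies, and only then, with the commitment an explicit efficiently invertible $\wpi_K$, is Proposition~\ref{Proposition:Generic} applied and the extractor built (Proposition~\ref{Proposition:Step3Parallel}). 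Restoring that ordering repairs your argument; as written, your step two is not fixable by any sharper indistinguishability bound.
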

One can then use $QADV_{\IS_G}^{\spp{\gamma}}\left(t',q_\hh\right) \le QADV_{\IS_G}^{\sp{\gamma}}\left(t',q_\hh\right)$ in order to get a bound in terms of $\gamma$-special soundness.
\begin{theorem}\label{Theorem:Generic}
	Let $\IS_G = (\ISKeygen,\ISProver,\ISCheck,G;M,C,R,n)$ be a {\cao} identification scheme with $G \Unif \RF^R_M$. Let also $\gamma \ge 2$ be an integer. Using Assumption \ref{Assumption:QROM2}, we have for any $t,q_\hh,q_G:$
	\begin{align*} \E_{\substack{\hh \Unif \RF^{M^n}_C \\ G \Unif \RF^{R}_M}} \left[QADV_{\FS{\IS_G}}(t,q_\hh,q_G)\right] \le QADV_{\IS_G}^{\osp{\gamma}}\left(O_n(t),q_\hh\right) + O\left(\frac{q_\hh^2 \gamma}{|C|}\right) +  O_n\left(\frac{(q_G + q_\hh)^3}{|M|}\right).
	\end{align*}
\end{theorem}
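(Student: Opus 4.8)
The plan is to compose Proposition~\ref{Proposition:Generic}, which already reduces the Fiat--Shamir advantage of \emph{any} identification scheme to a $\gamma$-rigid soundness advantage, with a dedicated reduction from the $\gamma$-rigid soundness of the {\cao} scheme $\IS_G$ to its $\gamma$-output special soundness. The middle error term comes out directly of Proposition~\ref{Proposition:Generic}: applying it to $\IS_G$ for each fixed commitment $G$ (treating the quantum oracle $U_G$ as hardwired into the adversary, so that the search argument of Lemma~\ref{Lemma:Search} over $\hh$ is unaffected) and averaging over $G$ yields
\begin{align*}
\E_{\hh,G}\!\left[QADV_{\FS{\IS_G}}(t,q_\hh,q_G)\right] \le \E_{G}\!\left[QADV_{\IS_G}^{\rs{\gamma}}(t,q_\hh,q_G)\right] + O\!\left(\frac{q_\hh^2\gamma}{|C|}\right).
\end{align*}
It then remains to bound the averaged rigid soundness advantage by $QADV_{\IS_G}^{\osp{\gamma}}(O_n(t),q_\hh)$ plus the commitment term $O_n((q_G+q_\hh)^3/|M|)$.

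This last reduction is the heart of the argument, and the obstruction is structural: a $\gamma$-rigid adversary only produces commitments $x=(x_1,\dots,x_n)$ admitting at least $\gamma$ valid challenges, possibly exploiting collisions of the random oracle $G$ to open the same $x_i$ to different preimages for different challenges, whereas an $\osp{\gamma}$ adversary must fix a single explicit tuple $z=(z_1,\dots,z_n)$ that simultaneously answers $\gamma$ challenges. I would bridge this by replacing the random commitment oracle $G$ with an efficiently invertible permutation, in three chained steps. First, using Assumption~\ref{Assumption:QROM2} (the only place the assumption enters), I replace $G$ by a small-range function $f\in\SRF_r$; second, by Zhandry's bound (Proposition~\ref{Proposition:ZhandryLB}) I replace $f$ by a truly random permutation $\sigma$; third, since the reduction must actually compute preimages, I replace $\sigma$ by the Feistel permutation $\wpi_K=\FeF(\cdot)$ of Proposition~\ref{Proposition:Feistel}, which is evaluable and invertible once $K$ is known. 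Applying the Feistel bound forces an inflation of the commitment space by a constant factor in bit-length (hence the $O_n$), and I would choose $r$ and the inflation so that the three replacement errors all collapse into $O_n((q_G+q_\hh)^3/|M|)$, the count $q_G+q_\hh$ arising because the commitments are queried both directly and implicitly through $c=\hh(x)$.

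Once $G$ has been turned into the invertible bijection $\wpi_K$, the reduction is immediate and reveals why output special soundness is exactly the right target. I build an adversary $\B$ that runs the rigid adversary $\aa$, simulating its commitment oracle internally by $\wpi_K$ (so $\B$ makes no external $G$ queries) and forwarding $\hh$ honestly; when $\aa$ outputs $x$ with at least $\gamma$ valid challenges, $\B$ inverts each coordinate, $z_i=\wpi_K^{-1}(x_i)$, and outputs $z=(z_1,\dots,z_n)$. Since $\wpi_K$ is a bijection, each $x_i$ has a unique preimage, so for every challenge $c$ valid for $x$ the forced opening is exactly $z_{I_c}$ and $\ISCheck(pk,c,z_{I_c})=1$; the single tuple $z$ thus witnesses at least $\gamma$ valid challenges and breaks $\gamma$-output special soundness. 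Crucially $\B$ need not know which challenges are valid, which is precisely what distinguishes $\osp{\gamma}$ from $\spp{\gamma}$ and why the running time stays $O_n(t)$ and never acquires a $|C|$-dependent term, in contrast to Theorem~\ref{Theorem:ParallelRepetition}.

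The step I expect to be hardest is the first replacement. A random function and a random permutation are provably distinguishable by collision finding, so $G$ cannot be swapped for a permutation unconditionally; the point is that inside $\IS_G$ the commitment values enter only through $c=\hh(x)$ and the binding check, so the sole advantage an adversary can extract from collisions is a collision in $G$ itself, charged at the rate $O(q^3/|M|)$. The Small Range Function Instantiation Assumption is exactly what legitimizes this charge. Making precise the claim that the advantage is unchanged ``through the eyes of the scheme'', and carefully tracking the query bookkeeping through the space inflation and the Feistel composition so that everything telescopes into the stated $O_n((q_G+q_\hh)^3/|M|)$, is the delicate part of the proof.
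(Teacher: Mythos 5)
Your proposal contains the right ingredients (Proposition~\ref{Proposition:Generic}, the small-range/permutation/Feistel replacements, the final inversion), but the order in which you compose them breaks the proof. You apply Proposition~\ref{Proposition:Generic} \emph{first}, while the commitment is still the random function $G \Unif \RF^R_M$, and then try to swap $G$ for a small-range function, then a random permutation, then $\wpi_K$, \emph{inside the rigid-soundness game}. These swaps cannot be justified there, and the inequality you need is in fact false. Indistinguishability statements (Proposition~\ref{Proposition:ZhandryLB}, the Feistel bound, PRF security) control the output behavior of \emph{query-bounded} algorithms; to transfer them to a security game, the winning event must be decidable with few oracle queries. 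The Fiat-Shamir winning event has this property (one query to $\hh$ and at most $n$ binding checks $G(z_i)=x_i$), but the rigid-soundness event $x \in VC^{\IS}_{\ge \gamma}$ is an existential statement over all preimages of the $x_i$ and is not decidable by any query-bounded procedure. Concretely, the claim $\E_G\big[QADV^{\rs{\gamma}}_{\IS_G}\big] \le QADV^{\osp{\gamma}}_{\IS_G}(O_n(t),q_\hh) + O_n\big((q_G+q_\hh)^3/|M|\big)$ fails badly for the very schemes in scope: when $|R| \gg |M|$ (e.g.\ Stern's scheme), a random many-to-one $G$ gives each $x_i$ roughly $|R|/|M|$ preimages, so with overwhelming probability over $G$ \emph{every} first message admits valid openings for \emph{all} challenges --- the trivial adversary outputting a fixed $x$ already has rigid-soundness advantage close to $1$ --- while your claimed upper bound is negligible. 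Rigid soundness is only a meaningful (hard) notion once the commitment is injective on openings; this is exactly why the paper performs all commitment replacements at the level of the FS game (Propositions~\ref{Proposition:Step0}, \ref{Proposition:Step1} and \ref{Proposition:Step2}) and invokes Proposition~\ref{Proposition:Generic} only at the very end (Proposition~\ref{Proposition:Step3}), when the commitment is already the invertible permutation $\wpi_K$.

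A second, related inversion: you assign Assumption~\ref{Assumption:QROM2} to the step ``random function $\to$ small-range function'' and Zhandry's bound to ``small-range function $\to$ permutation''; the working proof does the opposite. The first step needs no assumption and loses nothing: since the commitment values enter the game only through $\hh$ and the binding checks, post-composing the commitment with a random permutation $\pi$ (so that $\pi \circ G \Unif \SRF^R_{|M|}$) while re-randomizing $\hh$ preserves the FS advantage \emph{exactly} (Lemma~\ref{Lemma:AddPermutation}). It is the second step that is delicate: the reduction there holds separate oracles for $\pi$ and $G$, i.e.\ for the two components of the small-range function, and Proposition~\ref{Proposition:ZhandryLB} only covers distinguishers accessing the composition --- this component access is precisely what forces Assumption~\ref{Assumption:QROM2}. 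Your final step (inverting $\wpi_K$ coordinate-wise, and the observation that $\osp{\gamma}$ is independent of the commitment, which is what lets one return from $\wpi_K$ to $G$) matches the paper and is correct; it is the middle of your argument that must be restructured along the paper's ordering.
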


\paragraph{Proof strategy.} We present here informally our proof strategy. We fix a {\cao} identification scheme $\IS_G = (\ISKeygen,\ISProver,\ISCheck,G;M,C,R,n)$ and a quantum algorithm $\aa$ that wants to break the quantum soundness of $\FS{\IS_G}.$ This algorithm outputs $x = (x_1,\dots,x_n)$  and $z_{I_c}$ such that if we define $c \eqdef \hh(x)$, we have $\ISCheck(pk,c,z_{I_c})  = 1 \wedge \forall i \in I_c, G(z_i) = x_i.$ If $G$ were an easily invertible permutation, we could from $x = (x_1,\dots,x_n)$ extract the full string $z = (G^{-1}(x_1),\dots,G^{-1}(x_n))$. With such a construction, we can fairly directly relate $\gamma$-rigid soundness and $\gamma$-output special soundness and then conclude using Proposition \ref{Proposition:Generic}. However, $G$ is not usually an efficiently invertible random permutation and it can't be if $\IS$ has to be honest verifier zero-knowledge. In order to circumvent this issue, we perform the $4$ following steps:
\begin{enumerate} \setlength\itemsep{-0.2em}
	\item We transform $\IS$ into $\wIS$ in order to artificially increase the size of $R$. This will allow us to work with larger functions with which we will be able to construct pseudorandom permutations using Feistel networks. 
	\item We start from $G \Unif \RF^R_M$ as our commitment and show that we can replace $G$ with a random permutation $\sigma \in \RP^R.$ 
	\item We now have a random permutation $\sigma$ as our commitment. We show here how to replace $\sigma$ with a random element from a quantum pseudorandom permutation family $\{\wpi_K\}_K$ that is easily invertible. We construct this family using Feistel networks. Of course, we don't mean here that the full identification scheme is secure with this transformation (it is not because it isn't zero-knowledge) but we show the soundness property remains with this transformation. 
	\item Now, that we have an easily invertible permutation, we relate the quantum Fiat-Shamir advantage to the special+ (or output special) soundness advantage of $\wIS$. We can then go back to $\IS$ since the two soundness advantage notions we consider are independent of the commitment used and are the same for $\IS$ and $\wIS$, which allows us to finish the proof. It is only step $4$ that differes for Theorems $\ref{Theorem:ParallelRepetition}$ and $\ref{Theorem:Generic}.$  
\end{enumerate}
We now present these $4$ steps in the next $4$ subsections.

\subsection{Step 1: Transforming $\IS$ into $\wIS$}\label{Section:Step0}
We start from a {\cao} identification $\IS_G = (\ISKeygen,\ISProver,\ISCheck,G;M,C,R,n)$. We consider the smallest set $R'$ of the form $\zo^{2m}$ with $m \ge 2048$ such that $R \subseteq R'$ and $M \subseteq R'$\footnote{To do this, we increase $m$ so that $|M|,|R| \le 2^{2m}$. If this doesn't give us the inclusions then we can relabel the elements of $M$ and $R$ so that they are included in $\zo^{2m}.$}. With this artificial increase of $R$, we consider a commitment function $G' : R' \rightarrow M$. The idea is that instead of committing to each $z_i \in R$ using the string $G(z_i)$, we commit to these strings via the string $G'(z_i||0\dots0),$ where $z_i || 0\dots 0 \in R'.$

We consider $\wIS_{G'} = (\ISKeygen,\ISProver,\widetilde{\ISCheck},G';M,C,R',n)$ that is derived from $\IS_G$ where we changed the space $R$ into $R'$ (and accordingly the function $G$ into $G'$), as well as $\widetilde{\ISCheck}$ which is defined as follows:
$$
\widetilde{\ISCheck}(pk,c,z'_{I_c}) = 1 \Leftrightarrow \left(\forall i \in I_c, \ z'_i = z_i || 0\dots 0 \textrm{ for some } z_i \in R \right) \wedge \ISCheck(pk,c,z_{I_c}) = 1.  
$$
We prove the following proposition
\begin{proposition}\label{Proposition:Step0}
	For any hash function $\hh$, for any $t,q_G,q_\hh$, we have
	$$
	\E_{G \Unif \RF^{R}_M} \left[QADV_{\FS{\IS_G}}(t,q_\hh,q_G)\right] \le 
	\E_{G' \Unif \RF^{R'}_M} \left[QADV_{\FS{\wIS_{{G'}}}}(t,q_\hh,q_G)\right].
	$$
\end{proposition}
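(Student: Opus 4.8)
The plan is to exhibit a simulation reduction that turns any adversary against $\FS{\IS_G}$ into one against $\FS{\wIS_{G'}}$ of at least equal advantage, together with a coupling of the commitment oracles that makes the two averages line up. First I would fix the injection $\phi : R \to R'$ given by $\phi(z) = z||0\dots0$, and observe that it couples the randomness of the two oracles: if $G' \Unif \RF^{R'}_M$, then its restriction $G := G' \circ \phi$ is distributed exactly as a uniform $G \Unif \RF^R_M$, since $\phi$ is injective and hence the values $\{G'(\phi(z))\}_{z\in R}$ are i.i.d.\ uniform on $M$. Under this coupling the left-hand side equals $\E_{G' \Unif \RF^{R'}_M}\left[QADV_{\FS{\IS_{G'\circ\phi}}}(t,q_\hh,q_G)\right]$, so it suffices to prove the pointwise-in-$G'$ inequality $QADV_{\FS{\IS_{G'\circ\phi}}}(t,q_\hh,q_G) \le QADV_{\FS{\wIS_{G'}}}(t,q_\hh,q_G)$ and then take expectation over $G'$.

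For the pointwise inequality I would fix $G'$ and write $G = G'\circ\phi$. Given any query algorithm $\aa$ with $|\aa^{\ket{\hh},\ket{G}}| = (t,q_\hh,q_G)$ against $\FS{\IS_G}$, I build $\aa'$ with quantum access to $\hh$ and $G'$ that runs $\aa$ internally: queries to $\hh$ are forwarded unchanged, and each query $U_G$ on a register $\ket{z}\ket{y}$ is implemented by padding the input register with fixed $\ket{0\dots0}$ qubits to form $\ket{\phi(z)}\ket{y}$ and applying $U_{G'}$. Since $G'(\phi(z)) = G(z)$, this simulates $U_G$ perfectly at the cost of exactly one $G'$-query, and the padding is a fixed relabelling of wires, so $|\aa'| = (t,q_\hh,q_G)$ as well. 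When $\aa$ halts with a forgery $(x, z_{I_c})$ where $c=\hh(x)$, the algorithm $\aa'$ outputs $(x, z'_{I_c})$ with $z'_i = \phi(z_i) = z_i||0\dots0$ for $i \in I_c$.

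It then remains to check that success is preserved event by event, under a coupling of $(pk,sk)\leftarrow\Init$ and the internal randomness of $\aa$ and $\aa'$. Whenever $\aa$ wins against $\FS{\IS_G}$ we have, for $c=\hh(x)$, that $G(z_i)=x_i$ for all $i\in I_c$ and $\ISCheck(pk,c,z_{I_c})=1$. Since $\aa'$ sees the same $\hh$ and the same simulated $G$, it produces the same $x$ and hence the same $c$; thus $G'(z'_i)=G'(\phi(z_i))=G(z_i)=x_i$, so the commitments open correctly under $G'$, and because each $z'_i=z_i||0\dots0$ with $z_i\in R$ together with $\ISCheck(pk,c,z_{I_c})=1$ we get $\widetilde{\ISCheck}(pk,c,z'_{I_c})=1$. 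Hence $\aa'$ wins against $\FS{\wIS_{G'}}$, so $QADV_{\FS{\wIS_{G'}}}(\aa') \ge QADV_{\FS{\IS_G}}(\aa)$; maximising over admissible $\aa$ (noting $\aa'$ has the same complexity) gives the pointwise bound, and averaging over $G'$ closes the argument. I expect no serious obstacle here: the only points needing care are verifying that the coupling reproduces the uniform marginal on $G$, and that the zero-padding of both the query register and the opening is genuinely free, so that the parameters $(t,q_\hh,q_G)$ match exactly rather than up to constants. The deliberately restrictive definition of $\widetilde{\ISCheck}$, insisting that $z'_i=z_i||0\dots0$, is precisely what makes the forgery map across without any slack.
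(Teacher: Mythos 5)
Your proposal is correct and matches the paper's own proof: the paper likewise defines $C_{G'}(z) = G'(z||0\dots0)$ (your $G'\circ\phi$) to couple the two oracle distributions, simulates each $U_G$ query by zero-padding and one call to $U_{G'}$, and zero-pads the output openings so that $\widetilde{\ISCheck}$ accepts, with identical complexity accounting. The only cosmetic difference is that you reduce to a pointwise-in-$G'$ inequality before averaging, whereas the paper carries the expectation over $G'$ through each step for a fixed adversary; these are the same argument.
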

\begin{proof}
	The proof is fairly simple and we leave it for Appendix \ref{Appendix:Step0}
\end{proof}

\subsection{Step 2: Replacing $G$ with a random permutation}
We prove the second step, which corresponds to the following proposition.
\begin{proposition}\label{Proposition:Step1} Let $\IS = (\ISKeygen,\ISProver,\ISCheck,G;M,C,R,n)$ be a {\cao} identification scheme with $M \subseteq R.$ We have for any $t,q_\hh,q_G$
	$$ \E_{\substack{\hh \Unif \RF^{M^n}_C \\ G \Unif \RF^{R}_M}} \left[QADV_{\FS{\IS_G}}(t,q_\hh,q_G)\right] \le \E_{\substack{\hh \Unif \RF^{M^n}_C \\ \sigma \Unif \RP^{R}}} \left[QADV_{\FS{\IS_{\sigma}}}(t'',q''_\hh,q''_G) + O_n\left(\frac{(q_G + q_\hh)^3}{|M|}\right)\right].$$
	with $t'' = O_n(t), q''_\hh = q_\hh, q''_G = O_n(\max\{q_G,q_\hh\}).$ 
\end{proposition}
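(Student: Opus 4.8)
The plan is to treat the entire Fiat--Shamir experiment as a single quantum algorithm that queries the commitment oracle, and then to swap that oracle from a random function to a random permutation through a short chain of oracle replacements whose total cost is governed by the cubic collision bounds. The starting observation is that the commitment values $G(z_1),\dots,G(z_n)$ enter the experiment only in two places: they are fed into $\hh$ to produce the challenge $c=\hh(x_1,\dots,x_n)$, and they appear in the verifier's consistency checks $G(z_i)=x_i$ for $i\in I_c$. Since $\hh$ is itself a uniformly random oracle and $M\subseteq R$, I would first extend the challenge hash from a random function on $M^n$ to a random function on $R^n$ (which is free, as its restriction to $M^n$ is again uniform) and absorb any relabelling of the commitment range into $\hh$; this reduces the claim to comparing a commitment $g\Unif\RF^R_M$ against a permutation $\sigma\Unif\RP^R$, both used with a hash on $R^n$.

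Next I would invoke the Small Range Function Instantiation Assumption (Assumption \ref{Assumption:QROM2}) to replace the commitment random oracle with a small-range function $f=h\circ g\Unif\SRF^R_{|M|}$, while additionally exposing separate quantum access to the range-reducing map $g\in\RF^R_M$ and the injection $h$ into $R$, at cost at most $O\!\left((q_G+q_\hh)^3/|M|\right)$. The separate access to $h$ and $g$ is precisely the feature that makes the swap work in both directions: it lets one present the original adversary with a bona fide range-$M$ random function through $g$, while the challenge and the verification checks are carried out with the composed oracle $f$ through $h$. I would then apply Zhandry's lower bound in the form of Proposition \ref{Proposition:ZhandryLB}, $\Delta_q(\SRF^R_{|M|},\RP^R)\le O(q^3/|M|)$, to exchange $f$ for a uniformly random permutation $\sigma\Unif\RP^R$, again at cost $O\!\left((q_G+q_\hh)^3/|M|\right)$. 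Collecting the two replacement losses yields the single error term $O_n\!\left((q_G+q_\hh)^3/|M|\right)$.

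For the complexity bookkeeping, the resulting adversary against $\FS{\IS_\sigma}$ is obtained by running the original adversary inside the simulated environment and then performing the $\le n$ commitment queries required for verification; routing each of the $q_\hh$ challenge computations through the $n$-fold commitment structure turns the query counts into $q''_G=O_n(\max\{q_G,q_\hh\})$ and $q''_\hh=q_\hh$, with $t''=O_n(t)$ accounting for the $\mathrm{poly}(n)$ overhead of these simulations. Taking the maximum over adversaries then gives the stated bound with $QADV_{\FS{\IS_\sigma}}(t'',q''_\hh,q''_G)$ on the right.

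I expect the main obstacle to be the \emph{direction} of the replacement rather than the calculations: we are moving from a collision-prone random function to an injective, collision-free permutation, which a priori can only help a cheating prover, so the swap cannot be justified by a one-sided argument. The correct handling requires the indistinguishability to be genuinely two-sided and, crucially, requires the reduction to simulate a true range-$M$ oracle for the original adversary while keeping the challenge and the binding checks consistent with the permutation it actually holds --- and this is exactly why the stronger Small Range Function Instantiation Assumption (with exposed structure $h,g$) is needed here rather than Proposition \ref{Proposition:ZhandryLB} alone. A secondary point of care is checking that extending the hash domain from $M^n$ to $R^n$ and relabelling the commitment range leave the advantage exactly invariant, so that no hidden loss is introduced before the two cubic terms are incurred.
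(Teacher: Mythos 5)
Your overall architecture is the same as the paper's --- first use the randomness of $\hh$ to relabel the commitment range, then trade a small-range commitment for a random permutation --- but you have transposed the two justifications, and the transposition is fatal to the step that actually carries the weight. The relabelling step (your passage from $G \Unif \RF^R_M$ to $f = h\circ g \Unif \SRF^R_{|M|}$ with $h,g$ exposed) needs no assumption and no loss at all: since $\hh$ is uniform, composing it coordinate-wise with the injection $h$ is measure-preserving on the set of hash functions, so the simulation you yourself describe (present $g$ to the original adversary as its commitment oracle, route its hash queries through $h$, relabel its final output) gives an \emph{exact equality} of advantages. This is precisely Lemma \ref{Lemma:AddPermutation} in the paper. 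Note moreover that Assumption \ref{Assumption:QROM2} does not even literally cover this step: it concerns oracles from $X$ to $X$, and it bounds the advantage against the small-range instantiation by the advantage against the ideal object, which is the reverse of the inequality you need there.

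The genuine gap is your final step. After the relabelling, the adversary against $\IS_f$ necessarily holds, and uses, separate quantum access to $h$ and $g$ --- that access is exactly what made the simulation in the previous step possible. Proposition \ref{Proposition:ZhandryLB} bounds $\Delta_q(\SRF^R_{|M|},\RP^R)$ only for distinguishers with black-box access to the \emph{composed} function; it says nothing about adversaries who can query $h$ and $g$ individually, and such an adversary cannot in general be converted into one interacting only with a random permutation $\sigma$: given forward access to $\sigma$ alone, one cannot consistently simulate a decomposition $\sigma = h \circ g$ (this is the indifferentiability-type obstruction discussed around Assumption \ref{Assumption:QROM2}). Removing the exposed structure is exactly the unprovable step, and it is exactly where the paper spends Assumption \ref{Assumption:QROM2}, in the form $\E_{\hh}\E_{G,\pi}\bigl[QADV_{\FS{\IS_{\pi\circ G}}}(\bb^{\OH,\OG,\OPI})\bigr] \le \E_{\hh,\sigma}\bigl[QADV_{\FS{\IS_{\sigma}}}(\cdot)\bigr] + O\bigl((q''_G)^3/|M|\bigr)$. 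If your chain were valid as written, Proposition \ref{Proposition:Step1} --- and with it Theorems \ref{Theorem:ParallelRepetition} and \ref{Theorem:Generic} --- would hold unconditionally, contradicting both the paper and your own closing paragraph, which correctly observes that the exposed structure is why an assumption beyond Proposition \ref{Proposition:ZhandryLB} is unavoidable. The repair is simply to swap your two citations: prove the first replacement exactly with no assumption and no cost, and invoke Assumption \ref{Assumption:QROM2} (not Zhandry's bound) for the second.
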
 
\begin{proof}
	We first show the following lemma, which states that we can replace $G$ with $\pi \circ G$ for any permutation $\pi \in \RP^{R}$.  In order to define $\pi \circ G$, we actually need to extend $G$ to a function with image $R$, which is possible since we considered the case where $M \subseteq R.$
	\begin{lemma}\label{Lemma:AddPermutation}
	 For any permutation $\pi \in \RP^R$, for which we have an efficient black box access, for any fixed $G \in \RF^R_M$ (extended to $G \in \RF^R_R$), there exists a quantum query algorithm $\bb^{\OH,\OG,\OPI}$ of size $|\bb^{\OH,\OG,\OPI}| \eqdef (t',q'_\hh,q'_G,q'_\pi)$ such that 
		$$\E_{\hh \Unif \RF^{M^n}_C }\left[QADV_{\FS{\IS_G}}(t,q_\hh,q_G)\right] = \E_{\hh \Unif \RF^{M^n}_C}\left[QADV_{\FS{\IS_{\pi \circ G}}}(\bb^{\OH,\OG,\OPI})\right].$$
		and $t' = O_n(t), q'_\hh = q_\hh, q'_G = q_G, q'_\pi = O_n(q_\hh).$
	\end{lemma}
	\begin{proof}
		Let $\aa^{\OH,\OG}$ be a quantum query algorithm with $|\aa^{\OH,\OG}| = (t,q_\hh,q_G)$ and $QADV_{\FS{\IS_G}}(\aa^{\OH,\OG}) = QADV_{\FS{\IS_G}}(t,q_\hh,q_G).$  Fix also a permutation $\pi$. For each function $\hh : M^n \rightarrow C$, we define 
		$\hh_{\pi}(x_1,\dots,x_n) \eqdef \hh(\pi(x_1),\dots,\pi(x_n)).$ Notice that if $\hh \Unif \RF^{M^n}_C$ then $\hh_\pi$ is also uniformly random  in $\RF^{M^n}_C$ for any fixed $\pi$. Therefore, we have 
		\begin{align}\label{Eq:S05E01}
		\E_{\hh \Unif \RF^{M^n}_C }\left[QADV_{\FS{\IS_G}}(\aa^{\OH,\OG})\right] = \E_{\hh \Unif \RF^{M^n}_C }\left[QADV_{\textup{FS}^{\hh_\pi}[{\IS_{G}}]}(\aa^{\ket{\hh_\pi},\OG})\right].\end{align}
		We now construct the following algorithm $\bb^{\OH,\OG,\OPI}: (x,z_{I_c}) \leftarrow \aa^{\ket{\hh_\pi},\OG}, \textrm{ return } (\vpi(x),z_{I_c}).$
		where we use the notation ${\vpi}(x) = \pi(x_1),\dots,\pi(x_n).$ The algorithm 
		$\bb^{\OH,\OG,\OPI}$ emulates calls to $U_{\hh_\pi}$, with calls to $U_\hh$ and $U_\pi$, using each time $n$ calls to $U_\pi$ and $1$ call to $U_\hh$. 
		\begin{align}
		\Gamma_1 & \eqdef QADV_{\textup{FS}^{\hh_\pi}[{\IS_{G}}]}(\aa^{\ket{\hh_\pi},\OG}) \nonumber  =  \Pr\left[\ISCheck(pk,c,z_{I_c}) = 1 \wedge \left(\forall i \in I_c, G(z_i) = x_i\right)  \ \left| \ \substack{(pk,sk) \leftarrow \Init(1^\lambda) \\ (x,z_{I_c}) \leftarrow \aa^{\ket{\hh_\pi},\OG} \\ c = \hh_\pi(x)}\right.\right] \nonumber \\
		& = \Pr\left[\ISCheck(pk,c,z_{I_c}) = 1 \wedge \left(\forall i \in I_c, (\pi \circ G)(z_i) = \pi(x_i)\right)  \ \left| \ \substack{(pk,sk) \leftarrow \Init(1^\lambda) \\ (x,z_{I_c}) \leftarrow \aa^{\ket{\hh_\pi},\OG} \\ c = \hh(\vpi(x))}\right.\right] \nonumber \\
		& = \Pr\left[\ISCheck(pk,c,z_{I_c}) = 1 \wedge \left(\forall i \in I_c, (\pi \circ G)(z_i) = \pi(x_i)\right)  \ \left| \ \substack{(pk,sk) \leftarrow \Init(1^\lambda) \\ (\vpi(x),z_{I_c}) \leftarrow \bb^{\OH,\OG,\OPI} \\ c = \hh(\vpi(x))}\right.\right] \nonumber \\
		& = QADV_{\FS{\IS_{\pi \circ G}}}(\bb^{\OH,\OG,\OPI}). \label{Eq:S05E02}
		\end{align}
		Combining Equations \ref{Eq:S05E01} and \ref{Eq:S05E02}, we can conclude
		\begin{align*}
		\E_{\hh \Unif \RF^{M^n}_C }\left[QADV_{\FS{\IS_G}}(\aa^{\OH,\OG})\right] & = \E_{\hh \Unif \RF^{M^n}_C }\left[QADV_{\textup{FS}^{\hh_\pi}[{\IS_{G}}]}(\aa^{\ket{\hh_\pi},\OG})\right] \\
		& =  \E_{\hh \Unif \RF^{M^n}_C} \left[QADV_{\FS{\IS_{\pi \circ G}}}(\bb^{\OH,\OG,\OPI})\right].
		\end{align*}	\qed
	\end{proof}
	We now go back to the proof of Proposition \ref{Proposition:Step1}. The above lemma holds for any $\pi$ and $G$, so we can choose in particular a random function $G$ and random permutation $\pi$, which gives us 
	\begin{align}\label{Equation:3}
	\E_{\substack{\hh \Unif \RF^{M^n}_C \\ G \Unif \RF^{R}_M}} \left[QADV_{\FS{\IS_G}}(t,q_\hh,q_G)\right] & =  \E_{\hh \Unif \RF^{M^n}_C}\E_{\substack{G \Unif \RF^{R}_M \\ \pi \Unif \RP^R}} \left[QADV_{\FS{\IS_{\pi \circ G}}}(\bb^{\OH,\OG,\OPI})\right] \\
	& \le \E_{\substack{\hh \Unif \RF^{M^n}_C \\ \sigma \Unif \RP^R}}\left[QADV_{\FS{\IS_{\sigma}}}(t,q'_\hh,q'_\sigma)\right] +  O\left(\frac{(q'_\sigma)^3}{|M|}\right).
	\end{align}
with $q''_G = \max\{q'_G,q'_\pi\} = O_n(\max\{q_G,q_\hh\})$, where the last inequality comes from Assumption \ref{Assumption:QROM2}.
\end{proof}
\subsection{Step 3: Replacing the random permutation $\sigma$ with an efficiently invertible QPRP}
We assume there exists a family of quantum secure pseudorandom functions $\{f_K\}$ where each $f_K : R \rightarrow R$. We can use for example $f_K = \KMAC_K$. We define 
$\wpi_K \eqdef \FeF(f_K)$. 
\begin{proposition}\label{Proposition:Step2}
	Let $\IS = (\ISKeygen,\ISProver,\ISCheck,G;M,C,R,n)$ be a {\cao} identification scheme with $R = \zo^{2m}$ for some integer $m$. For any fixed $\hh$: 
	$$ \E_{\sigma \Unif \RP^R}\left[QADV_{\FS{\IS_{\sigma}}}(t,q_\hh,q_G)\right] \le \E_{K} \left[QADV_{\FS{\IS_{\widetilde{\pi}_K}}}(t,q_\hh)\right] + O({\frac{q_G^3}{2^{m/2}}}).$$
\end{proposition}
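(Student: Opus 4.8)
The plan is to bridge from the truly random permutation $\sigma$ to the explicit permutation $\wpi_K = \FeF(f_K)$ in two moves: an information-theoretic one using the Feistel security bound of Proposition \ref{Proposition:Feistel}, and a computational one using the quantum pseudorandomness of $\{f_K\}$. First I would fix a query algorithm $\aa^{\OH,\ket{\sigma}}$ with $|\aa| = (t, q_\hh, q_G)$ realizing the Fiat-Shamir advantage against $\IS_\sigma$, and, since $\hh$ is fixed here, fold the entire forgery experiment into a single distinguisher $\Dc^{\ket{\tau}}$ that queries only its permutation oracle $\tau \in \RP^R$: it samples $(pk, sk)$, simulates $U_\hh$ internally, runs $\aa$ to obtain $(x, z_{I_c})$, sets $c = \hh(x)$, and runs $\widetilde{\ISCheck}$ together with the commitment checks $\tau(z_i) = x_i$ for $i \in I_c$, outputting the bit ``forgery accepted.'' The number of queries $\Dc$ makes to $\tau$ is $q_G + O(n)$, the $O(n)$ coming from verifying the at most $n$ revealed commitments. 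By construction $\Pr[\Dc^{\ket{\sigma}} = 1] = QADV_{\FS{\IS_\sigma}}(\aa)$ and $\Pr[\Dc^{\ket{\FeF(f)}} = 1] = QADV_{\FS{\IS_{\FeF(f)}}}(\aa)$, so Proposition \ref{Proposition:Feistel} (with its block parameter equal to $m$, since $R = \zo^{2m}$) gives
$$\E_{\sigma \Unif \RP^R}\!\left[QADV_{\FS{\IS_\sigma}}(\aa)\right] \le \E_{f \Unif \RF^{\zo^m}_{\zo^m}}\!\left[QADV_{\FS{\IS_{\FeF(f)}}}(\aa)\right] + O\!\left(\frac{q_G^3}{2^{m/2}}\right).$$

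Second, I would replace the random $f$ by the keyed $f_K$. Packaging the same experiment yields a distinguisher for $\{f_K\}$ that makes $4(q_G + O(n))$ queries to its function oracle, since each evaluation of $\FeF$ costs four calls to the underlying function; the quantum PRF security of $\{f_K\}$ therefore bounds the resulting gap by its (negligible) advantage, which is why it does not surface as an explicit term. The conceptually important point of this move is the change in bookkeeping: $\wpi_K = \FeF(f_K)$ is a concrete, publicly described, efficiently computable permutation, so the adversary no longer needs a black-box commitment oracle — it evaluates $\wpi_K$ internally, its $q_G$ commitment queries are absorbed into the running time, and the explicit $q_G$ disappears, leaving $QADV_{\FS{\IS_{\wpi_K}}}(t, q_\hh)$ on the right. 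Chaining the two moves yields the claimed inequality.

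The main obstacle I expect is not a new idea but the careful accounting at the two transitions. At the first I must make sure verification is folded into the distinguisher with only $O(n)$ extra oracle calls, so that the Feistel exponent stays cubic in $q_G$ and the denominator is exactly $2^{m/2}$; at the second I must justify that dropping the $q_G$-query budget is legitimate, i.e. that once the commitment is the explicit $\wpi_K$ the adversary genuinely acquires the ability to compute (and, crucially for Step 4, to invert) it locally, which is precisely the efficient invertibility guaranteed by $\FeF$. A secondary subtlety is the interchange of the maximization over adversaries with the expectation over $\sigma$; the cleanest route, as in Proposition \ref{Proposition:Generic}, keeps a fixed forgery algorithm throughout the chain — the Feistel and PRF bounds being uniform over query algorithms of the given size — and only afterwards passes to the soundness advantages, thereby avoiding any expectation/maximum interchange.
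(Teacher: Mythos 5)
Your proposal is correct and follows essentially the same route as the paper: a first swap from $\sigma$ to $\FeF(f)$ via the quantum indistinguishability bound of Proposition \ref{Proposition:Feistel}, followed by a swap from the random $f$ to the keyed $f_K$ via quantum PRF security, with the resulting negligible loss and the absorption of the $q_G$ oracle queries into the running time handled exactly as in the paper. Your version is in fact more explicit about the bookkeeping (the distinguisher construction, the $O(n)$ verification queries, and why the max over adversaries commutes with the expectation over $\sigma$), which the paper's two-line proof leaves implicit.
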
	
\begin{proof}
	Now fix $\hh$. We have 
	\begin{align}	
	\E_{\sigma \Unif \RP^R}\left[QADV_{\FS{\IS_{\sigma}}}(t,q_\hh,q_G)\right]&  \le 
	\E_{f \Unif \RF^{\zo^m}_{\zo^m}}\left[QADV_{\FS{\IS_{\FeF(f)}}}(t,q_\hh,q_G)\right] + O(\sqrt{\frac{q_G^6}{2^{m}}}) \\
	& = \E_{K \Unif \mathcal{K}} \left[QADV_{\FS{\IS_{\widetilde{\pi}_K}}}(t,q_\hh)\right] + O({\frac{q_G^3}{2^{m/2}}}).
	\end{align}
	The first inequality comes from Proposition \ref{Proposition:Feistel} and the second equality comes from our assumption that $f_K$ is a pseudorandom family. 
\end{proof}

When $m \ge 2048$ (this is the value chosen in Step $1$ but it could have been another arbitrary large value), the term $O(\frac{q_G^3}{2^{m/2}})$ will always be tiny and irrelevant for the amounts of security we consider. 

\subsection{Finishing the proof: step 4 and conclusion}
So we managed to replace the commitment function by a permutation $\widetilde{\pi}_K = \FeF(f_K)$ for a randomly chosen $K$. As we described in Section \ref{Section:Preliminaries}, the use of Feistel networks for constructing $\wpi_K$ implies that both $\widetilde{\pi}_K$ and $\widetilde{\pi}^{-1}_K$ are efficiently computable without needing to know how to compute preimages for $f_K$ Our goal in this final step is to bound $\E_K\left[QADV_{\FS{\IS_{\widetilde{\pi}_K}}}(t,q_\hh)\right].$
\subsubsection{Step 4 used for Theorem \ref{Theorem:Generic}}
We actually reason here for a fixed key $K$ we have $\wpi_K = \wpi$.

\begin{proposition}\label{Proposition:Step3}
	Let $\IS_{\widetilde{\pi}} = (\ISKeygen,\ISProver,\ISCheck,\widetilde{\pi};M,C,R,n)$ be a {\cao} identification scheme where $\wpi$ is efficiently computable and invertible. For any integer $\gamma \ge 2,$ for any fixed $\hh$, we have
	$$ QADV_{\FS{\IS_{\widetilde{\pi}}}}(t,q_\hh) \le QADV_{\IS_{\widetilde{\pi}}}^{\osp{\gamma}}(t + n,q_\hh) + O(\frac{q^2_\hh \gamma}{|C|}). $$
	Notice here that since $\widetilde{\pi}$ has a known efficient description, we don't consider only black box calls to $U_{\widetilde{\pi}}$ but we can perform any computation that depends on the description of $\widetilde{\pi}$ and $\widetilde{\pi}^{-1}.$
\end{proposition}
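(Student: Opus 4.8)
The plan is to chain two reductions. First I would invoke Proposition~\ref{Proposition:Generic}, applied to $\IS_{\wpi}$ regarded as an ordinary $3$-round scheme whose verifier, on first message $x=(x_1,\dots,x_n)$, challenge $c$ and response $z_{I_c}$, accepts iff $\bigl(\forall i\in I_c,\ \wpi(z_i)=x_i\bigr)\wedge\ISCheck(pk,c,z_{I_c})=1$. This passes from the Fiat-Shamir advantage to the $\gamma$-rigid soundness advantage while absorbing the Grover term in one shot, so that (taking the maximum over adversaries $\aa$ with $|\aa|=(t,q_\hh)$, for which the right-hand side is $\aa$-independent) one obtains
$$ QADV_{\FS{\IS_{\wpi}}}(t,q_\hh)\ \le\ QADV_{\IS_{\wpi}}^{\rs{\gamma}}(t,q_\hh)+O\!\left(\frac{q_\hh^2\gamma}{|C|}\right).$$
It then remains to prove that, for a {\cao} scheme whose commitment $\wpi$ is an efficiently invertible permutation, rigid soundness is no harder than output special soundness, at the cost of only $n$ extra time steps.

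The key observation is that invertibility collapses the existential quantifier hidden in $VC^{\IS_{\wpi}}_x$. By definition $c\in VC^{\IS_{\wpi}}_x$ iff there exist values $(z_i)_{i\in I_c}$ with $\wpi(z_i)=x_i$ for all $i\in I_c$ and $\ISCheck(pk,c,z_{I_c})=1$. Since $\wpi\in\RP^{R}$ is a bijection, each constraint $\wpi(z_i)=x_i$ has the unique solution $z_i=\wpi^{-1}(x_i)$, so the existential is trivial and a challenge is valid for $x$ exactly when $\ISCheck$ accepts on the forced preimages. Writing $z=(\wpi^{-1}(x_1),\dots,\wpi^{-1}(x_n))$, this gives
$$ VC^{\IS_{\wpi}}_x=\{c\in C:\ \ISCheck(pk,c,z_{I_c})=1\}.$$
Hence the rigid event $|VC^{\IS_{\wpi}}_x|\ge\gamma$ is \emph{identical} to the event that $z=\wpi^{-1}(x)$ admits at least $\gamma$ valid challenges, which is precisely the event defining $QADV_{\IS_{\wpi}}^{\osp{\gamma}}$.

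The reduction is then immediate. Given any rigid soundness adversary $\aa$ with $|\aa|=(t,q_\hh)$ that outputs $x$, I would build $\bb$ which runs $\aa$ and outputs $z=\wpi^{-1}(x)$; efficient invertibility of $\wpi$ costs $n$ applications of $\wpi^{-1}$, so $|\bb|=(t+n,q_\hh)$, and by the identity above $\bb$ wins the output special soundness game exactly when $\aa$ wins the rigid game. Taking maxima yields $QADV_{\IS_{\wpi}}^{\rs{\gamma}}(t,q_\hh)\le QADV_{\IS_{\wpi}}^{\osp{\gamma}}(t+n,q_\hh)$, and combining with the displayed consequence of Proposition~\ref{Proposition:Generic} gives the statement.

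The conceptual crux, rather than any computation, is recognizing that invertibility of the commitment is exactly what turns the existential-in-$z$ rigid soundness into output special soundness; this is the whole reason Steps~2--3 first replaced $G$ by a random permutation and then by an efficiently invertible QPRP, and it is where the $+n$ running-time overhead enters. The one point to handle with care is that the $O(q_\hh^2\gamma/|C|)$ term is inherited directly from Proposition~\ref{Proposition:Generic} and thus ultimately from the search lower bound (Lemma~\ref{Lemma:Search}): it is genuinely an average over the random oracle $\hh$, since for a truly worst-case fixed $\hh$ an adversary could steer $\hh(x)$ into a small valid-challenge set and no such bound would hold. The Fiat-Shamir advantage here must therefore be read with $\hh$ the random oracle, which is exactly the regime in which this proposition is used in the proofs of Theorems~\ref{Theorem:ParallelRepetition} and~\ref{Theorem:Generic}.
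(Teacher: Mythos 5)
Your proof is correct and follows essentially the same route as the paper: invoke Proposition~\ref{Proposition:Generic} to pass to $\gamma$-rigid soundness, then use the invertibility of $\wpi$ to map a rigid-soundness adversary's output $x$ to $z=(\wpi^{-1}(x_1),\dots,\wpi^{-1}(x_n))$, turning rigid soundness into output special soundness at a cost of $n$ extra time steps. Your two refinements — that $VC^{\IS_{\wpi}}_x$ is \emph{exactly} the set $\{c\in C:\ISCheck(pk,c,z_{I_c})=1\}$ (the paper only uses the one-sided implication), and that the $O(q_\hh^2\gamma/|C|)$ term must be read as an average over the random oracle $\hh$ rather than for a worst-case fixed $\hh$ — are both sound and, if anything, sharpen the paper's presentation.
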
\

\begin{proof}
	Fix a {\cao} identification scheme $\IS_{\widetilde{\pi}} = (\ISKeygen,\ISProver,\ISCheck,\widetilde{\pi};M,C,R,n)$, and an integer $\gamma \ge 2$. Using Proposition \ref{Proposition:Generic}, we have 
	
	\begin{align}\label{Eq:Partial}
	QADV_{\FS{\IS_{\widetilde{\pi}}}}(t,q_\hh) \le QADV_{\IS_{\widetilde{\pi}}}^{\rs{\gamma}}(t,q_\hh) + O(\frac{q^2_\hh \gamma}{|C|}).\end{align}
	
	 Let $\cc^{\OH}$ be an quantum query algorithm satisfying $|\cc^{\ket{\hh}}| = (t,q_\hh)$ and  $QADV_{\IS_{\widetilde{\pi}}}^{\rs{\gamma}}(t,q_\hh) = QADV_{\IS_{\widetilde{\pi}}}^{\rs{\gamma}}(\cc^{\OH}).$ We consider the following algorithm $\bb^{\OH}$:
	$$\bb^{\OH}(pk) : x \eqdef (x_1,\dots,x_n) \leftarrow \cc^{\OH}(pk), z = ({\widetilde{\pi}}^{-1}(x_1),\dots,{\widetilde{\pi}}^{-1}(x_n)), \textrm{ return } z. $$ 
	
	Notice that if $\cc^{\OH}$ outputs a value $x \in VC^\IS_{\ge \gamma}$, then $|\{c : \ISCheck(pk,c,z_{I_c}) = 1\}| \ge \gamma).$ Therefore, $QADV_{\IS_{\widetilde{\pi}}}^{\rs{\gamma}}(\cc^{\OH}) \le QADV_{\IS_{\widetilde{\pi}}}^{\osp{\gamma}}(\bb^{\OH}).$ Also $\bb^{\OH}$ runs in time $t + n$ (recall that ${\widetilde{\pi}}^{-1}$ can be performed efficiently so we consider here its running time is $1$). We can therefore conclude
	$$ QADV_{\IS_{\widetilde{\pi}}}^{\rs{\gamma}}(t,q_\hh) \le QADV_{\IS_{\widetilde{\pi}}}^{\gamma\textrm{-}osp}(t + n ,q_\hh). $$
\end{proof}

\subsubsection{Theorem \ref{Theorem:Generic}: putting everything together} \label{Section:redo}

We can now show our first main theorem, which is the combination of our $4$ steps.
\setcounter{theorem}{1}
\begin{theorem}
	Let $\IS_G = (\ISKeygen,\ISProver,\ISCheck,G;M,C,R,n)$ be a {\cao} identification scheme with $G \Unif \RF^R_M$. Let also $\gamma \ge 2$ be an integer. We have for any $t,q_\hh,q_G:$
	\begin{align*} \E_{\substack{\hh \Unif \RF^{M^n}_C \\ G \Unif \RF^{R}_M}} \left[QADV_{\FS{\IS_G}}(t,q_\hh,q_G)\right] \le QADV_{\IS_G}^{\osp{\gamma}}\left(O_n(t),q_\hh\right) + O\left(\frac{q_\hh^2 \gamma}{|C|}\right) +  O_n\left(\frac{(q_G + q_\hh)^3}{|M|}\right).
	\end{align*}
\end{theorem}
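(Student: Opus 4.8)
The plan is to prove the theorem by composing the four reductions of Steps~1--4 in order, and then to cash in the key structural fact that the $\gamma$-output special soundness advantage $QADV^{\osp{\gamma}}$ depends only on the verification predicate $\ISCheck$ (equivalently $\widetilde{\ISCheck}$) and on the adversary's output $z$, and \emph{not at all} on the commitment function used in the scheme. This independence is exactly what lets the successive changes of commitment---first to a uniform permutation, then to an efficiently invertible pseudorandom permutation---be paid for entirely in the additive error terms rather than in the soundness term itself, which is the source of the tightness.

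Concretely, I would start from $\E_{\hh,G}\left[QADV_{\FS{\IS_G}}(t,q_\hh,q_G)\right]$ and apply Proposition~\ref{Proposition:Step0} to pass to the padded scheme $\wIS_{G'}$ with response space $R' = \zo^{2m}$, $m \ge 2048$, and $M,R \subseteq R'$, at no cost in the error terms. Since now $M \subseteq R'$, Proposition~\ref{Proposition:Step1} lets me replace $G' \Unif \RF^{R'}_M$ by a uniform permutation $\sigma \Unif \RP^{R'}$; this is the single place where Assumption~\ref{Assumption:QROM2} is invoked, it contributes the term $O_n((q_G+q_\hh)^3/|M|)$, raises the running time to $O_n(t)$, and raises the commitment-query count to $O_n(\max\{q_G,q_\hh\})$. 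For fixed $\hh$, Proposition~\ref{Proposition:Step2} then replaces $\sigma$ by the Feistel permutation $\wpi_K = \FeF(f_K)$ for random $K$, with error $O(q_G^3/2^{m/2})$ that is negligible for $m \ge 2048$ and hence absorbed. Finally, since each $\wpi_K$ is a permutation with a known, efficiently invertible description, Proposition~\ref{Proposition:Step3} (which itself rests on Proposition~\ref{Proposition:Generic} and recovers the full witness $z = (\wpi_K^{-1}(x_1),\dots,\wpi_K^{-1}(x_n))$ from the first message $x$) bounds $QADV_{\FS{\IS_{\wpi_K}}}(t,q_\hh)$ by $QADV_{\IS_{\wpi_K}}^{\osp{\gamma}}(O_n(t)+n,q_\hh) + O(q_\hh^2\gamma/|C|)$ for every key $K$.

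Taking the expectation over $K$ and invoking the commitment-independence of $\osp{\gamma}$, the term $QADV_{\IS_{\wpi_K}}^{\osp{\gamma}}$ is constant in $K$ and equals $QADV_{\wIS_{G'}}^{\osp{\gamma}}$; the padding bijection of Step~1 (append/strip the all-zero suffix, an $O(n)$ operation) then identifies this with $QADV_{\IS_G}^{\osp{\gamma}}$. Collecting the surviving error terms and folding every $O_n(t)$ and every $+n$ overhead into a single $O_n(t)$ running time yields the claimed inequality. I expect the main difficulty to lie not in any single estimate but in the bookkeeping at the seams: checking that the inflated query counts coming out of Step~2 still leave the Step~3 error negligible and dominated by $(q_G+q_\hh)^3/|M|$, that the various $O_n(t)$ times and $+n$ padding/inversion overheads compose cleanly, and---most delicately---that the commitment-independence argument genuinely transports the output-special-soundness bound from the artificial scheme $\IS_{\wpi_K}$ all the way back to the original $\IS_G$, since it is this last transport that makes the entire reduction tight.
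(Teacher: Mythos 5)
Your proposal is correct and follows essentially the same route as the paper's own proof: it composes Propositions~\ref{Proposition:Step0}, \ref{Proposition:Step1}, \ref{Proposition:Step2} and \ref{Proposition:Step3} in the same order, invokes Assumption~\ref{Assumption:QROM2} at the same single point, absorbs the negligible Feistel error for $m \ge 2048$, and closes with the same observation that $QADV^{\osp{\gamma}}$ is independent of the commitment function, which transports the bound from $\wIS_{\wpi_K}$ back to $\IS_G$. The bookkeeping concerns you flag (query-count inflation from Step~2, folding the $+n$ overheads into $O_n(t)$) are handled in the paper exactly as you anticipate.
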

\begin{proof}
	We start from $\IS_G$ and construct $\wIS_{G'} = (\ISKeygen,\ISProver,\widetilde{\ISCheck},G';M,C,R',n)$ as in Proposition \ref{Proposition:Step0}. We have in particular $M \subseteq R$, which allows us to apply Proposition \ref{Proposition:Step1} and $R = \zo^{2m}$ with $m \ge 2048.$ We define  $\widetilde{\pi}_K \eqdef \FeF(f_K)$ where $\{f_K\}$ is a quantum pseudorandom function family and $\mathcal{K}$ is the key space. We have
	\begin{align*}
	\Gamma_2 & \eqdef \E_{\substack{\hh \Unif \RF^{M^n}_C \\ G \Unif \RF^{R}_M}} \left[QADV_{\FS{\IS_G}}(t,q_\hh,q_G)\right]  \le 	\E_{\substack{\hh \Unif \RF^{M^n}_C \\ G' \Unif \RF^{R'}_M}} \left[QADV_{\FS{\wIS_{{G'}}}}(t,q_G,q_\hh)\right] \\
& \le \E_{\substack{\hh \Unif \RF^{M^n}_C \\ \sigma \Unif \RP^{R}}} \left[QADV_{\FS{\wIS_{\sigma}}}(O_n(t),q_\hh,O_n(q_G + q_\hh))\right]  + O_n\left(\frac{(q_G + q_\hh)^3}{|M|}\right)\\
	& =  \E_{\substack{\hh \Unif \RF^{M^n}_C \\ K \Unif \mathcal{K}}} \left[QADV_{\FS{\wIS_{\widetilde{\pi}_K}}}(O_n(t),q_\hh)\right]  + O_n\left(\frac{(q_G + q_\hh)^3}{|M|}\right)\\
	& \le \E_{K \Unif \mathcal{K}} \left[QADV_{\wIS_{\wpi_K}}^{\osp{\gamma}}\left(O_n(t),q_\hh\right) \right]+ O\left(\frac{q_\hh^2 \gamma}{|C|}\right) + O_n\left(\frac{(q_G + q_\hh)^3}{|M|}\right) \\
	& = QADV_{\IS_{G}}^{\osp{\gamma}}\left(O_n(t),q_\hh\right) + O\left(\frac{q_\hh^2 \gamma}{|C|}\right) + O_n\left(\frac{(q_G + q_\hh)^3}{|M|}\right)
	\end{align*}
	The first $4$ lines come from the $4$ steps of our proof, namely Propositions \ref{Proposition:Step0}, \ref{Proposition:Step1}, \ref{Proposition:Step2} and \ref{Proposition:Step3}. We ignored the term $O_n(\frac{(q_G + q_\hh)^3}{2^{2m}})$ from Proposition \ref{Proposition:Step2} which is tiny and absorbed by the other terms for any reasonable security requirement since $m \ge 2024.$ For the last inequality, we remove the dependency in $\wpi_K$ because the quantity $QADV_{\IS}^{\osp{\gamma}}$ is independent of the commitment used, and we can go back from $\widetilde{\IS}$ to $\IS$ by noticing that the $\gamma$-output special soundness is the same for these $2$ identification schemes.  This concludes the proof of Theorem \ref{Theorem:Generic}.
\end{proof}

\subsubsection{Step 4 used for Theorem \ref{Theorem:ParallelRepetition}}
We prove here the Step $4$ that will be used for proving Theorem \ref{Theorem:ParallelRepetition}.

\begin{proposition}\label{Proposition:Step3Parallel}
	Let $\IS_{\widetilde{\pi}} = (\ISKeygen,\ISProver,\ISCheck,\widetilde{\pi};M,C,R,n)$ be a {\cao} identification scheme, where $\wpi$  is an efficiently computable and invertible permutation. For any integer $\gamma \ge 2$, and $r \in \mathbb{N}^*$, we have
	\begin{align*} QADV_{\FS{\IS^{\otimes r}_{\wpi}}}(t,q_\hh) \le  QADV_{\IS_{\wpi}}^{\spp{\gamma}}(t + nr + |C|r,q_\hh) + O\left(\frac{q_\hh^2 (\gamma - 1)^r}{|C|^r}\right). \end{align*}
\end{proposition}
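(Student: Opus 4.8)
The plan is to mirror the two-move structure of the proof of Proposition \ref{Proposition:Step3}, but applied to the parallel-repeated scheme $\IS^{\otimes r}_{\wpi}$ and with a rigid-soundness threshold tuned to the product structure of its challenge space. Recall that $\IS^{\otimes r}_{\wpi}$ is itself an identification scheme whose challenge space is $C^r$ (of size $|C|^r$), and that a tuple $(c^1,\dots,c^r)$ is a valid challenge for a first message $(x^1,\dots,x^r)$ precisely when each $c^i$ is valid for $x^i$. Hence the valid-challenge set factorizes, $VC^{\IS^{\otimes r}}_{(x^1,\dots,x^r)} = VC^\IS_{x^1} \times \cdots \times VC^\IS_{x^r}$, so that $|VC^{\IS^{\otimes r}}_{(x^1,\dots,x^r)}| = \prod_{i=1}^r |VC^\IS_{x^i}|$. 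First I would invoke Proposition \ref{Proposition:Generic} for $\IS^{\otimes r}_{\wpi}$ with threshold $(\gamma-1)^r + 1$, which gives
\begin{align*}
QADV_{\FS{\IS^{\otimes r}_{\wpi}}}(t,q_\hh) \le QADV_{\IS^{\otimes r}_{\wpi}}^{\rs{(\gamma-1)^r+1}}(t,q_\hh) + O\left(\frac{q_\hh^2\,(\gamma-1)^r}{|C|^r}\right),
\end{align*}
where I used $(\gamma-1)^r + 1 = O((\gamma-1)^r)$ for $\gamma \ge 2$ together with $|C^r| = |C|^r$ to simplify the error term. This already delivers the stated second term.

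The heart of the argument is a pigeonhole step that turns the product threshold into a single good coordinate. If an adversary outputs $(x^1,\dots,x^r)$ with $|VC^{\IS^{\otimes r}}_{(x^1,\dots,x^r)}| \ge (\gamma-1)^r + 1$, then $\prod_{i=1}^r |VC^\IS_{x^i}| > (\gamma-1)^r$, and since all factors being at most $\gamma-1$ would force the product to be at most $(\gamma-1)^r$, there must exist an index $i^*$ with $|VC^\IS_{x^{i^*}}| \ge \gamma$. In words, possessing more than $(\gamma-1)^r$ valid challenge tuples forces at least one copy of the base scheme to admit $\gamma$ valid challenges, which is exactly the object that $\gamma$-special+ soundness of $\IS_{\wpi}$ asks for.

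It then remains to convert a rigid-soundness adversary $\cc^{\OH}$ for $\IS^{\otimes r}_{\wpi}$ (at threshold $(\gamma-1)^r+1$) into a $\spp{\gamma}$ adversary $\bb^{\OH}$ for the base scheme $\IS_{\wpi}$, exploiting that $\wpi$ is efficiently invertible. The algorithm $\bb^{\OH}(pk)$ runs $\cc^{\OH}(pk)$ to obtain $(x^1,\dots,x^r)$ and then extracts $z^i \eqdef (\wpi^{-1}(x^i_1),\dots,\wpi^{-1}(x^i_n))$ for each $i$ (cost $nr$ inversions). Because $\wpi$ is a permutation, the opening of every coordinate is uniquely determined, so a challenge $c$ lies in $VC^\IS_{x^i}$ if and only if $\ISCheck(pk,c,z^i_{I_c}) = 1$; hence $\bb$ can brute-force over all $c \in C$ in each of the $r$ copies (cost $|C|r$) to locate the index $i^*$ promised by the pigeonhole step together with $\gamma$ pairwise-distinct valid challenges $c_1,\dots,c_\gamma$ for it. Outputting $(z^{i^*},c_1,\dots,c_\gamma)$ breaks $\gamma$-special+ soundness, giving $QADV_{\IS^{\otimes r}_{\wpi}}^{\rs{(\gamma-1)^r+1}}(\cc^{\OH}) \le QADV_{\IS_{\wpi}}^{\spp{\gamma}}(\bb^{\OH})$ with $\bb$ running in time $t + nr + |C|r$ and still making $q_\hh$ hash queries. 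Chaining this inequality with the display above yields the proposition.

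The step I expect to need the most care is the fact that we reduce to special+ soundness rather than to the weaker output special soundness used in Proposition \ref{Proposition:Step3}: here the reduction must actually exhibit the $\gamma$ challenges for one designated base copy, and this is only affordable because invertibility of $\wpi$ recovers the unique openings $z^i$ while smallness of $C$ lets the valid sets be enumerated in time $|C|r$. I would also double-check the boundary case $\gamma = 2$, where the threshold becomes $2$ and the pigeonhole reads ``product $\ge 2$ forces some factor $\ge 2$'', and confirm that the simplification $(\gamma-1)^r+1 = O((\gamma-1)^r)$ indeed absorbs the additive one in the threshold.
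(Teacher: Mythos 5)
Your proposal is correct and follows essentially the same route as the paper's proof: invoke Proposition \ref{Proposition:Generic} on $\IS^{\otimes r}_{\wpi}$ with threshold $(\gamma-1)^r+1$, use a pigeonhole argument to locate one coordinate $i^*$ admitting $\gamma$ valid challenges, and then build a reduction that inverts $\wpi$ coordinatewise and enumerates all of $C$ in each of the $r$ copies, giving running time $t+nr+|C|r$ and an output that breaks $\gamma$-special+ soundness. The only cosmetic difference is that you exploit the exact factorization $VC^{\IS^{\otimes r}}_{x} = VC^{\IS}_{x^1}\times\cdots\times VC^{\IS}_{x^r}$, whereas the paper argues via coordinate projections ($S \subseteq T_1\times\cdots\times T_r$); both yield the same pigeonhole conclusion.
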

\begin{proof}
	
 We first show the following $2$ lemmata.

\begin{lemma}
	Let $S \subseteq |C|^r$. Let $\gamma \ge 2$ be an integer. If $|S| \ge (\gamma - 1)^r + 1$ then there exists an index $i \in [r]$, $|\{c_i : \exists c = (c_1,\dots,c_{r}), c \in S\}| \ge \gamma.$
\end{lemma}
\begin{proof}
	We  prove the contrapositive. Let $T_i \eqdef \{c_i : \exists c = (c_1,\dots,c_{r}), c \in S\}$ and assume that $\forall i \in [r], \ |T_i| \le \gamma - 1$. We immediately have 
	$S \subseteq T_1 \times T_2 \dots \times T_r$ which implies $|S| \le \Pi_{i \in {r}} |T_i| \le (\gamma - 1)^r.$
\end{proof}

For the next lemma, recall the definitions of valid challenges of Section \ref{Section:RigidSoundness}. We wil now write $\IS^{\otimes r}$ instead of $\IS^{\otimes r}_{\wpi}$ to lighten the notations.

\begin{lemma}  \label{Lemma:3G}
	Let $x = (x^1,\dots,x^r) \in M^{nr}$ where for each $i \in [r], x^i = (x^i_1,\dots,x^i_n)$ and each $x^i_j \in M.$ Let also $\gamma \ge 2$ be an integer. If $x \in V^{\IS^{\otimes r}}_{\ge (\gamma - 1)^r + 1}$ then there exists an $i \in [r]$, distinct values $b_1,\dots,b_{\gamma} \in C$ such that if we define $z^i \eqdef {\widetilde{\vpi}_0}^{-1}(x^i) = (\wpi^{-1}(x^i_1),\dots,\wpi^{-1}(x^i_n))$, we have $\forall j \in [\gamma], \ \ISCheck(pk,b_j,z^i_{I_{b_j}}) = 1$.
\end{lemma}
\begin{proof}
	Fix $x \in M^{nr}$ and assume $|VC^{\IS^{\otimes r}}_x| \ge (\gamma - 1)^r + 1.$  Using the previous lemma, let $i \in [r]$ be the index such that $|\{c_i \in C : \exists c = (c_1,\dots,c_{r}) \in VC^{\IS^{\otimes r}}_x\}| \ge \gamma$ and we denote by $\{b_1,\dots,b_{\gamma}\}$ any $\gamma$ pairwise distinct values of this set. For each $j \in [\gamma]$, let $c^j \in VC^{\IS^{\otimes r}}_x$ such that $c^j_i = b_j$.   Let $z = \widetilde{\vpi}_0^{-1}(x)$. This means for each $i \in [r]$, $z^i = \widetilde{\vpi}_0^{-1}(x^i).$ Now
	$\forall j \in [\gamma],$ because the strings $b_i \in VC^{\IS^{\otimes r}}_x$, we have 
	$$ \forall j \in [\gamma], \ \ISCheck^{\otimes r}(x,c^j,z_{I_{c^j}}) = 1 \quad \Rightarrow \quad \forall j \in [\gamma], \ \ISCheck(pk,b_j,z^i_{I_{b_j}}) = 1.$$ 
\end{proof}

\noindent With these $2$ lemmata, we can prove Proposition \ref{Proposition:Step3Parallel}. First notice using Proposition \ref{Proposition:Generic} that 
\begin{align}\label{Equation:New}
QADV_{\FS{\IS^{\otimes r}_{\wpi}}}(t,q_\hh) \le QADV_{\IS^{\otimes r}_{\wpi}}^{\rs{((\gamma - 1)^r + 1)}}(t,q_\hh) + O\left(q_\hh^2 \frac{(\gamma-1)^r}{|C|^r}\right).
\end{align}

	Let $\aa^{\OH}$ be a quantum algorithm running with $|\aa^{\OH}| = (t,q_\hh)$ such that $QADV_{\IS^{\otimes r}_{\wpi}}^{\rs{((\gamma-1)^r + 1)}}(\aa^{\OH}) = QADV_{\IS^{\otimes r}_{\wpi}}^{\rs{((\gamma-1)^r + 1)}}(t,q_\hh).$ We consider the following algorithm $\bb^{\OH}$: \\ \\
	\cadre{\begin{center} Quantum algorithm $\bb^{\OH}$
		\end{center}
		\begin{enumerate} \setlength\itemsep{-0.2em}
			\item compute $x = x^1,\dots,x^r \leftarrow \aa^{\OH}(p_k)$ where for each $i \in [r], x^i = x^i_1,\dots,x^i_n.$
			\item compute for each $i \in [r], j \in [n] \ z^i_j = \wpi^{-1}(x^i_j).$ Similarly as above, we define $z^i = z^i_1,\dots,z^i_n$ for each $i \in [r].$
			\item Find $i \in [r]$ and distinct values $b_1,\dots,b_{\gamma} \in C$ such that for each $j \in [\gamma], \ISCheck(pk,b_j,z^i_{I_j}) = 1$ if such values exist, else output $\bot.$ To do so, we compute $\ISCheck(pk,b,z^i_{I_b})$ for each $i \in [r]$ and $b \in C$.
			\item Output $(b_1,\dots,b_{\gamma},z^i)$.
		\end{enumerate}
	} $ \ $ \\ \\
	Using Lemma \ref{Lemma:3G}, we have 
	\begin{align}
	QADV_{\IS_{\wpi}^{\otimes r}}^{\rs{((\gamma - 1)^r + 1)}}(\aa^{\OH}) & =  \Pr\left[x \in VC^\IS_{(\gamma - 1)^r + 1}  \ \left| \ \substack{(pk,sk) \leftarrow \Init(1^\lambda) \\ x \leftarrow \aa^{\OH}(pk)}\right.\right] \nonumber \\
	& \le \Pr\big[\exists i \in [r], \exists \textrm{ distinct } b_1,\dots,b_{\gamma} \in C: \nonumber \\ & \qquad \qquad \forall j \in [\gamma],  \ISCheck(pk,x^i,b_j,(\wpi^{-1}(x^i))_{I_{b_j}}) = 1  \ {\huge|} \ \substack{(pk,sk) \leftarrow \Init(1^\lambda) \\ x \leftarrow \aa^{\OH}(pk)}\big] \nonumber \\
	& = \Pr\left[\forall j \in [\gamma], \ISCheck(pk,b_j,z^i_{I_j}) = 1  \ \left| \ \substack{(pk,sk) \leftarrow \Init(1^\lambda) \\ (b_1,\dots,b_{\gamma},z^i) \leftarrow \bb^{\OH}(pk)}\right.\right] \nonumber \\
	& = QADV_{\IS_{\wpi}}^{\spp{\gamma}}(\bb^{\OH}) \label{Equation:New2}
	\end{align}
	Now, let's compute the running time of $\bb^{\OH}$. Step 1: takes time $t$. Step 2: makes $nr$ calls to $\wpi_0^{-1}$, which is efficiently computable. Step $3$: makes $|C|r$ calls to $\ISCheck$ which is efficiently computable. This is implies that the total running time of $\bb$ is $t + nr + |C|r.$ Moreover, $\bb^{\OH}$ makes as much queries to $\OH$ as $\aa^{\OH}.$	Combining Equation \ref{Equation:New} and \ref{Equation:New2}, we conclude 
$$QADV_{\FS{\IS^{\otimes r}_{\wpi}}}(t,q_\hh)  \le QADV_{\IS_{\wpi}}^{\spp{\gamma}}(t + nr + |C|r,q_\hh) + O\left(\frac{q_\hh^2 (\gamma - 1)^r}{|C|^r}\right)$$
\end{proof}

\subsubsection{Finishing the proof of Theorem \ref{Theorem:ParallelRepetition}}
We finish the proof exactly as we did for Theorem \ref{Theorem:Generic} in Section \ref{Section:redo} except we replace Proposition \ref{Proposition:Step3} with Proposition \ref{Proposition:Step3Parallel}.

\section{Practical instantiations}
Assume we have a {\cao} identification scheme $\IS^{\otimes r}$ where $\IS$ has challenge size $3$ and has $3$-special soundness, in the sense that $QADV_{\IS}^{\sps{3}}(t)$ is smaller that the probability to break the underlying hard computational problem in time $t$. From Theorem \ref{Theorem:ParallelRepetition}, we have that $\S_\IS$ has $\lambda$ bits of security by taking $r$ such that $2^{2\lambda}(\frac{2}{3})^r < 1$ or equivalently $r   *\log(2/3) = 2\lambda$ and $\log_2(|M|) \ge 3 \lambda$.

For example, if we take $\lambda = 64$, we have $r \ge 219$ and $|M| \ge 192$. If we take $\lambda = 128$, this gives $r \ge 438$ and $|M| \ge 384$. This kind of bounds applies to Stern's identification scheme, the \cite{KTX08} identification scheme based on lattice problems, the \cite{SSH11} identification schemes based on multivariate  problems, closely related to the NIST candidate MQDSS, and the PICNIC scheme based on multiparty computing problems, which is also a NIST candidate. Actually, for the NIST candidates, this rationale was already used so our results essentially claim that this can be done with a provable tight security reduction in the QROM. The only difference is that the commitment scheme used is a call to $\SHAKE$ with has $512$ output bits (so $|M|$) and a possible improvements of these schemes would be to reduce the size of $|M|$. In order to show our derivations more in detail, we present in Appendix \ref{Appendix:Stern} these derivations for Stern's signature scheme and similar derivation can be done for the other signature schemes mentioned above. 

The $5$ round schemes, such as MQDSS or the KKW variant of PICNIC seem to require more work but the current techniques seem quite promising for proving tight security reductions for those as well. There are also more complicated schemes that are {\cao} but with more rounds such as Pigroast/Legroast \cite{BD20}. These multi-round protocols also have asymptotic quantum reductions from the work of \cite{DFM20} and we hope our techniques can be useful here for concrete security. We leave this for future work. \vspace*{-0.5cm}

.

\bibliography{paper_shrinked}
\bibliographystyle{alpha}

\appendix

\newpage

\begin{center}{\LARGE \textbf{Appendix}} \end{center}
\section{Signature schemes}\label{Appendix:Signatures}
A signature scheme $S$ consists of $3$ algorithms  $(\SKeygen,\SSign,\SVerify)$:
\begin{itemize}
	\item $\SKeygen(1^\lambda) \rightarrow (pk,sk)$ is the generation of the public key $pk$ and the secret key $sk$ from the security parameter $\lambda$.
	\item $\SSign(m,pk,sk) \rightarrow \sigma_m$ : generates the signature $\sigma_m$ of a message $m$ from $m,pk,sk$.
	\item $\SVerify(m,\sigma,pk) \rightarrow \zo$ verifies that $\sigma$ is a valid signature of $m$ using $m,\sigma,pk$. The output $1$ corresponds to a valid signature.
\end{itemize}
\paragraph{Correctness.} A signature scheme is correct iff. when we sample $(pk,sk) \leftarrow \SKeygen(1^\lambda)$, we have for each $m$
$$ \SVerify(m,\SSign(m,pk,sk),pk) = 1.$$
\paragraph{Security definitions}
We consider the standard EUF-CMA security for signature schemes. To define the advantage of an adversary $\aa$, we consider the following interaction with a challenger:
\\ \\
\textbf{Initialize.} The challenger generates $(pk,sk) \leftarrow \SKeygen(1^\lambda)$ and sends $pk$ to $\aa$. \\
\textbf{Query phase.} $\aa$ can perform sign queries by sending each time a message $m$ to the challenger who generates $\sigma = \SSign(m,pk,sk)$ and sends $\sigma$ to $\aa$. Let $m_1,\dots,m_{q_S}$ the (not necessarily distinct) queries made by $\aa$. The adversary can also make $q_\hh$ queries to $\hh$. \\
\textbf{Output.} $\aa$ outputs a pair $(m^*,\sigma^*)$. The advantage $Adv(\aa)$ for $\aa$ is the quantity
\begin{align*}
QADV_{\ss}^{\textup{EUF-CMA}}(\aa)  = \Pr[\aa & \textrm{ outputs } (m^*,\sigma^*) \ st. \\ & \SVerify(m^*,\sigma^*,pk) = 1 \wedge m^* \neq m_1,\dots,m_{q_S}], 
\end{align*}
where $m^* \neq m_1,\dots,m_{q_S}$ means $\forall i, \ m^* \neq m_{i}.$
\begin{definition}\label{Definition:Signatures-IND-CCA}
	Let $\ss = (\SKeygen,\SSign,\SVerify)$ be a signature scheme. We define
	$$ QADV_{\ss}^{\textup{EUF-CMA}}(t,q_\hh,q_S) = \max_{\aa} QADV_{\ss}^{\textup{EUF-CMA}}(\aa).$$
	where we maximize over an adversary running in time $t$, performing $q_\hh$ hash queries and $q_S$ sign queries. 
\end{definition}

We can directly construct a signature scheme from an identification scheme via the Fiat-Shamir transform. From an identification scheme $\IS = (\ISKeygen,\ISProver = (P_1,P_2),\ISCheck;M,C,R)$, we define the following signature scheme \\ $\ss_{\IS} = (\SISKeygen,\SISSign,\SISVerify)$ that uses a random function $\hh$: 
\begin{itemize}
	\item $\SISKeygen(1^\lambda) = \Init(1^\lambda)$
	\item $\SISSign(m,pk,sk) : (x,St) \leftarrow P_1(pk), c \leftarrow \hh(x,m), z \leftarrow P_2(sk,x,c,St)$, output $\sigma = (x,z)$.
	\item $\SISVerify(m,\sigma = (x,z),pk) = V(pk,x,\hh(x,m),z)$.
\end{itemize}

\begin{proposition}\cite{GHHM20}\label{Proposition:KLS18}
	Let $\IS$ be an identification scheme which is $\eps$-HVZK and has $\alpha$ bits of min-entropy. Let $S_{\IS}$ the corresponding signature scheme. 
	$$QADV_{S_{\IS}}^{\textrm{EUF-CMA}}(t,q_\hh,q_S) \le QADV_{\FS{\IS}}(t') + \frac{3q_{\ss}}{2}\sqrt{(q_\ss + q_\hh + 1)2^{-\alpha}} + QADV_{q_{\ss} \eps}.$$
	where we need to average the $2$ advantages over the hash function $\hh.$
\end{proposition}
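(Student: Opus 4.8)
The plan is to follow the standard game-hopping reduction for Fiat--Shamir signatures in the QROM: starting from the EUF-CMA game, I would progressively replace the honest signing oracle by a simulation that does not use $sk$, and charge the three discrepancies to the HVZK property, the adaptive reprogramming of $\hh$, and the Fiat--Shamir advantage, respectively. All advantages are understood to be averaged over the random oracle $\hh$, as in the rest of the paper, and the overhead of running the simulation will be absorbed into $t'$.

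The first game hop replaces each genuine signature, produced by running $P_1(sk)$, setting $c=\hh(x,m)$ and $z\leftarrow P_2(sk,x,c,St)$, by a transcript $(x',c',z')\leftarrow Sim(pk)$ output by the HVZK simulator. Since $\IS$ is $\eps$-HVZK, each of the $q_\ss$ signing queries incurs a statistical distance of at most $\eps$, so a hybrid argument over the signing queries accounts for the term $QADV_{q_\ss\eps}$. To actually deliver the simulated transcript through the signature interface I then have the signing oracle adaptively reprogram $\hh$ at the point $(x',m)$ so that $\hh(x',m)=c'$, after which signing becomes independent of $sk$.

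The second hop bounds the cost of this reprogramming. The crucial point is that the first message $x'$ carries $\alpha$ bits of min-entropy, so the reprogrammed inputs are hard to predict for the adversary even quantumly. Invoking the quantum adaptive reprogramming lemma of \cite{GHHM20} then bounds the distinguishing advantage between the reprogrammed game and the previous one by $\frac{3q_\ss}{2}\sqrt{(q_\ss+q_\hh+1)2^{-\alpha}}$. Finally I would build the reduction to the Fiat--Shamir advantage: a forger $\aa$ against the (now secret-key-free) game is turned into an adversary $\bb^{\OH}$ against $\FS{\IS}$ that receives $pk$, simulates the whole EUF-CMA interaction — answering hash queries with its own oracle except at the reprogrammed signing points, and producing signatures via $Sim$ — and outputs $\aa$'s forgery $(m^*,(x^*,z^*))$. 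Because the winning condition requires $m^*\neq m_1,\dots,m_{q_\ss}$, the point $(x^*,m^*)$ was never touched by any signing query, so $\hh(x^*,m^*)$ is a fresh uniform challenge and the forgery is a valid Fiat--Shamir proof; this contributes the term $QADV_{\FS{\IS}}(t')$.

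I expect the adaptive reprogramming step to be the main obstacle: controlling the advantage of a quantum distinguisher against a random oracle that is reprogrammed at adaptively chosen, high-min-entropy inputs is precisely the delicate content of \cite{GHHM20}, and it is what keeps the reduction tight, i.e.\ linear rather than quadratic in $q_\hh$. A secondary subtlety I would check carefully is that in the final reduction a fresh message $m^*$ genuinely avoids every reprogrammed point, so that the extracted transcript is valid with respect to the unmodified challenge distribution rather than one that $\bb$ has tampered with.
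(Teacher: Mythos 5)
This proposition is not proved in the paper at all---it is imported as a black box from \cite{GHHM20}---and your game-hopping sketch (HVZK hybrid costing $q_\ss\eps$, adaptive reprogramming at the high-min-entropy signing points costing $\frac{3q_\ss}{2}\sqrt{(q_\ss+q_\hh+1)2^{-\alpha}}$, and the final reduction to the plain Fiat--Shamir soundness game using $m^*\neq m_1,\dots,m_{q_\ss}$ to guarantee the forgery point was never reprogrammed) is precisely the argument of that cited work, so it is correct and matches the source. The only cosmetic deviation is the order of the hops: \cite{GHHM20} reprograms first, so that each signing query's challenge is a fresh uniform value \emph{before} the HVZK simulator is substituted (HVZK is stated for transcripts with a uniform challenge independent of $x$, not for $c=\hh(x,m)$), whereas you merge and invert the two steps---harmless here, but worth stating in that order for a fully rigorous write-up.
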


The min-entropy here is the min-entropy of the prover's first message when he is honest. All schemes we consider can have very large min-entropy using the method presented for instance in\cite{KLS18} with marginal cost, so the  term $\frac{3q_{\ss}}{2}\sqrt{(q_\ss + q_\hh + 1)2^{-\alpha}}$ can be made  small.  The above proposition shows that we only need to focus on the soundness of the Fiat-Shamir transform in order to build signature schemes, which is what we will do in the paper. Notice that such a proposition holds also if we only consider computational zero-knowledge, we refer to \cite{GHHM20} for more details. 

\section{Proof of the first step}\label{Appendix:Step0}
We prove the following proposition
\begin{proposition}[Proposition \ref{Proposition:Step0} restated]
	For any hash function $\hh$, for any $t,q_G,q_\hh$, we have
	$$
	\E_{G \Unif \RF^{R}_M} \left[QADV_{\FS{\IS_G}}(t,q_\hh,q_G)\right] \le 
	\E_{G' \Unif \RF^{R'}_M} \left[QADV_{\FS{\wIS_{{G'}}}}(t,q_\hh,q_G)\right].
	$$
\end{proposition}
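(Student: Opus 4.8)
The plan is to build a tight, success-preserving reduction that turns any adversary against $\FS{\IS_G}$ into an equally good adversary against $\FS{\wIS_{G'}}$, combined with a coupling that makes a uniformly random $G'$ induce a uniformly random $G$, so that the two expectations can be compared pointwise.

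First I would fix the coupling. Let $\mathrm{pad}: R \to R'$ denote the embedding $z \mapsto z||0\cdots 0$ used in the definition of $\wIS$; it is injective. Restricting $G' \in \RF^{R'}_M$ to the padded copy of $R$ gives $G := G' \circ \mathrm{pad}$. Because $\mathrm{pad}$ is injective, the values $\{G'(\mathrm{pad}(z))\}_{z \in R}$ are independent and uniform, so $G' \Unif \RF^{R'}_M$ yields $G \Unif \RF^R_M$; consequently, for any functional $F$ we have $\E_{G \Unif \RF^R_M}[F(G)] = \E_{G' \Unif \RF^{R'}_M}[F(G' \circ \mathrm{pad})]$. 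It therefore suffices to establish the pointwise inequality $QADV_{\FS{\IS_G}}(t,q_\hh,q_G) \le QADV_{\FS{\wIS_{G'}}}(t,q_\hh,q_G)$ for each fixed $\hh$ and each fixed $G'$ with $G = G' \circ \mathrm{pad}$, and only afterwards average over $G'$.

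Next, for fixed $\hh$ and $G'$ I would take an adversary $\aa^{\OH,\OG}$ realizing $QADV_{\FS{\IS_G}}(t,q_\hh,q_G)$ and construct $\bb^{\ket{\hh},\ket{G'}}$ that runs $\aa$ verbatim, emulating each query to $U_G$ by one query to $U_{G'}$. Since $G(z) = G'(\mathrm{pad}(z))$ and $\mathrm{pad}$ is an efficiently and reversibly computable injection, $\bb$ writes $\mathrm{pad}(z)$ into an ancilla, applies $U_{G'}$, and uncomputes the ancilla, which implements $U_G$ exactly with query cost one. When $\aa$ halts with output $(x,z_{I_c})$ (where $c=\hh(x)$), $\bb$ outputs $(x,z'_{I_c})$ with $z'_i := \mathrm{pad}(z_i) = z_i||0\cdots 0$. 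This preserves $q_\hh$ and $q_G$ and adds only negligible padding overhead, which we absorb into the running time. The heart of the argument is the success-preservation check: whenever $\aa$ wins against $\FS{\IS_G}$, so $\forall i\in I_c,\ G(z_i)=x_i$ and $\ISCheck(pk,c,z_{I_c})=1$, the output of $\bb$ wins against $\FS{\wIS_{G'}}$, because the commitment check holds as $G'(z'_i) = G'(\mathrm{pad}(z_i)) = G(z_i) = x_i$, the extra clause of $\widetilde{\ISCheck}$ demanding that each revealed value be of the form $z_i||0\cdots 0$ with $z_i \in R$ holds by construction, and $\ISCheck(pk,c,z_{I_c})=1$ is inherited unchanged. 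Maximizing over $\aa$ gives the pointwise inequality, and taking the expectation over $G'$ through the coupling yields the claim.

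The step I expect to require the most care is the interchange of the maximization over adversaries with the expectation over the oracle: the left-hand side is $\E_{G}[\max_\aa \cdots]$, so the optimal $\aa$ may in principle depend on $G$. This is handled precisely because the reduction $\aa \mapsto \bb$ is uniform in $G'$ — it works verbatim for every fixed $G'$ — so the inequality is obtained pointwise in $G'$ and only then averaged. The remaining points are routine: verifying that appending a fixed block of zeros is reversible (so the oracle emulation is a legitimate unitary of query cost one) and that the output-padding overhead does not affect the claimed complexity.
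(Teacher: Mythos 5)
Your proposal is correct and follows essentially the same route as the paper's proof: the paper defines $C_{G'}(z) \eqdef G'(z||0\cdots 0)$ (your $G' \circ \mathrm{pad}$ coupling), uses that a uniform $G'$ induces a uniform $G$, emulates each $U_G$ query by one $U_{G'}$ query, and pads the output responses $z_i \mapsto z_i||0\cdots 0$ so that $\widetilde{\ISCheck}$ accepts exactly when the original checks pass. Your additional care about performing the reduction pointwise in $G'$ before averaging (so the $G$-dependent optimal adversary causes no trouble) is a correct and slightly more explicit rendering of what the paper does implicitly.
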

	\begin{proof}

		For any function $G' \in \RF^{R'}_M$, we define the function $C_{G'} \in \RF^{R}_M$ as follows: $C_{G'}(z) \eqdef G'(z||0\dots 0).$ Notice that if $G'$ is a random function in $\RF^{R'}_M$ then $C_{G'}$ is a random function in $\RF^R_M$. Therefore
		\begin{align*}
			\E_{G \Unif \RF^{R}_M} \left[QADV_{\FS{\IS_G}}(\aa^{\ket{G},\ket{\hh}})\right] & = 
			\E_{G' \Unif \RF^{R'}_M} \left[QADV_{\FS{\IS_{C_{G'}}}}(\aa^{\ket{C_{G'}},\ket{\hh}})\right]. 
		\end{align*}
		Now, let's consider the following algorithm $\aa_2^{\ket{C_{G'}},\ket{\hh}} : (x,z_{I_c}) \leftarrow \aa^{\ket{C_{G'}},\ket{\hh}} \textrm{ with } c = \hh(x).$ { Return } $(x,z'_{I_c})$ { where } $\forall i \in I_c, \ z'_i = z_i || 0 \dots 0.$ From the definition of $\wIS$, we have 
		\begin{align*}
			\E_{G' \Unif \RF^{R'}_M} \left[QADV_{\FS{\IS_{C_{G'}}}}(\aa^{\ket{C_{G'}},\ket{\hh}})\right] & =
			\E_{G' \Unif \RF^{R'}_M} \left[QADV_{\FS{\wIS_{{G'}}}}(\aa_2^{\ket{C_{G'}},\ket{\hh}})\right] \\ & \le
			\E_{G' \Unif \RF^{R'}_M} \left[QADV_{\FS{\wIS_{{G'}}}}(\aa_2^{\ket{{G'}},\ket{\hh}})\right]
		\end{align*}
		where the last inequality comes from the fact that a call to $U_{C_{G'}}$ can be done with a call to $U_{G'}.$ Since the running time and number of queries remains unchanged between $\aa$ and $\aa_2$, we can conclude.\qed
	\end{proof}
	\noindent Notice also from the definitions that we can derive the following equalities, for any $\gamma$ and $\hh$:
	\begin{align*}
		\E_{G \Unif \RF^{R}_M} \left[QADV^{\spp{\gamma}}_{{\IS_G}}(t,q_G,q_\hh)\right] & =
		\E_{G' \Unif \RF^{R'}_M} \left[QADV^{\spp{\gamma}}_{{\wIS_{{G'}}}}(t,q_G,q_\hh)\right] \\
		\E_{G \Unif \RF^{R}_M} \left[QADV^{\osp{\gamma}}_{{\IS_G}}(t,q_G,q_\hh)\right] & =
		\E_{G' \Unif \RF^{R'}_M} \left[QADV^{\osp{\gamma}}_{{\wIS_{{G'}}}}(t,q_G,q_\hh)\right]
	\end{align*}

\section{Stern signature scheme}\label{Appendix:Stern}
\paragraph{Notations for this section.} Matrices are denoted with bold large letters, for eg. $\Mm$ and line vectors will be denoted with bold small letters, for eg. $\vv = (v_1,\dots,v_n).$ The Hamming weight $|\cdot|_H$ for binary vectors is defined as follows: $|\vv|_H = |\{i : v_i = 1 \}|.$ \\

\noindent Stern's signature scheme is one of the first signature schemes based on a {\cao} identification scheme. It is a post-quantum signature scheme based on the hardness of the syndrome decoding problem, which is the canonical hard problem for code-based cryptography.

\begin{problem}[Syndrome Decoding - \textup{SD}$(n,k,w)$]\label{prob:decoGenR}	
	$ \ $ 
	\begin{itemize}		\setlength\itemsep{-0.2em}
		\item \textup{Instance:} a parity-check matrix $\Hm \in \zo^{(n-k)\times n}$ of rank $n-k$, a syndrome $\sv \in \zo^{n-k}$,
		\item \textup{Output:} $\ev\in S_w$ such that $\ev\transpose{\Hm} = \sv$  where $S_w \eqdef \{\ev \in \zo^n : |\ev|_H = w\}.$
	\end{itemize}
\end{problem}
\noindent We also define the syndrome decoding advantage:

\begin{definition}[SD-advantage$(n,k,w)$]
	For any algorithm $\aa$, we define
	$$
	Adv_{(n,k,w)}^{\textup{SD}}(\aa) \eqdef \Pr\left(\ev\transpose{\Hm} = \sv \wedge |\ev| = w \left| \Hm \Unif \FR^{(n-k),n}, \  \sv \Unif \zo^{n-k}, \ \ev \leftarrow \aa(\Hm,\sv) \right.\right), 
	$$
	where $\FR^{(n-k),n} \eqdef \{\Hm \in \zo^{(n-k)\times n} : \Hm \textrm{ has rank } (n-k)\}$ is the set of full rank matrices in $\zo^{(n-k),n}$. For any time $t$, we also define,
	$
	Adv^{\textup{SD}}_{(n,k,w)}(t)  \eqdef \max_{\aa : |\aa| = t} Adv^{\textup{SD}}_{(n,k,w)}(\aa).
	$
\end{definition}
\noindent We can now describe Stern's identification scheme \\ \\
\cadre{
	\begin{center}
		Stern's single round Identification scheme $\IS_{\Stern}(\lambda,G) = (\ISKeygen,\ISProver = (P_1,P_2),\ISCheck,G;M,C = \{1,2,3\},R,n = 3).$
	\end{center}
	\textbf{Initialization. }$\ISKeygen(1^\lambda) : \Hm \Unif \FR^{(n-k),n}, \ev \Unif S_w, \sv \eqdef \ev\transpose{\Hm} \textrm{ return } pk = (\Hm,s), \ sk = e,$ where $n,k,w$ depend on the security parameter $\lambda$. \\ \\
	\textbf{Interaction. }$P_1: \sigma \Unif \RP^{[n]}, \yv \leftarrow \zo^n$. Let $\sv' \eqdef \yv\transpose{\Hm}$.   Let also $z_1 \eqdef (\sigma || \sv')$ ; $z_2 \eqdef \sigma(y)$ ; $z_3 \eqdef \sigma(\yv \oplus \ev).$ Send $(x_1,x_2,x_3) \eqdef \left(G(z_1),G(z_2),G(z_3)\right)$ to the verifier. \\
	$V : c \Unif \{1,2,3\}$, send $c$ to the prover. \\
	$P_2 : $ send $z_{c'}$ for the two values $c'$ different from  $c$. \\ \\
	\textbf{Verification. }$\ISCheck(1,(z_2,z_3)) = 1 $ iff. $|z_2 + z_3|_H = w.$ \\
	$\ISCheck(2,(z_1 \eqdef (\sigma,\sv'),z_3)) = 1 $ iff. $\sigma^{-1}(z_3)\transpose{\Hm} = \sv \oplus \sv'.$ \\
	$\ISCheck(3,(z_1 \eqdef (\sigma,\sv'),z_2)) = 1 $ iff. $\sigma^{-1}(z_2)\transpose{\Hm} = \sv'.$ \\
} $ \ $ \\ \\
One can check completeness. Indeed, in the honest case:
\begin{enumerate} \setlength\itemsep{-0.1em}
	\item $|z_2 + z_3|_H= |\sigma(\yv) + \sigma(\yv \oplus \ev)|_H = |\sigma(e)|_H = w.$
	\item $\sigma^{-1}(\sigma(\yv \oplus \ev))\transpose{\Hm} = \yv \transpose \Hm + \ev \transpose \Hm.$
	\item $\sigma^{-1}(\sigma(\yv))\transpose{\Hm} = \yv \transpose \Hm .$
\end{enumerate}
Moreover, suppose one constructs a triplet $z_1 = (\sigma,\sv'),z_2,z_3$ that passes the $3$ checks. We show how to easily construct a vector $\ev$ such that $\ev \transpose{\Hm} = \sv$ and $|\ev|_H = w$. Indeed, consider the vector $\ev = \sigma^{-1}(z_2 \oplus z_3)$. Using the second and third checks, we have 
$ \ev \transpose{\Hm} = \sigma^{-1}(z_2 \oplus z_3)\transpose{\Hm} = \sv \oplus \sv' \oplus \sv' = \sv.$
Also, $|\ev|_H = |\sigma^{-1}(z_2 \oplus z_3)|_H = |z_2 \oplus z_3|_H = w.$ This means we immediately have
\begin{align}\label{Equation:Last}
	QADV_{\IS_{\Stern}(\lambda,G)}^{3\textrm{-}sp}(t) = QAdv_{(n(\lambda),k(\lambda),w(\lambda))}^{\textup{SD}}(t).\end{align}
The above equality is exactly the kind of relations we need in order to prove the quantum security of the Fiat-Shamir transform of identifications schemes and hence of resulting signature schemes. Using Theorem \ref{Theorem:ParallelRepetition}, we immediately have 

\begin{proposition}[Quantum security of the Fiat-Shamir transform for the parallel repetition of Stern's identifications scheme] \label{Proposition:Stern}
	$$\E_{\substack{\hh \Unif \RF^{M^n}_C \\ G \Unif \RF^{R}_M}} \left[QADV_{\FS{{\IS^{\otimes r}_{\Stern}(\lambda,G)}}}(t,q_\hh,q_G)\right] \le QAdv_{(n(\lambda),k(\lambda),w(\lambda))}^{\textup{SD}}(O(t)) + O\left(\frac{q_{\hh}^2 2^r}{3^r}\right) + O\left(\frac{q_{G}^3}{|M|}\right).$$
\end{proposition}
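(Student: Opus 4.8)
The plan is to obtain Proposition~\ref{Proposition:Stern} as a direct instantiation of Theorem~\ref{Theorem:ParallelRepetition}, taking Stern's single-round scheme $\IS_{\Stern}(\lambda,G)$ as the underlying {\cao} identification scheme and $\gamma = 3$. First I would confirm that $\IS_{\Stern}(\lambda,G)$ genuinely fits the {\cao} template. The first message commits to the $n = 3$ strings $z_1 = (\sigma\|\sv')$, $z_2 = \sigma(\yv)$, $z_3 = \sigma(\yv\oplus\ev)$ through $x_i = G(z_i)$; the challenge space is $C = \{1,2,3\}$; and a challenge $c$ corresponds to revealing the index set $I_c = \{1,2,3\}\setminus\{c\}$, matching the ``reveal the two values different from $c$'' rule. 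The key structural point to verify is that each verification predicate $\ISCheck(c,\cdot)$ depends only on $c$, the public key $pk = (\Hm,\sv)$, and the revealed $z_{I_c}$ (not on $x$), while the opening checks $G(z_i)=x_i$ are performed separately, exactly as the {\cao} definition demands.

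Next I would substitute the constants $|C| = 3$, $n = 3$, $\gamma = 3$ into Theorem~\ref{Theorem:ParallelRepetition}, which under Assumption~\ref{Assumption:QROM2} gives
$$\E_{\substack{\hh \Unif \RF^{M^n}_C \\ G \Unif \RF^{R}_M}}\left[QADV_{\FS{\IS^{\otimes r}_{\Stern}(\lambda,G)}}(t,q_\hh,q_G)\right] \le QADV_{\IS_{\Stern}(\lambda,G)}^{\spp{3}}(t',q_\hh) + O\left(\frac{q_\hh^2\, 2^r}{3^r}\right) + O\left(\frac{(q_G+q_\hh)^3}{|M|}\right),$$
with $t' = O(t) + 3r + 9$. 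The second error term already equals the one in the statement because $(\gamma-1)^r/|C|^r = 2^r/3^r$, and the third matches once we note that $O_n = O$ here since $n = 3$ is a fixed constant.

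Then I would pass from special$+$ soundness to ordinary special soundness via the inequality $QADV_{\IS_{\Stern}(\lambda,G)}^{\spp{3}}(t',q_\hh) \le QADV_{\IS_{\Stern}(\lambda,G)}^{\sp{3}}(t',q_\hh)$ recorded after the definition of $\spp{\gamma}$-soundness, and invoke Equation~\ref{Equation:Last}, namely $QADV_{\IS_{\Stern}(\lambda,G)}^{3\textrm{-}sp}(t') = QAdv^{\textup{SD}}_{(n(\lambda),k(\lambda),w(\lambda))}(t')$, which was established from the explicit extractor $\ev = \sigma^{-1}(z_2\oplus z_3)$. This turns the first term into a syndrome-decoding advantage. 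Finally, since $r$ and $|C| = 3$ are polynomial in the parameters (indeed $r = O(t)$ for any adversary that actually runs the $r$-fold protocol), the running time collapses to $t' = O(t)$, yielding $QAdv^{\textup{SD}}_{(n(\lambda),k(\lambda),w(\lambda))}(O(t))$ and closing the chain.

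Because the statement asserts the result follows ``immediately'' from Theorem~\ref{Theorem:ParallelRepetition}, there is no conceptual obstacle; the effort is purely bookkeeping, and the only points needing care are: (i) verifying the $x$-independence of $\ISCheck$ so that Stern is indeed {\cao}; (ii) justifying that the additive $nr + n|C|$ contributions to $t'$ are absorbed into $O(t)$; and (iii) the harmless cosmetic gap between the theorem's commitment-collision term $O((q_G+q_\hh)^3/|M|)$ and the $O(q_G^3/|M|)$ written in the proposition, which agree whenever $q_\hh = O(q_G)$ and can otherwise simply be stated with $(q_G+q_\hh)^3$. Everything else is a substitution of constants.
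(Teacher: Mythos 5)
Your proposal is correct and takes essentially the same route as the paper: the paper likewise obtains the proposition by instantiating Theorem~\ref{Theorem:ParallelRepetition} with $\gamma = 3$, $|C| = 3$, $n = 3$, passing from special$+$ soundness to $3$-special soundness via $QADV^{\spp{3}} \le QADV^{\sp{3}}$, and then invoking Equation~\ref{Equation:Last} (the extractor $\ev = \sigma^{-1}(z_2 \oplus z_3)$) to turn the $3$-special soundness advantage into the syndrome decoding advantage. Your bookkeeping observations, absorbing the additive $nr + n|C|$ into $O(t)$ and noting the cosmetic $(q_G+q_\hh)^3$ versus $q_G^3$ mismatch in the collision term, are accurate and are left implicit in the paper.
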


From this proposition, we see that we can take $r = \frac{2\lambda}{\log_2(2/3)}$and $|M| = 3\lambda$ to get $\lambda$ bits of quantum security. 

\COMMENT{
\section{Updates with respect to a previous submission}\label{Appendix:reviews}
We present a brief summary of the reviewers comments as well as what has changed in the current submission to answer the questions asked by the referees. We then put below the verbatim of these reviews. 

It seems that the reviewers appreciated the importance of having tight security reductions for signature schemes and of the results presented in this submission. However, there were $2$ main issues that raised concerns:
\begin{enumerate}
	\item A lack of clarity about the non-standard extra computational assumption.
	\item An issue with the construction of the pseudorandom permutation.
\end{enumerate}

\paragraph{1. A lack of clarity about the non-standard extra computational assumption.}
 In the previous version of this submission, Assumption \ref{Assumption:QROM2} was used but was not justified in detail. It was not even stated in the theorem as we considered that it was milder form of Assumption \ref{Assumption:QROM} which is needed for QROM proofs to be useful. We agree that this was quite misleading for a number of reasons: first, it is non-standard to use such assumption in (Q)ROM proofs so it should have been put more forward. Also, the justification of why it can be argued that it is a mild version of Assumption \ref{Assumption:QROM} was not detailed enough (it was just a footnote of a few lines) while it deserved a much thorougher treatment. Finally, it could be possible that commitments cannot be instantiated with small range functions but can be instantiated with explicit function so we should definitely put this as an extra computational assumption and let the scientific community (and time) decide whether this assumption is reasonable or not. 
 
 Our view is that Assumption \ref{Assumption:QROM} should be treated as a computational assumption, next to the assumption that a proof in the QROM is valid for a specific signature scheme {\ie} that the random oracle can be instantiated with good hash functions). We still think that currently, a tight proof in the QROM with this extra assumption gives much more guarantees than a non-tight security proof  which can hide some real weaknesses (we recall again the case of the attack on MQDSS which exploited this non-tightness). 
 
 \paragraph{2. An issue with the construction of the pseudorandom permutation.}
 
 In the previous version of the submission, we used for our pseudorandom permutation $\wp = \FeF(\SHAKE)$. Using a deterministic function as the argument of $\FeF$ created complications as we had to invoke the QROM again in an unnecessary way. We now use a family of pseudorandom functions (namely $\KMAC_K$) instead of $\SHAKE$ so now we don't use the QROM in this part of the proof and what we write are just mathematical statements. 
 
 Another remark regarding the use of the indistinguishability result of Feistel networks, and the comment made by one of the reviewers. What we claim is that if the adversary can break the soundness of $\FS{\IS_\sigma}$ for a random $\sigma$ then the the same adversary can use the same strategy by replacing calls to $\sigma$ with calls to $\wpi_K$ for a random but public $K$ and that this will break the soundness of $\FS{\IS_{\wpi_K}}$. A crucial point is that the first adversary makes only forward black box calls $\sigma$ so when we replace this with $\wpi_K$, we only make forward black box calls to $\wpi_K$ and the breaking criterion ({\ie } the way the advantage is defined) depends only on forward calls of these functions. This is why we can actually use the indistinguishability result from \cite{HI19} and we don't need stronger results. Also, we note that an adversary could do better than making just forward calls to $\wpi_K$ but this is already enough for our claim. 
 
 One last point, this proof is a sequence of mathematical statements that use the indistinguishability result and arrives at a result. If you still believe that there is an issue here, please point out exactly what equation you disagree with so that this can be discussed more precisely.
 
 \subsection*{Verbatim of the reviews}
 \begin{verbatim}
 	The current paper deals with the Fiat-Shamir transformation for
 	commit-and-open identification protocols, a subclass of identification schemes
 	that form the basis of several post-quantum signature schemes. The authors
 	claim a tight QROM reduction to the hardness of the underlying computational
 	problem, improving significantly over previous works. If correct this would be
 	a big step forward, but unfortunately I think there is a problem with the
 	proof.
 	
 	Here is my informal summary of the proof strategy, outlining the problem.
 	
 	For commit-and-open schemes, the underlying computational problem can be
 	solved by obtaining a 'valid' opening for each of the commitments from the
 	prover's first message, so obtaining those will be the goal of the reduction.
 	Valid here means correct with respect to the provided commitments and
 	satisfying some property defined by the verifier. The verifier accepts iff the
 	prover can provide valid openings for a subset of the commitments, determined
 	by a random challenge. The first step in the reduction is to show that a
 	successful adversary against the FS-transformed scheme must either i) produce
 	only commitments for which valid openings exist or ii) solve a random search
 	problem in an exponential search space. In the bounded query model, the
 	adversary will succeed at the latter goal only with negligible probability. We
 	may hence restrict ourselves to the first case. Next, the authors argue that
 	the considered commitment scheme may be treated as a quantum random-oracle,
 	and then attempt to show that replacing the random-oracle by an efficiently
 	invertible pseudorandom permutation (consisting of a 4-round Feistel network
 	with SHAKE-256 as the inner function) does not affect the adversary's
 	advantage. Now the reduction may simply invert the commitments provided by the
 	adversary, to obtain each of the valid openings (since we are in case i, a
 	valid opening exists, and the commitment being a permutation means that
 	inverting it will yield this specific opening.)
 	
 	The problem lies in showing that the random-oracle can be replaced by the
 	Feistel network, supposedly without affecting the success probability of the
 	adversary. To support this claim, the authors refer to a recent result [HI19]
 	on the quantum-*indistinguishability* of 4-round Feistel networks. However,
 	the proof builds the Feistel network using the quantum random-oracle for the
 	inner function. And indeed this is essential, since it is what allows the
 	reduction to invert the Feistel network, for which it needs query access to
 	the internal function. The authors seem to have overlooked that modelling the
 	inner function as a quantum random-oracle means that also the adversary has
 	query access to the internal function (in practice this will also be the case
 	since SHAKE-256 is a public hash function), and therefore what is required is
 	quantum-*indifferentiability* of the Feistel construction. Hence, the cited
 	result does not support the claim of the proof. Even worse, a Feistel network
 	*cannot be* indifferentiable from a random permutation in the case where only
 	forward query access is given to both (which is necessarily so in the FS
 	setting, since otherwise ZK would break down, as the authors note). Namely, if
 	a simulator could consistently simulate an inner function in such a way that
 	it makes a truly random permutation look like a Feistel construction, then it
 	could efficiently invert that random permutation. With only one-way query
 	access, the simulator can succeed at inverting with at most negligible
 	probability over the randomness of the permutation. For this reason, I don't
 	think the proof strategy can be (easily) saved.
 	
 	In light of the above problem, I do not recommend acceptance.
 	
 	[HI19] Akinori Hosoyamada and Tetsu Iwata. 4-round luby-rackoff construction
 	is
 	a qprp. In ASIACRYPT 2019, pages 145–174, 2019.
 	
 	---
 	Some more questions:
 	(1) It's hard to say how reasonable/justified Assumption 1 is. It appears to
 	be dependent on the particular identification protocol. Could it be that this
 	assumption is really hiding all the non-tightness in the proof? For example, I
 	can always make the tautological assumption that the scheme is secure in the
 	QROM, in which case there is always a tight reduction to this assumption.
 	(2) Is there any reason the authors use a PRP based on Feistel + SHAKE256? It
 	seems unnecessarily specific. Couldn't you instead use *any* quantum immune
 	PRP? Say, a PRP from any one-way function (which follows from
 	https://eprint.iacr.org/2016/1076.pdf)
 	
 	========================================================================
 	
 	The post-quantum cryptography platform, i.e. lattice, is the basis to
 	construct cryptographic primitives with the ability to resist quantum attacks.
 	However, it is the real anti-quantum security only when achieving the security
 	under the quantum oracle instead of the classic oracle. Fiat-Shamir transform
 	is an important means for constructing signatures under the random oracle
 	model. Unlike the security of the transformation under the classic model, its
 	security under the quantum oracle model has only been broken until 2019. But
 	the existing reduction is in an asymptotic sense, not tight. This paper
 	achieves the tight reduction of Fiat-Shamir transform, by committing a certain
 	number of strings first, and then doing partial opening according to the
 	challenge. The reasoning in this article is rigorous. However, there also
 	exist some room to further improve.
 	
 	First, there are some spelling and expression errors, such as “lemmata,
 	wil” on page 25. 
 	
 	Second, in assumption 1, the number of queries for compound operations is only
 	the sum of the number of queries for the two fundamental operations. It should
 	take advantage of the parallel nature of quantum queries and should be
 	corrected. Please confirm if this is the case.
 	
 	This article can achieve a tight reduction, which is based on publishing the
 	commitments of a series of strings, followed by opening some commitments
 	according to challenges. The compact reduction achieves efficiency
 	improvement, through supporting the selection of compact parameters. However,
 	it hurts efficiency due to generating the commitments to a series of strings.
 	Therefore, the article should also discuss the actual efficiency improvement
 	under the opposite aspects.
 	
 	========================================================================
 	
 	This paper claims a tight security reduction in the Quantum Random Oracle
 	Model (QROM) for soundness of Fiat-Shamir signatures constructed from "commit
 	then reveal" identification schemes. These are schemes similar to Stern's
 	combinatorial protocol, where the protocol commitment message involves a
 	vector of commitments which when opened together reveal the secret key
 	(witness), and the response opens a subset of the commitments.  The idea is to
 	show QROM  soundness of the original protocol with a QROM commitment function
 	is cannot be much larger than QROM soundness with a pseudorandom permutation
 	(PRP)  commitment function, and use the efficient invertibility of the PRP to 
 	bound the latter soundness based on special soundness of the underlying id
 	protocol. The reduction relies on the 4-round Luby-Rackoff PRP result [HI19]in
 	the QROM from Asiacrypt 2019.
 	
 	On the positive side, 
 	-- the PRP technique idea of this paper is interesting, 
 	-- and it seems to overcome the seemingly inherent loose reduction limitations
 	of the previous results based on rewinding, for the type of protocols in
 	scope.
 	-- it seems to apply to some practical schemes such as PICNIC and MQDSS
 	
 	On the negative side, 
 	-- the claims of the paper are not quite proved unconditionally, as the proof
 	of a main step  (Proposition 6) makes use of a certain unproved assumption
 	(Assumption 1 on p.22), (this assumption is not stated explicitly in the
 	statement of the main theorems, but it should, as there is no proof given for
 	it, only a heuristic footnote).
 	-- the authors do not give explicit details on the implications for practical
 	schemes PICNIC and MQDSS, which would be more interesting than the application
 	to Stern's protocol that is not very practical.
 	
 	
 	Overall, although the results look interesting, as the proof of the results in
 	the paper is not yet complete (see assumption above), I tend to recommend a
 	rejection in the current state.
 	
 	Additional comments/corrections:
 	
 	-- p.1, abstract: "the the" ->  "the
 	-- p.6, In the bound of  Thm 1, shouldn't gamma-special-sound term on RHS be
 	raised to r? If not, isn't it 1/3 for |Ch|=3?  Ah because it's finding a
 	gamma-sp collision wqhich can be used to break  the hard prob.  it's not a
 	soundness. ok. whats the dep on n? is G  a QROM?
 	- p.15,  def.  7:  "x \in VC_gamma^{IS}" -->  "x \in VC_ >=gamma^{IS}" 
 	
 	-- gamma-rigidity and Prop. 4: rigidity is non-vacuuous only for schemes where
 	the commitment message from the prover depends on the sk, such that it is hard
 	to compute such commitments given only the pk, and more generally, it is hard
 	to compute commitments that have more than gamma << |C| challenges answerable.
 	
 	It  would be good to explain the above, and that rigidity contradicts with
 	HVZK and correctness. Namely, since the HVZK requirement on IS is
 	*statistical* up to Stat. dist. eps.  It means that given pk, it is easy to
 	simulate a commitment message within  sd <= eps of the real prover's one. 
 	--> sim x in support(real  prover x) with prob >=  1-eps.  (else the
 	distinguisher that outputs  1 if x in support(real prover x) would have adv >=
 	eps).
 	--> (assuming perfect correctness as in def.), simulator outputs x that breaks
 	|C|-rigidity with prob >= 1-\eps. Therefore, rigidity of the original protocol
 	IS cannot be hard, and one needs to argue indirectly about rigidity of a
 	related protocol that is not ZK.
 	
 	-- please explain the notation O_n in the preliminaries.
 \end{verbatim}

}
\end{document}